\definecolor{darkscarlet}{rgb}{0.34, 0.01, 0.1}
\newcommand{\QMA}[1]{\mathsf{QMA}}
\newcommand{\NP}[1]{\mathsf{NP}}
\newcommand{\Pcomplexity}[1]{\mathsf{P}}
\newcommand\cH{\mathcal H}
\newcommand\cA{\mathcal A}
\newcommand\cB{\mathcal B}
\newcommand\cE{\mathcal E}
\def\E{{\mathcal E}}
\newcommand{\abs}[1]{\left\lvert #1 \right\rvert}
\newcommand\xAND{\mathrm{AND}}
\newtheorem{theorem}{Theorem}[]
\newtheorem{definition}[theorem]{Definition}
\newtheorem{example}[theorem]{Example}
\newtheorem{lemma}[theorem]{Lemma}
\newtheorem{corollary}[theorem]{Corollary}
\newtheorem{problem}[theorem]{Problem}
\newtheorem{claim}[theorem]{Claim}
\newtheorem{remark}[theorem]{Remark}
\title{Quantum k-SAT Related Hypergraph Problems}
\author{{Simon-Luca Kremer}\\
	Department of Computer Science\\
	Paderborn University\\
	\texttt{skremer2@mail.uni-paderborn.de} \\
	\And
	\href{https://orcid.org/0000-0002-2440-7388}{\includegraphics[scale=0.06]{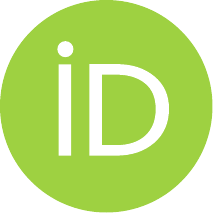}\hspace{1mm}Dorian Rudolph} \\
	Department of Computer Science\\
	Paderborn University\\
	\texttt{dorian.rudolph@uni-paderborn.de} \\
    \And
	\href{https://orcid.org/0000-0002-9992-3379}{\includegraphics[scale=0.06]{orcid.pdf}\hspace{1mm}Sevag Gharibian} \\
	Department of Computer Science\\
	Paderborn University\\
	\texttt{sevag.gharibian@uni-paderborn.de} \\
}
\begin{document}
\maketitle

\begin{abstract}
	The Quantum $k$-SAT problem is the quantum generalization of the $k$-SAT problem. It is the problem whether a given local Hamiltonian is frustration-free. Frustration-free means that the ground state of the $k$-local Hamiltonian minimizes the energy of every local interaction term simultaneously. This is a central question in quantum physics and a canonical $\QMA{}_1$-complete problem.
The Quantum $k$-SAT problem is not as well studied as the classical $k$-SAT problem in terms of special tractable cases, approximation algorithms and parameterized complexity. In this paper, we will give a graph-theoretic study of the Quantum $k$-SAT problem with the structures core and radius. These hypergraph structures are important to solve the Quantum $k$-SAT problem. We can solve a Quantum $k$-SAT instance in polynomial time if the derived hypergraph has a core of size $n-m+a$, where $a$ is a constant, and the radius is at most logarithmic. If it exists, we can find a core of size $n-m+a$ with the best possible radius in polynomial time, whereas finding a general minimum core with minimal radius is $\NP{}$-hard.
\end{abstract}

\bigskip

\section{Introduction}
\label{sec:introduction}
The Boolean satisfiability problem (SAT) was proved to be  $\NP{}$-complete by Cook in 1971 and independently by Levin in 1973 \cite{cook1971, levin1973}. The SAT problem is to determine whether a given Boolean formula has a satisfying assignment. In $k$-SAT, the Boolean formula is in conjunctive normal form and every clause contains exactly $k$ variables. For $k \geq 3$, $k$-SAT is $\NP{}$-complete \cite{cook1971}. The quantum generalization of $k$-SAT is Quantum $k$-SAT ($k$-QSAT). The input in $k$-QSAT is a $k$-local Hamiltonian, which is a Hermitian matrix acting on $k \leq n$ qubits and represented as a sum of $m$ local interaction terms $H = \sum_{i=1}^{m}H_i$. In this paper, we require $H_i$ to be a rank 1 projector for all $1 \leq i \leq m$. Higher ranks can be simulated through duplication of the same local interaction term. The question is whether there exists an assignment $\ket{\psi} \in (\mathbb {C}^2)^{\otimes n}$ to the qubits such that the energy of every local interaction term $H_i$ is minimized simultaneously. The energy of a local interaction term $H_i$ is minimized if $\ket{\psi}$ is in the eigenspace of the smallest eigenvalue. 

In the following, we represent a $k$-local Hamiltonian by a hypergraph $H = (V, E)$ with a set $V$ (vertices) and a family $E$ (edges) of subsets of $V$. Each qubit of the local Hamiltonian is represented as a vertex and each local interaction term $H_i$ is represented as an edge. Each edge contains the vertices representing the qubits $H_i$ is acting on. Laumann et al. have shown that there is a product-state solution $\ket{\psi_1} \otimes \cdots \otimes \ket{\psi_n} \in (\mathbb{C}^2)^{\otimes n}$ for a given $k$-local Hamiltonian if the derived hypergraph has a system of distinct representatives (SDR) \cite{Laumann2010}, but the problem of finding such a product-state solution in polynomial time is still open. The problem of finding a product-state solution to a $k$-QSAT instance is called $k$-PRODSAT \cite{unpublished2024}. An SDR in a hypergraph is a set of vertices $V' \subseteq V$ such that each edge $e \in E$ is paired with a distinct vertex $v_e \in V'$ such that $v_e \in e$ \cite{Aldi2021}.

\subsection{Motivation}
Aldi, de Beaudrup, Gharibian and Saeedi introduced a parameterized algorithm for solving the Quantum $k$-SAT problem \cite{Aldi2021}. The runtime of this algorithm is exponential in the transfer filtration, radius and locality, but polynomial in the number of vertices and edges. The transfer filtration is a subset of $V$ so that a solution for this subset can be extended to a solution for $V$ (see Definition \ref{def:transfer_filtration} for a formal definition). The radius is reminiscent of the diameter of the hypergraph and specifies the runtime of this extension (see Definition \ref{def:radius:aldi} for a formal definition). Thus, having an at most logarithmic size transfer filtration and radius, the algorithm will have polynomial runtime on a $k$-local Hamiltonian instance if $k$ is at most logarithmic. Additionally, the transfer filtration must retain a concrete type of $n-m+1$, where $n$ is the number of vertices and $m$ is the number of edges.  
This algorithm was extended in \cite{unpublished2024} to the transfer filtration of type $n-m+k-1$. An $\epsilon$ error occurs due to normalization in both algorithms. This can be handled by working with exact representation of algebraic numbers \cite{unpublished2024}. In both algorithms the transfer filtration is a crucial prerequisite for finding a product-state solutions for $k$-PRODSAT instances. Aldi et al. raised the question of whether a transfer filtration with sufficiently small radius can be computed efficiently \cite{Aldi2021}. As far as we know, the transfer filtration and the radius are not well studied. It is not known how hard it is to find a transfer filtration with sufficiently small radius. Therefore, we want to analyze these structures by giving new hardness of approximation results, fixed-parameter tractable (FPT) algorithms and tractable cases. 
\subsection{Our Contributions}
\textbf{Results and techniques.}
First, we introduce the core, which contains the important properties of the transfer filtration and is easier to handle for our purposes. The formal equivalence of core and transfer filtration is shown in Lemma \ref{lemma:core_eq_transfer-filtration}. The first main result is an FPT algorithm. Given a hypergraph, it will output a minimum core with minimal radius. The runtime is polynomial in the number of vertices and edges and exponential in the parameter $a$,  where $a$ is defined as the difference between the minimum core size of the hypergraph and $n-m$. This makes the algorithm fixed-parameter tractable (FPT) with respect to the parameter $a$, meaning that while the exponential part of the runtime depends on $a$, the rest scales only polynomially with the size of the input. Thus, it has polynomial runtime if a core of size $n-m+a$, where $a$ is a constant, exists. In particular, we can find a core of size $n-m+1$ with an at most logarithmic size radius if it exists, which answers an open problem from \cite{Aldi2021}. A first version of the algorithm was developed by Kremer in a bachelor thesis \cite{kre24}. Based on this, the algorithm has already been used in \cite{unpublished2024} to solve $k$-PRODSAT instances, when the underlying hypergraph has a SDR.  
Furthermore, we give an extension of their algorithms for solving PRODSAT instances. We use graph-theoretic ideas to drop the $k$-uniformity constraint and the core only needs to be of size $n-m+a$, where $a$ is a constant. 
The second main result will be a hardness of approximation preserving reduction for finding a minimum core on general hypergraphs. It shows that the FPT algorithm cannot be improved, such that it has a polynomial runtime on general hypergraphs.
The third main result focuses on the radius. We give an $\NP{}$-hardness proof for deciding whether a sufficiently small radius exists.
\vspace{0.5cm}

\noindent\textbf{Precise statements.}
We now give the precise statements for the three main results.
\begin{theorem}
    There exists a FPT algorithm which, given a hypergraph, outputs a minimum core with minimal radius with runtime $O(m^{a+1}\cdot n)$, where $n-m+a$ is the minimum core size.
\end{theorem}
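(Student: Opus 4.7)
The plan is to find a minimum core via an enumerate-and-complete strategy whose combinatorial explosion is confined to the parameter $a$. By Lemma \ref{lemma:core_eq_transfer-filtration}, cores and transfer filtrations are interchangeable, so it suffices to output a minimum core of size $n-m+a$ and, among all such cores, one of smallest radius.

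Algorithmically, I would iterate $a' = 0, 1, 2, \ldots$ and, for each value, enumerate all $O(m^{a'+1})$ seeds consisting of $a'+1$ edges. For each seed $S$, a BFS-style completion propagates forced choices of representatives for the remaining edges in $O(n)$ time, returning either a valid core of size $n-m+a'$ containing $S$ or a declaration that $S$ is infeasible. The smallest $a'$ for which at least one seed succeeds is, by definition, the true slack $a$; the same BFS pass also records the depth at which each vertex first enters, so the radius is computed alongside the core. The algorithm then outputs a core minimizing this radius over all successful seeds for that $a$. The total runtime is $O(m^{a+1} \cdot n)$, establishing the FPT bound.

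Correctness hinges on a structural lemma: every minimum core is produced by the completion procedure from at least one enumerated seed, and for every minimum core of minimum radius, at least one such seed reconstructs a core of equal size and equal radius. Soundness of the completion (only valid cores are produced) and the $O(n)$ per-iteration bound should follow routinely from the definition of core and a standard BFS analysis; the tricky part is completeness together with radius preservation.

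I expect this completeness claim to be the main obstacle. My plan is to fix an arbitrary minimum core $C^*$ of minimum radius, identify a distinguished family of $a+1$ edges inside $C^*$ whose representatives are not forced by the rest of the hypergraph, take these edges as the seed $S$, and verify -- by induction on BFS layers around $S$ and an exchange argument whenever the completion faces equally good choices -- that the greedy completion reconstructs a core of the same size and the same radius as $C^*$. The delicate step is that the exchange argument must simultaneously respect the size constraint (so we stay at $n-m+a$) and the layer-by-layer radius bound, meaning each local swap has to be shown not to push a vertex into a deeper BFS layer than in $C^*$.
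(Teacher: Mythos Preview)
Your proposal has genuine gaps that would need to be closed before it constitutes a proof.

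\textbf{The completion procedure is unspecified.} You write that a ``BFS-style completion propagates forced choices of representatives for the remaining edges in $O(n)$ time,'' but you never say what this procedure actually does. The paper's algorithm rests on a concrete structural fact you do not mention: a hypergraph with a core of size exactly $n-m$ must contain a vertex of degree one (Lemma~\ref{lemma:edge-oder_degree-one-vertex}). This is what makes a deterministic peeling procedure possible---repeatedly strip edges that currently have a degree-one vertex, removing that vertex from the tentative core. Without this lemma (or a replacement), there is no reason a greedy/BFS completion should ever have a ``forced'' choice, and your $O(n)$ bound is unsubstantiated.

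\textbf{The enumeration is off by one and its role is unclear.} In a core of size $n-m+a$, exactly $m-a$ edges are extending (each assimilates one new vertex) and exactly $a$ edges are non-extending. The paper therefore enumerates all $\binom{m}{a}$ choices of the $a$ non-extending edges, deletes them, and runs the degree-one peeling on the reduced hypergraph (which now admits a core of size $n-(m-a)$). Your seeds of $a'+1$ edges, together with the phrase ``core containing $S$'' (where $S$ is a set of edges, not vertices), suggest you have not pinned down what is being enumerated or why.

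\textbf{Radius optimality is not a routine BFS byproduct.} You say the BFS pass ``records the depth at which each vertex first enters, so the radius is computed alongside the core.'' That gives the radius of the particular core you found, not the minimum radius over all minimum cores. The paper's Algorithm~\ref{alg:core_with_minimal_radius} is carefully designed so that its layer decomposition is provably optimal (Claim~3 in the proof), via an induction showing that any competing layering must nest inside the one produced by peeling. Your exchange argument ``whenever the completion faces equally good choices'' is precisely the part that requires this kind of structural control, and you explicitly flag it as unresolved.

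In short, the high-level shape (iterate on $a$, enumerate $O(m^a)$ objects, run a linear-time subroutine) matches the paper, but the subroutine itself and both correctness claims (minimum size and minimum radius) depend on the degree-one-vertex insight that your proposal is missing.
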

\noindent First, we give a polynomial time algorithm that outputs a minimum core with minimal radius if and only if a core of size $n-m$ is possible. The degree one vertices play a crucial role in this algorithm. First, we show that a hypergraph with a core of size $n-m$ must have a degree one vertex. Having a polynomial time algorithm that finds a core of size $n-m$, we generalize it to the core size $n-m+a$, where $a$ is a constant, by testing combinations of the additional $a$ core vertices. We get the minimum radius with a round-by-round approach. In each round we process the edges, which have a degree one vertex. Then we delete them, such that we get new edges with a degree one vertex for the next round. 
\noindent For the second main result, we will call the problem of finding a minimum core MinCore.
\begin{theorem}
    Given a hypergraph $H$, MinCore cannot be approximated within a ratio $(1-o(1)) \cdot \log n$ unless $\NP{}$ = $\Pcomplexity{}$, where $n = |V(H)|$ and $m \in poly(n)$.
\end{theorem}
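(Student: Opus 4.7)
The plan is to give an approximation-preserving reduction from Set Cover to MinCore. By the Dinur--Steurer theorem, Set Cover on a universe of size $N$ admits no polynomial-time $(1-o(1))\log N$-approximation unless $\NP{}=\Pcomplexity{}$. Since our reduction will produce a hypergraph $H$ with $|V(H)|=n\in\mathrm{poly}(N)$, we have $\log n=\Theta(\log N)$, so a $(1-o(1))\log n$ approximation for MinCore would yield a $(1-o(1))\log N$ approximation for Set Cover, a contradiction.

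Given an instance $(U,\mathcal{S})$ with $U=\{u_1,\ldots,u_N\}$ and $\mathcal{S}=\{S_1,\ldots,S_M\}$, the construction proceeds in two steps. First, I build a \emph{backbone} hypergraph in which each set $S_j$ becomes a vertex $v_j$ and each element $u_i$ becomes an edge $e_{u_i}=\{v_j:u_i\in S_j\}$; covering $U$ is then translated into hitting every element-edge. Second, I attach an \emph{anchor gadget} of auxiliary vertices and edges whose role is to calibrate the core-defining conditions so that the minimum core of the combined hypergraph has size $\mathrm{OPT}_{\mathrm{SC}}(U,\mathcal{S})+c$ for a small constant $c$. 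The gadget is designed so that, independently of the backbone choices, the global SDR-like constraints implicit in Lemma \ref{lemma:core_eq_transfer-filtration} are already satisfied on the gadget side by a fixed set of vertices, leaving the only remaining degree of freedom to be: which backbone vertices $v_j$ to include in order to meet every $e_{u_i}$. A standard padding of the Set Cover instance to make $\mathrm{OPT}_{\mathrm{SC}}$ super-constant then absorbs the additive $c$ into the $o(1)$ term.

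With the construction in hand, the verification is routine. Any set cover $\mathcal{C}\subseteq\mathcal{S}$ of size $k$ yields a core of size $k+c$ by taking $\{v_j:S_j\in\mathcal{C}\}$ together with the gadget vertices; conversely, any core $C\subseteq V(H)$ projects to a family $\{S_j:v_j\in C\}$ which must cover $U$, since otherwise some element-edge $e_{u_i}$ would avoid $C$, violating the condition that a core meet every backbone edge. The main obstacle will be the design of the anchor gadget: because the core condition is global in nature, the gadget and backbone cannot be analysed in isolation, and one must rule out ``cheap'' cores that exploit gadget--backbone interactions to bypass the set cover structure. Once this separation is established, the approximation ratio transfers verbatim and the claimed inapproximability follows.
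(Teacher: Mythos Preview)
Your proposal contains a fundamental misunderstanding of the core definition, and as a result the backbone does not encode Set Cover. You write that if some element-edge $e_{u_i}$ ``would avoid $C$'', this would ``violat[e] the condition that a core meet every backbone edge''. There is no such condition. A core is \emph{not} a hitting set: it is a seed set from which the propagation rule (an edge with all but one vertex already assimilated activates the remaining vertex) eventually reaches every vertex. In your backbone, the vertices are the sets $v_1,\dots,v_M$ and each element $u_i$ gives the edge $e_{u_i}=\{v_j:u_i\in S_j\}$. Under the actual propagation rule, a subset $C\subseteq\{v_1,\dots,v_M\}$ is a core of this backbone precisely when repeatedly applying ``if $|e_{u_i}\setminus C|=1$ then add that last vertex'' eventually yields all $v_j$. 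This has nothing to do with covering $U$: a single element $u_i$ contained in every set already forces $|e_{u_i}|=M$, so any $C$ of size $M-1$ is a core regardless of whether $C$ covers $U$; conversely, instances with large optimal covers can have tiny cores. Thus the ``backbone'' step does not transfer Set Cover structure at all.

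Because the backbone fails, the entire burden falls on the anchor gadget, which you do not construct; you explicitly flag its design as ``the main obstacle'' and leave it open. That is the proof, not a detail. The paper's reduction is quite different in spirit: it introduces \emph{two} copies $s_i,s'_i$ of each set and \emph{two} copies $u_j,u'_j$ of each element, with size-two edges $\{s_i,s'_i\}$, size-three edges $\{s_i,s'_i,u_j\}$ for $u_j\in s_i$, and large ``back-edges'' $V_2\cup\{s_i\}$. The duplication is the key mechanism: putting a single $s_i$ in the core lets $\{s_i,s'_i\}$ assimilate $s'_i$, after which the size-three edges propagate to exactly those $u_j,u'_j$ with $u_j\in s_i$; the back-edges then force \emph{all} of $V_2$ to be assimilated before any uncovered $s_i$ can be reached, so a core restricted to $V_1$ corresponds bijectively to a set cover. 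This yields $\textup{OPT}(I)=\textup{OPT}(I')$ exactly (no additive constant, no padding), giving a clean $L$-reduction with $\alpha=\beta=1$.
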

\noindent We get this result with a hardness of approximation preserving reduction from Set Cover. In Set Cover, the input is a universe $U$ and a set $S$ of subsets from the universe. The problem is to find a minimum subset from $S$, such that every element from $U$ is in at least one of the chosen subsets. In the reduction we take every element from $U$ and every element from $S$ as a vertex. Then we use hyperedges to connect them, such that the minimum core corresponds to a minimum set cover. This theorem also holds for 3-uniform hypergraphs with additional technical steps (see Appendix \ref{app:np-hardness}). 
For the third main result we call the problem, whether there exists a minimum core with a radius smaller than a given number MinCore$_{\text{radius}}$.
\begin{theorem}
    3-SAT is polynomial time reducible to MinCore$_{\text{radius}}$.
\end{theorem}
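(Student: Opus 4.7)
The plan is to give a polynomial-time many-one reduction from 3-SAT. Given $\phi$ with $n$ variables $x_1,\ldots,x_n$ and $m$ clauses $C_1,\ldots,C_m$, I would construct a hypergraph $H_\phi$ together with an integer threshold $r$ such that $H_\phi$ has a minimum core of radius at most $r$ if and only if $\phi$ is satisfiable. The construction is built from two gadget families joined through shared literal vertices.

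For each variable $x_i$ I would introduce two distinguished vertices $v_i^{+}, v_i^{-}$ representing the two literals, plus a small bundle of auxiliary edges designed so that (i) every minimum core of $H_\phi$ contains exactly one of $\{v_i^{+}, v_i^{-}\}$, and (ii) the two choices contribute identically to the total core size. Choosing $v_i^{+}$ encodes $x_i = \text{true}$ and $v_i^{-}$ encodes $x_i = \text{false}$. For each clause $C_j$ I would attach a clause gadget anchored at the three vertices corresponding to the literals appearing in $C_j$. Exploiting the round-by-round view of radius implicit in the first main result---where each round peels off hyperedges that already contain a degree-one vertex---the gadget would be a chain of hyperedges of length $L = \Theta(\log(nm))$ that collapses in $O(1)$ rounds as soon as one of its three anchor vertices lies in the core, but otherwise requires all $L$ rounds to be processed.

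Setting the threshold $r$ to a small constant (or any value strictly below $L$), the condition ``radius $\le r$'' becomes equivalent to every clause gadget being resolved early, which in turn is equivalent to the assignment encoded by the literal choices satisfying every clause of $\phi$. The forward direction turns a satisfying assignment into a minimum core of radius $\le r$; the reverse direction reads off an assignment from the literal vertices of any small-radius minimum core and uses the clause-gadget design to conclude that every clause must be satisfied.

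The principal obstacle is that MinCore$_{\text{radius}}$ compares only \emph{minimum} cores, so the construction must ensure the minimum core size of $H_\phi$ is the \emph{same} integer regardless of the assignment encoded, including unsatisfying ones; otherwise an unsatisfying assignment might not even correspond to a minimum core and would be silently excluded. I expect this will require the variable gadget to be strictly symmetric under $v_i^{+} \leftrightarrow v_i^{-}$ and the clause gadget to offer no core-size savings when resolved early---to be handled by padding with dummy degree-one vertices and appealing to the core-vs-transfer-filtration equivalence of Lemma \ref{lemma:core_eq_transfer-filtration}. Once this symmetry is secured, the radius equivalence follows from a direct round-count along the clause-gadget chains.
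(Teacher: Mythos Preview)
Your plan differs substantially from the paper's, and the difference matters because the paper's route sidesteps precisely the obstacle you flag as ``principal''. The paper does \emph{not} use variable gadgets. Instead it encodes the assignment clause-by-clause: each clause $C_i$ gets a self-contained $9$-vertex gadget that forces exactly one core vertex (one of its literal vertices $l_{i,p}$), so the minimum core size is $m$ regardless of satisfiability. Consistency is then enforced by a pairwise construction: for each pair $C_i,C_j$ a fresh vertex $y_{i,j}$ is joined by a $3$-edge $\{w'_{i,p},w'_{j,q},y_{i,j}\}$ for every \emph{non-complementary} pair of literals. Hence $y_{i,j}$ can reach layer $3$ iff the literals chosen in $C_i$ and $C_j$ are not complementary, so radius $\le 4$ (extended to any $k\ge 4$ by a tail chain) is achievable iff one can pick one literal per clause with no complementary pair, i.e.\ iff $\varphi$ is satisfiable. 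Crucially, the $y_{i,j}$ are structural dead ends: placing one in the core cannot help cover any clause gadget, so minimum cores really do consist of one literal per clause.

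Your sketch has a genuine gap at exactly this point, and the ``padding with dummy degree-one vertices'' remark does not close it. A clause chain that collapses in $O(1)$ rounds once some anchor lies in the core must, in any natural realisation, contain short edges from chain vertices to the anchor literals; but propagation along such edges is bidirectional. Placing a single internal chain vertex in the core then assimilates all three anchor literals in round $1$, which cascades through their variable gadgets and covers up to three variables at the cost of one core vertex---driving the minimum core size strictly below $n$ and severing the intended bijection between minimum cores and assignments. To rescue your approach you would need a clause gadget that is effectively one-directional (anchors can trigger it, but none of its internal vertices can profitably sit in a minimum core) while still exhibiting the fast/slow radius dichotomy; nothing in the proposal supplies this, and it is exactly the design problem the paper solves by its dead-end $y_{i,j}$ vertices and per-clause encoding.
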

\noindent In the reduction we introduce a clause gadget representing a clause in the Boolean formula in 3-SAT. Then we connect the clause gadgets in a way that the radius of a minimum core gets smaller than the given number only if the Boolean formula is satisfiable.

\subsection{Related work}
The Quantum $k$-SAT problem was introduced in 2006 by Bravyi \cite{Bravyi2006}. The problem is for $k \geq 3$ $\QMA{}_1$-complete \cite{Bravyi2006, Gosset2013}, where the complexity class $\QMA{}_1$ is $\QMA{}$ with one-sided error, so a ``yes'' instance will be accepted with probability 1. The complexity class $\QMA{}$ is the quantum analog of the complexity class $\NP{}$. A language $L$ is in $\QMA{}$ if there is a quantum verifier with a quantum proof so that for every $x \in$ $L$ the verifier accepts with high probability and for $x$ $\notin$ $L$ the verifier rejects with high probability. For $k=2$, Bravyi gave a quartic time algorithm in 2006 \cite{Bravyi2006}. Arad, Santha, Sundaram, Zhang and independently de Beaudrap, Gharibian gave a linear time algorithm in 2016 \cite{Beaudrap2016, Arad2016}.

The start of our work is the parameterized algorithm in \cite{Aldi2021} that finds a product-state solution for Quantum $k$-SAT instances with SDR and the extension in \cite{unpublished2024}. Both algorithms need a transfer filtration of a specific size with a sufficiently small radius. It is an open question from \cite{Aldi2021} whether a transfer filtration of a specific size with a sufficiently small radius is efficiently computable, because it is a crucial input for their parameterized algorithm \cite{Aldi2021}. We introduce the core to analyze this question. A core is a set of initially active vertices, such that every vertex gets activated. An inactive vertex can get activated if the vertex is in an edge, where every other vertex is already activated. There are several well-known graph-theoretic problems, where a minimum seed set of active vertices is searched, such that every vertex in the graph gets active. If we only allow one round of activation, we get problems such as the Dominating Set and the Vertex Cover problem \cite{ALLAN197873, Dinur2004}. Both problems can be generalized to hypergraph problems \cite{Acharya2007, Guruswami2020}. The problems differ in the definition of when an inactive vertex gets activated. 

Perhaps most relevant to this work is the problem Target Set Selection (TSS). In a TSS instance, we get an undirected graph $G = (V,E)$ and a threshold function $t: V \rightarrow \mathbb{Z}$, a vertex $v$ gets activated if at least $t(v)$ neighbors are activated. TSS is the problem of computing a smallest set of initially activated vertices, such that every vertex gets activated. This problem is $\NP{}$-hard for a constant $k \geq 3$ threshold \cite{dreyer2009}. Polynomial algorithms for tractable cases e.g. linear time algorithm for diameter-one graphs and several parameterized algorithms e.g. parameterized by the Vertex Cover number are known \cite{nichterlein2010, Charikar2016}. Applications for this abstract problem include infections, opinions or information, which spread through a network. The problem MinCore, which we analyze in this paper, has similarities to the idea of TSS because in both problems a smallest seed set is searched so that the whole graph gets active. A main difference is that the MinCore problem operates on a hypergraph and TSS on a classical graph or in other words TSS works on a hypergraph with size two edges. Another difference is how the threshold function is defined. A generalization of TSS on hypergraphs is mentioned in 2021 by Antelmi, Cordasco, Spagnuolo and Szufel \cite{antelmi2021}. They introduced three greedy algorithms for TSS on hypergraphs (TSSH) and tested them on real-world networks. In TSSH there is a threshold function $t_v$ for the vertices and $t_e$ for the edges. An edge $e$ gets active if the number of active vertices in $e$ reaches $t_e(e)$ and a vertex $v$ gets active if the number of active edges $v$ is in reaches $t_v(v)$. TSSH is the problem of computing the smallest seed set of active vertices such that every vertex gets active. If we set $t_e(e) = |e| - 1$ and $t_v(v) = 1$ it is the MinCore problem. Therefore, MinCore is a special case of TSSH and we can consider MinCore also as a problem in the domain of social network analysis. This gives another motivation for analyzing the problem MinCore. For example, we can consider vertices as persons and edges as social groups. A person adopts an opinion if the person is in a social group, where every other person already has this opinion. The problem MinCore would be to find a smallest set of persons, which will convince the whole network. Regarding the analysis of the radius, there is a variant of TSS called TSS-time. TSS-time is the problem of finding a seed set, such that the seed set needs as many rounds as possible to activate all vertices \cite{keiler2020}. However, we would like to do exactly the opposite and find a radius as small as possible.
 
This paper opens several questions for future work. From a structural perspective, the complexity of MinCore on restricted classes of hypergraphs, e.g. planar, acyclic or low-degree hypergraphs remains mostly open. We give first results in Appendix \ref{app:parameters}, which show that the core and radius are highly influenced by the structure of the hypergraph. Furthermore, through new FPT algorithm, e.g. parameterized by the tree-width, more MinCore instances could become solvable. In future work, the problem can also be approached more from the perspective of social network analysis. Modifying the threshold function opens up many new problems that could be used for analyzing social group dynamics. A first analysis of this idea with a graph theoretic study can be found in Appendix \ref{app:graph-theory}.
\clearpage
\section{Preliminaries}
In the following, we give the formal definitions for the mentioned structures. These definitions are based on hypergraphs, so we first formally define the hypergraph.

\begin{definition}[Hypergraph \cite{Aldi2021}]
A \emph{hypergraph} is a pair $H = (V, E)$ of a set $V$ (Vertices), and a family $E$ (edges) of subsets of $V$.  We say $H$ is $k$-uniform if all edges have size $k$. Furthermore, we define for a given hypergraph $n = |V|$ and $m = |E|$.
\end{definition}

\noindent The start of our work was given by the problem of finding a transfer filtration with sufficiently small radius in order to solve Quantum $k$-SAT instances with the parameterized algorithm in \cite{Aldi2021}. Now, we will formally define the transfer filtration and the radius of the transfer filtration.

\begin{definition}[Transfer filtration \cite{Aldi2021}]
\label{def:transfer_filtration}
A hypergraph $H=(V,E)$ is of \emph{transfer type $b$} if there exists a chain of subhypergraphs (denoted a \emph{transfer filtration of type $b$})  $H_0\subseteq H_1\subseteq \cdots \subseteq H_m=H$ and an ordering of the edges $E(H)=\{E_1,\dots,E_m\}$ such that
\begin{enumerate}
\item $E(H_i)=\{E_1,\ldots,E_i\}$ for each $i\in \{0,\ldots,m\}$,
\item $|V(H_i)|\le |V(H_{i-1})|+1$ for each $i\in \{1,\ldots,m\}$,
\item if $|V(H_i)|= |V(H_{i-1})|+1$, then $V(H_i)\setminus V(H_{i-1}) \subseteq E_i$,
\item $|V(H_0)|=b$, where we call $V(H_0)$ the \emph{foundation},
\item and each edge of $H$ has at least one vertex not in $V(H_0)$.
\end{enumerate}
\end{definition}

\begin{definition}[Radius of transfer filtration \cite{Aldi2021}]
\label{def:radius:aldi}
Let $H$ be a hypergraph admitting a transfer filtration $H_0\subseteq \cdots \subseteq H_m=H$ of type $b$. Consider the function (whose existence is guaranteed according to Remark 16 in \cite{Aldi2021}) $r:\{0,\ldots,m\}\to \{0,\ldots,m-1\}$ such that $r(0)=0$ and $r(i)$ is the smallest integer such that $|E_i\setminus V(H_{r(i)})|= 1$ for all $i\in\{1,\ldots,m\}$. The {\textit radius of the transfer filtration $H_0\subseteq \cdots \subseteq H_m=H$ of type $b$} is the smallest integer $\beta$ such that $r^\beta(i)=0$ $\forall i\in \{1,\ldots,m\}$ ($r^\beta$ denotes the composition of $r$ with itself $\beta$ times).
\end{definition}

\noindent Now, we introduce the core. It is an equivalent definition for the transfer filtration (the equivalence is formally shown in Lemma \ref{lemma:core_eq_transfer-filtration}), but for our purposes, the core will be far easier to handle than the transfer filtration.

\begin{definition}[Core]
\label{def:core}
    Let $H=(V, E)$ be a hypergraph. A \emph{core} $C\subseteq V$ is any subset causing the following program to output $1$:
    \begin{enumerate}
    \item Remove all edges $e\in E$ such that $e\subseteq C$.
        \item While $E$ is non-empty:
        \begin{enumerate}
            \item If there exists $e\in E$ such that $|e\cap C| = |e|-1$, set $C:=C\cup e$.
            \item Else, return $0$.
            \item Remove all edges $e\in E$ such that $e\cap C=e$.
        \end{enumerate}
        \item Return $1$.
    \end{enumerate}
     Above, any edge, which is removed, is said to have been \emph{covered}. Any vertex added in step $2(a)$ is said to be \emph{assimilated}. We say that an edge is extending if it gets deleted in step $2(a)$ and non-extending if it gets deleted in step $2(c)$. In the following, we refer to this program with the name \emph{propagation algorithm}.
\end{definition}

\begin{definition}[Radius of the core]\label{def:radius}
  Let $G=(V,E)$ be a hypergraph with core $C$. We define the \emph{radius} $r(C)$ of $C$ as the minimum $r$, such that $E$ can be subdivided into $r$ disjoint layers $L_1,\dots,L_r$ so that $|e \cap V_{i-1}| = |e| - 1$ for all $e\in L_i$, where  $V_i := C\cup \bigcup_{j=1}^{i}V(L_j)$. We say that a vertex $v$ is in layer $i$ if $v \in V_i/V_{i-1}$.
\end{definition}



\noindent Now, we show that the definition of the transfer filtration and the core are equivalent if they are minimal.

\begin{lemma}
\label{lemma:core_eq_transfer-filtration}
Let $H$ be a hypergraph. $H$ has a transfer filtration $H_0\subseteq H_1\subseteq \cdots \subseteq H_m=H$ of minimum type $b$ with minimal radius $k$ if and only if $H$ has a minimum core $C$ of size $b$ with minimal radius $k$.
\end{lemma}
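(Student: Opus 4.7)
The plan is to establish a direct correspondence: the foundation $V(H_0)$ of a transfer filtration plays the role of the core $C$, and the sequence $E_1, E_2, \ldots, E_m$ of filtration edges plays the role of the order in which edges are processed by the propagation algorithm of Definition \ref{def:core}. I would prove both directions of the biconditional through this correspondence.

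For the direction from filtration to core, given a transfer filtration of type $b$, set $C := V(H_0)$ and run the propagation algorithm using the filtration's edge order. Condition 5 ensures no edge lies in $V(H_0)$, so step 1 removes nothing. Each extending filtration step, where $|V(H_i)| = |V(H_{i-1})| + 1$, satisfies $|E_i \cap V(H_{i-1})| = |E_i| - 1$ by condition 3, so step 2(a) assimilates the new vertex via $E_i$; each non-extending step has $E_i \subseteq V(H_{i-1})$, so $E_i$ is removed in step 2(c). Hence $V(H_0)$ is a core of size $b$.

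For the converse, given a minimum core $C$ of size $b$, I would first show by a swap argument that no edge of $H$ lies entirely in $C$: if $e \subseteq C$ and $v \in e$, then $C \setminus \{v\}$ is still a core, because $e$ now satisfies $|e \cap (C \setminus \{v\})| = |e| - 1$ and can assimilate $v$, after which the propagation proceeds identically to the original one. This contradicts the minimality of $C$. With this in hand, set $V(H_0) := C$, $E(H_0) := \emptyset$, let $E_1, \ldots, E_m$ be the edges in the order they are processed one at a time by the propagation algorithm, and let $V(H_i)$ be the state of $C$ immediately after $E_i$ is processed; conditions 1--5 of Definition \ref{def:transfer_filtration} are then immediate from the propagation semantics. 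Equality of the minimum sizes follows, since a strictly smaller object on one side would yield, by the same construction, a strictly smaller object on the other side, contradicting minimality.

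For the radii, I would identify the core's layers $L_\ell$ from Definition \ref{def:radius} with the sets $\{E_i : r^\ell(i) = 0 \text{ and } r^{\ell-1}(i) \neq 0\}$ from Definition \ref{def:radius:aldi}, arguing inductively that $V(H_{r(i)}) \subseteq V_{\ell - 1}$ for any edge $E_i$ placed in layer $L_\ell$. Conversely, given a layering of a core, I would reorder the filtration so that layer-$\ell$ edges are processed contiguously, after which $r^\ell$ vanishes on $L_\ell$. The main obstacle is handling non-extending edges, which necessarily exist whenever $b > n - m$ (a counting argument gives exactly $m - (n - b)$ of them) and do not literally satisfy $|e \cap V_{i-1}| = |e| - 1$ for any $i$ once all their vertices have been assimilated. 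I would resolve this by assigning each non-extending edge to the latest layer at which this equation held, matching the filtration's $r$-value on that edge, and then verify that this assignment preserves both the partition property and the minimality of the radius on both sides.
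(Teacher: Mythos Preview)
Your proposal is correct and follows essentially the same route as the paper: identify the foundation $V(H_0)$ with the core $C$, use the propagation/filtration edge order to pass between the two structures, and match the layers $L_\ell$ of Definition~\ref{def:radius} with the level sets of the iterated $r$-function of Definition~\ref{def:radius:aldi}. Your explicit swap argument showing that no edge lies entirely inside a minimum core, and your flagging of the non-extending-edge subtlety in the radius correspondence, are in fact more careful than the paper's own proof, which simply asserts the former (``since the core is minimal\dots'') and glosses over the latter with ``we can use the same arguments as above.''
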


\begin{proof}
    First, we prove that $H$ has a transfer filtration $H_0\subseteq H_1\subseteq \cdots \subseteq H_m=H$ of type $b$ if and only if $H$ has a core $C$ of size $b$. Let $C$ be a core for $H$ of size $b$. We define $H_0 = C$. Now we run the propagation in the algorithm from Definition \ref{def:core}. Let $e_{j_1},...e_{j_n}$ be the order of the edges. We define $H_i$ iteratively with $V(H_i)$ = $V(H_{i-1}) \cup e_{j_i}$ and $E(H_i)$ = $E(H_{i-1}) \cup \{e_{j_i} \}$. The edges have an ordering matching the second condition from Definition \ref{def:transfer_filtration}. Each time we add an edge to $H_i$ this edge has at most one vertex that is not yet in $H_i$. Since the core is minimal, each edge has at least one vertex not in $V(H_0)$, so all conditions for the transfer filtration are fulfilled. Now, let $H_0\subseteq H_1\subseteq \cdots \subseteq H_m=H$ be a transfer filtration for $H$ of type $b$. We set the core $C := H_0$. Now we can run the propagation algorithm from Definition \ref{def:core}. The first step is not needed because no edge is completely in $C$. In the loop, the algorithm can process the edges in the order given by the transfer filtration. The edges, which add a vertex in the transfer filtration, will be deleted in step 2(a) and the other edges, which add no vertex, will be deleted in step 2(c). After that all edges are deleted because every edge in the transfer filtration adds at most one vertex, so $C = H_0$ is a core. Now we prove that the radius remains unchanged, when transferring the core to the transfer filtration and vice versa as above.
    
    Let $L_1,...,L_k$ be the layers from Definition \ref{def:radius}. We define the function $r$ from Definition \ref{def:radius:aldi} inductively. We number the edges from 1 to $m$ and start with the edges in layer $L_1$. Edges in $L_1$ have exactly one vertex, which is not in the core. So, they can get covered directly and we define $r(e) = 0$ for all edges $e$ in $L_1$. Now we assume that every edge $s$ in $L_j$ with $j \in [i]$ and $i \in [m-1]$ holds $r^j(s) = 0$. We take an edge $t$ from layer $i+1$. This edge can therefore only be covered after an edge $t'$ from a layer $i' < i+1$ has been covered. We take $t'$ as the edge, so that the number of unassimilated vertices in $t$ drops to exactly one after the covering of $t'$. Therefore, we can define $r^{i+1}(t) := r^i(t')$. By our assumption, we have $r^i(t') = 0$. Since the number of layers is minimal, there is a chain of edges of length $k$, so that each edge in the chain can only be covered if the predecessors in the chain are already covered. Therefore, $k$ is the minimal integer with $r^k(i) = 0$ for all $i \in [m]$. For the other direction, we take a function $r$, which has the property from Definition \ref{def:radius:aldi} and we put in layer $L_i$ all edges $j$ that have $r^i(j) = 0$ and $r^{i-1}(j) > 0$. We can use the same arguments as above to show that these layers can be used in Definition \ref{def:radius} to show that the radius of the core is $k$.

\end{proof}

\noindent The proof is constructive, so we can transform a minimum core to a transfer filtration. Therefore, we can analyze the core in order to answer the open question in \cite{Aldi2021} about the transfer filtration. 




\section{FPT algorithm for finding a minimum core with minimal radius}

In this section, we show the FPT algorithm that finds a minimum core with minimal radius. We show that we can efficiently compute a minimum core with the minimal radius, if a core of size $n-m+a$, where $a$ is a constant, exists. In the first step, we give a polynomial time algorithm that outputs a minimum core of size $n-m$ if it exists.

The vertices of degree one will have an important role in the algorithm. The following lemma introduces this.
\begin{lemma}
\label{lemma:edge-oder_degree-one-vertex}
Let $H$ be a hypergraph. If $H$ has a core of size $n-m$ it has a degree one vertex. 
\end{lemma}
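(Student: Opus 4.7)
The plan is to exploit the tight counting: a core of size $n-m$ leaves exactly $m$ vertices to be assimilated, and since each extending edge contributes exactly one new vertex to $C$, every one of the $m$ edges must be extending (step 2(a)), with none ever removed via step 2(c). In particular, running the propagation algorithm gives a natural bijection between edges and non-core vertices: for the processing order $e_1, e_2, \dots, e_m$, edge $e_i$ is matched with the unique vertex $v_i \in e_i \setminus C_{i-1}$ that it assimilates, where $C_{i-1}$ denotes the state of $C$ just before $e_i$ is processed.

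The key step will be to show that $v_i$ does not appear in any earlier edge $e_j$ with $j < i$. Suppose for contradiction $v_i \in e_j$ for some $j < i$. At the moment $e_j$ is processed, $v_i$ has not yet been assimilated (it is only added at step $i$), so $v_i \in e_j \setminus C_{j-1}$. Since $e_j$ is extending, it has exactly one vertex outside $C_{j-1}$, and that vertex gets added to $C$ at step $j$. That vertex must then be $v_i$, contradicting $v_i$ being added only at step $i$. Hence $v_i \notin e_j$ for every $j < i$.

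Applying this to $i = m$ yields the desired degree-one vertex: $v_m \in e_m$ and $v_m \notin e_j$ for any $j < m$, and since there are no further edges, $v_m$ has degree exactly $1$ in $H$.

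The main (and essentially only) obstacle is being careful about the bookkeeping: verifying that under a core of size exactly $n-m$ all edges are forced to be extending (no edge can ever end up completely inside $C$ to be removed via step 2(c)), and that the vertex-to-edge matching induced by the propagation is a bijection onto the non-core vertices. Both follow immediately from the fact that each extending edge grows $|C|$ by one and $|C|$ must grow from $n-m$ to $n$ in exactly $m$ steps.
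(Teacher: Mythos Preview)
Your proof is correct and follows essentially the same approach as the paper: both use the counting $|C|=n-m$ to force every edge to be extending, then observe that the vertex assimilated by the last-processed edge $e_m$ cannot lie in any earlier $e_j$, hence has degree one. The paper phrases this as a contradiction (assume no degree-one vertex, conclude $E_m$ is non-extending), while you argue directly, but the underlying idea is identical.
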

\begin{proof}
We prove it by contradiction. Let $C$ be a core of size $n-m$ of $H$ and $H$ has no degree one vertex. We number the edges in the order in which they get processed by the propagation algorithm from Definition \ref{def:core}. The last edge $E_m$ cannot have a vertex, which is not contained in $\bigcup_{i=1}^{m-1} E_i$. So, the edge $E_m$ is non-extending and a minimum core has at least size $n-m+1$.
\end{proof}

\noindent Now we give the polynomial time algorithm for finding a core of size $n-m$ with minimal radius if it exists.

\begin{algorithm}[h!]
    \caption{Finding core of size $n-m$}\label{alg:core_with_minimal_radius}
    \renewcommand{\algorithmicrequire}{\textbf{Input:}}
    \renewcommand{\algorithmicensure}{\textbf{Output:}}
    \begin{algorithmic}[1]
    \Require Hypergraph $H = (V, E)$
    \Ensure Core of size $n-m$ with minimal radius if it exists
    
    \State Core := $V$
    \While{$E \neq \emptyset$}
        \State Let $F$ be the set of all edges with a degree one node in $H = (V,E)$
        \If{$F = \emptyset$}
            \State \Return  ``no core of size $n-m$ possible''
        \EndIf

        \For {$e \in F$}
            \State let $v \in e$ be a degree one node in $H = (V,E)$.
            \State Core := Core $\backslash v$
        \EndFor
        \State $E := E \backslash F$
    \EndWhile
    \State \Return Core
    \end{algorithmic}
\end{algorithm}
\vspace{0.5cm}

\noindent \textbf{Key insights of Algorithm \ref{alg:core_with_minimal_radius}} Let $r$ be the number of while loop iterations and let $L_i$ ($1 \leq i \leq r$) be all processed edges in the $(r+1-i)$-th while loop iteration. This division of the edges into $r$ layers corresponds to the layers used in Definition \ref{def:radius}, which confirms that the radius of the core is $r$. Furthermore, if we run the propagation algorithm with the resulting core, then an edge $e$ will assimilate the vertex that is deleted from the core in line 7 when $e$ gets processed. These insights are crucial for the following theorem. 

\begin{theorem}
    Given a hypergraph $H = (V,E)$, Algorithm \ref{alg:core_with_minimal_radius} outputs a core of size $n-m$ with optimal radius if and only if it exists. The runtime is $O(n\cdot m)$.
\end{theorem}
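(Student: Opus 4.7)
My plan is to split the theorem into three parts: (i) the runtime bound; (ii) the algorithm succeeds (never returns on line 5) exactly when a core of size $n-m$ exists, and when it succeeds its output is a valid core of that size whose radius equals the iteration count $r$; (iii) this $r$ is the minimum possible radius of any size-$n-m$ core. For (i), degrees can be maintained incrementally: compute all degrees once in time $\sum_e |e|=O(nm)$, and when an edge is removed, update the degrees of its at most $n$ vertices; since each edge is removed at most once, the total bookkeeping sums to $O(nm)$.

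For the main direction of (ii), suppose a core $C^\ast$ of size $n-m$ exists, with layered structure $L^\ast_1,\dots,L^\ast_{r^\ast}$ and levels $V^\ast_i$ as in Definition \ref{def:radius}. I claim that at any iteration $k$ with $E_k\neq\emptyset$ we have $F_k\neq\emptyset$. Pick the largest $i$ such that $L^\ast_i\cap E_k\neq\emptyset$ and any $e\in L^\ast_i\cap E_k$. The unique $u_e\in e\setminus V^\ast_{i-1}$ exists because the counting identity $|V(L^\ast_i)\setminus V^\ast_{i-1}|=|L^\ast_i|$ forces distinct new vertices within one layer. Since $u_e\notin V^\ast_{i-1}$, $u_e$ appears in no edge of $L^\ast_1\cup\dots\cup L^\ast_{i-1}$, and by maximality of $i$ no edge of $L^\ast_{i+1}\cup\dots\cup L^\ast_{r^\ast}$ is in $E_k$; hence $u_e$ has degree one in the current subhypergraph, so $e\in F_k$. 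When the algorithm halts on line 12, each iteration has removed one vertex per removed edge and those chosen vertices are pairwise distinct (a degree-one vertex in the current subhypergraph lies in exactly one surviving edge), so the output has size $n-m$. For validity of the output and that its radius is exactly $r$, I set $L_i := F_{r+1-i}$ and verify $|e\cap V_{i-1}|=|e|-1$: any $w\in e$ other than the representative $v_e$ either stayed in the core or was picked as the representative of some edge $e'$ removed strictly later than $e$ (otherwise $w$ would have had to be degree-one while $e$ was still present), giving $w\in V(L_j)$ for some $j<i$; and $v_e\notin V_{i-1}$ because $v_e$ is degree-one at $e$'s iteration, so it cannot belong to any edge removed later.

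For (iii), fix any valid layered structure $L^\ast_1,\dots,L^\ast_{r^\ast}$ and prove by induction on decreasing $i$ that the iteration at which the algorithm removes $e\in L^\ast_i$ satisfies $\mathrm{iter}(e)\le r^\ast+1-i$. The base case $i=r^\ast$ is exactly the observation above: every $e\in L^\ast_{r^\ast}$ has its $u_e$ degree-one in $H$, so $e\in F_1$. For the step, the hypothesis ensures every edge of $L^\ast_{i+1}\cup\dots\cup L^\ast_{r^\ast}$ is gone by iteration $r^\ast-i$, after which the $u_e$ of any surviving $e\in L^\ast_i$ becomes degree-one in the current subhypergraph, so $e\in F_{r^\ast+1-i}$. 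Taking $i=1$ yields $r\le r^\ast$ for every valid structure, so $r$ is optimal. The step I expect to be the main obstacle is the non-failure argument, because the algorithm is greedy and removes every degree-one edge in a single pass, potentially far more than one layer of $L^\ast$; the maximum-layer choice is the trick that prevents this greediness from ever stranding a needed edge.
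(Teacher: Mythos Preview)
Your proposal is correct and follows the same four-part decomposition as the paper (validity, radius upper bound, radius optimality, runtime), with essentially identical arguments for validity, optimality, and runtime. The one genuine difference is your non-failure argument in (ii): the paper invokes Lemma~\ref{lemma:edge-oder_degree-one-vertex} and argues that each intermediate subhypergraph $H_i$ must itself admit a size-$(n_i-m_i)$ core and hence a degree-one vertex, whereas you bypass that lemma entirely by picking an edge from the \emph{highest} surviving layer of the assumed optimal core $C^\ast$ and observing its new vertex $u_e$ is degree-one in the current subhypergraph. Your route is more self-contained and makes the greedy-versus-layered interaction explicit; the paper's route has the advantage of isolating the structural fact (no degree-one vertex $\Rightarrow$ no size-$(n-m)$ core) as a reusable lemma. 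Your optimality proof (iii) is the same induction as the paper's Claim~3, just stated directly as a bound $\mathrm{iter}(e)\le r^\ast+1-i$ rather than as a set containment derived by contradiction.
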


\begin{proof}
    Let $H_i$ be the hypergraph after the $i$-th while loop iteration and $C_i$ be the core after the $i$-th while loop iteration. As above, let $r$ be the number of while loop iterations and let $L_i$ ($1 \leq i \leq r$) be all processed edges in the $(r+1-i)$-th while loop iteration.
    We prove the theorem in four steps.
    \begin{claim}
        Algorithm \ref{alg:core_with_minimal_radius} outputs a valid core of size $n-m$ if and only if $H$ has a core of size $n-m$.
    \end{claim}
    \begin{proof}[Proof of Claim 1]
        First, we assume that Algorithm \ref{alg:core_with_minimal_radius} reaches line 10 given the hypergraph $H$ as input. Let $C$ be the output set of Algorithm \ref{alg:core_with_minimal_radius} from line 10. We show that $C$ is a core of size $n-m$. We prove by complete induction that all edges in $L_i$ ($1 \leq i \leq r$) can be covered if the edges in $L_j$ $(j <i)$ are covered.

        Induction basis: The layer $L_1$ contains all edges, which are processed in the last while loop iteration. These edges have exactly one vertex not in $C$. The reason is that we only delete degree one vertices, so we only delete a vertex of an edge in layer $L_1$ from the core in the last while loop iteration. In the while loop iterations before every vertex of an edge in layer $L_1$ has at least degree two, otherwise the edge would not be in layer $L_1$. Thus, we can assimilate all vertices, which are inside an edge in $L_1$ from the starting core.

        Induction step: Let all vertices contained in an edge from the first $i$ layers be assimilated. Now, we take an edge $e$ from layer $i+1$. The edge $e$ has exactly one vertex, which is not assimilated and not in the core. The reason is again, that we only delete degree one vertices. Let $v \in e$ be the vertex that got deleted from the core, when $e$ got processed. Let $w \in e$ with $w \neq v$. We show that $w$ gets assimilated by an edge from the first $i$ layers or $w$ is in the core. If $w$ is also in an edge from a lower layer than $i+1$, then it is assimilated, because we assumed that all vertices contained in an edge from the first $i$ layers are assimilated. If $w$ has degree one it is in $C$, because $v$ got deleted from the core by the edge $e$ and $w$ can clearly not get deleted at the time when an edge gets processed, which does not contain $w$. If $w$ is not in a lower layer, but in $L_i$ or in a higher layer it is also in $C$. Clearly, $w$ cannot be removed from the core during the processing of $e$, because $e$ deleted $v$ from the core. It also cannot be removed from the core during the processing an edge from $L_i$ or an higher layer. At the time when an edge from $L_i$ or an higher layer gets processed, $e$ has not yet been deleted, because the algorithm deletes edges from higher layers before edges from lower layers. Therefore, if $w$ is in an edge from layer $L_i$ or an higher layer, then $w$ is also in $e$ at the time when the other edge that contains $w$ gets processed. Thus, $w$ has at this time at least degree two and will not be removed from the core. Therefore, in all possible cases $w$ is in the core or assimilated and $e$ can propagate $v$ in layer $L_{i+1}$.
        The core has clearly size $n-m$, because the algorithm deletes from every edge one vertex from the core. This proves the complete induction.

        If $H$ has a core of size $n-m$, then each $H_i$ must have a degree one vertex. Otherwise if an $H_i$ has no degree one vertex, then $H_i$ has a non-extending edge because of Lemma \ref{lemma:edge-oder_degree-one-vertex} and thus $H$ has a non-extending edge, because when going from $H_i$ to $H$ we only add edges. So, if $H$ has a core size $n-m$, it reaches line 10.
        If $H$ has no core of size $n-m$ it can not output a valid core of this size in line 10, so the algorithm stops before. Therefore, the algorithm stops in this case in line 5 and outputs ``no core of size $n-m$ possible''.  
        This proves Claim 1.
    \end{proof}
   
    \begin{claim}
        The output core has radius at most $r$.
    \end{claim}

    \begin{proof}[Proof of Claim 2]
    As seen in the proof of Claim 1, an edge of layer $L_i$ ($1 \leq i \leq r$) can be covered if every vertex from the layers $L_j$ ($j<i$) is already assimilated or in the core. This shows that the radius is at most $r$, because the layers $L_1,...,L_r$ ensure that $|e \cap V_{i-1}| = |e| - 1$ for all $e\in L_i$, where  $V_i := C\cup \bigcup_{j=1}^{i}V(L_j)$. 
    \end{proof}

    \begin{claim}
        The radius $r$ is the optimal radius for $H$ for a core of size $n-m$.
    \end{claim}
   
    \begin{proof}[Proof of Claim 3]
    Assuming there is a core $C'$ of size $n-m$ with radius $r-1$ with layers $L'_1,...,L'_{r-1}$, then $L'_i \cup ... \cup L'_{r-1} \subseteq L_{i+1} \cup ... \cup L_{r}$ holds for all $1 \leq i \leq r-1$. We prove it by induction.

    Induction basis: Let $e_{r-1} \in L'_{r-1}$, so $|e_{r-1} \cap V'_{r-2}| = |e_{r-1}| - 1$, where $V'_i := C' \cup \bigcup_{j=1}^{i}V(L'_j)$. Thus, exactly one vertex $v$ is in $e_{r-1}$ and not in an edge from the layers $L'_1,...,L'_{r-2}$ and not in the core $C'$. This vertex $v$ is also not in another edge from layer $L'_{r-1}$. Assuming $v$ is in another edge $e'_{r-1}$ of layer $L'_{r-1}$, we get with $|e'_{r-1} \cap V'_{r-2}| = |e'_{r-1}| - 1$ that $e'_{r-1}$ must also assimilate $v$. This means, that either $e'_{r-1}$ or $e_{r-1}$ is not extending and the core is not of size $n-m$, which is a contradiction. Thus, $e_{r-1}$ has a vertex, which is neither in an edge of layer $L'_1,...,L'_{r-2}$ nor in another edge of layer $L'_{r-1}$. Thus, this vertex has degree one and $e_{r-1}$ is in $L_r$. 
    
    Induction step: Now let $L'_i \cup ... \cup L'_{r-1} \subseteq L_{i+1} \cup ... \cup L_{r}$ and we show $L'_{i-1} \cup ... \cup L'_{r-1} \subseteq L_{i} \cup ... \cup L_{r}$. Analogous to the induction basis, let $e'_{i-1} \in L'_{i-1}$. Thus, exactly one vertex is in $e'_{i-1}$ and not in an edge from layer $L'_1,...,L'_{i-2}$ and not in the core. Analogue to the induction basis this vertex cannot be in an edge from layer $L'_{i-1}$. Therefore, $e_{i-1}$ has degree one after the deletion of all edges from layer $L'_i,...,L'_{r-1}$. According to the induction hypothesis, we have $L'_i \cup ... \cup L'_{r-1} \subseteq L_{i+1} \cup ... \cup L_{r}$. Therefore, $e'_{i-1}$ has a vertex of degree at most one after the deletion from layers $L_{i+1},...,L_r$. Thus, $e'_{i-1} \in L_i \cup ... \cup L_{r}$. 

    Thus, the statement holds for $i = 1$. This leads to a contradiction, because we get $L_{2} \cup ... \cup L_{r} = E$. Therefore, $L_1$ is empty, which contradicts to the definition of the radius.
    \end{proof}

    \begin{claim}
        The runtime of Algorithm \ref{alg:core_with_minimal_radius} is $O(n \cdot m)$.
    \end{claim}
    
    \begin{proof}[Proof of Claim 4]
    We can assume that $m \leq n$, because otherwise there is no core of size $n-m$ possible. Let $F_i$ be the set $F$ in the $i$-th while loop iteration and $V_i$ the number of degree one vertices in the $i$-th while loop iteration. We assume that the hypergraph is stored in an incidence matrix $A$. The columns are the vertices and the rows are the edges. $a_{i,j}$ with $1 \leq i,j \leq n$ is 1 if vertex $i$ is in edge $j$, otherwise $a_{i,j}$ is 0.  First, we initialize an array of size $n$, which stores the degree of every vertex. We can get the degree for every vertex if we add up the rows of the matrix, so initializing the array needs time $O(n\cdot m)$. In line 3, we first need to find all degree one vertices, which need time $O(n)$ with the degrees stored in the array. Then, we need to find the edges, which contain the degree one vertices. This needs time $O(|V_i|\cdot m)$ by checking for every vertex in $V_i$ in which edge it is. Summing over all while-loop iterations, line 3 takes time $O(r \cdot n + (|V_1|+...+|V_r|)\cdot m) = O(n\cdot r+n\cdot m)=O(n\cdot m)$.
    The lines 4 and 5 are clearly negligible.
    The for loop in line 6 has $O(|F_i|)$ iterations in the $i$-th while loop iteration. In every for loop iteration we must find a degree one vertex in every edge in $F_i$. This needs time $O(|F_i| \cdot m)$. Summing up over all while loop iterations, we need time $O((|F_1|+...+|F_r|)\cdot m) = O(m^2) = O(n \cdot m)$.
    In the end of every while loop iteration, we update the degree list and delete the deleted edge from the input matrix. This needs time $O(n)$ per while loop iteration. Summed up over every while loop iteration this needs time $O(n\cdot r) = O(n\cdot m)$.
    Adding the steps up we get an overall runtime of $O(n\cdot m)$. This proves Claim 4.
    \end{proof}

    \noindent The claims 1,2,3 and 4 clearly prove the theorem.
\end{proof}

\begin{example}
    In Figure \ref{example:radius_edge_order}, we give an example of Algorithm \ref{alg:core_with_minimal_radius}. We present the current core vertices as stars. We show a picture after every while-loop iteration. In the first while-loop iteration all edges with a degree one vertex will be processed. These are the yellow edges. 
    In the end of the first while loop iteration they get deleted, which we show by making the edges transparent. Then we repeat with the edges, which get a degree one vertex due to the deletion. These are the orange edges. After the orange edges have been deleted, the red edges have a degree one vertex. Finally, the brown edges have a degree one vertex. At the beginning, all vertices are in the core. During the processing of an edge, exactly one vertex of this edge will get removed from the core.  
    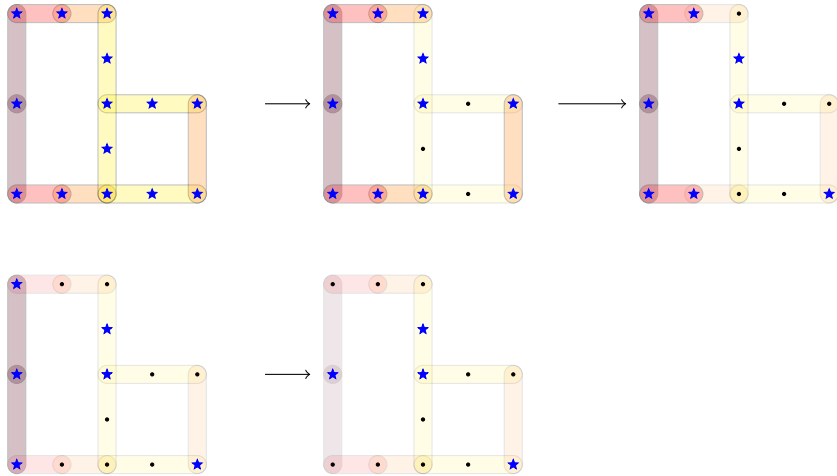
\begin{figure}[ht]
    \begin{center}
    \begin{tikzpicture}[scale = 0.6, transform shape]
        \coordinate (v1) at (-7,0) {};
        \coordinate (v2) at (-6,0) {};
        \coordinate (v3) at (-5,0) {};
        \coordinate (v4) at (-4,0) {};
    
        \coordinate (v5) at (-6,1) {};
        
        \coordinate (v6) at (-8,2) {};
        \coordinate (v7) at (-6,2) {};
        \coordinate (v8) at (-5,2) {};
        \coordinate (v9) at (-4,2) {};
    
        \coordinate (v10) at (-6,3) {};
        
        \coordinate (v11) at (-7,4) {};
        \coordinate (v12) at (-6,4) {};
    
        \coordinate (v13) at (-8,0) {};
        \coordinate (v14) at (-8,4) {};
    
         \foreach \v in {v1,v2,v3,v4,v5,v6,v7,v8,v9,v10,v11,v12, v13, v14} {
             \node [circle, minimum size=0.4cm, line width=0pt] (\v') at (\v) {};
        }
    
        \filldraw [draw=black, fill=orange, opacity=0.25]
            (v1'.270) -- (v2'.-90) arc (-90:90:0.2cm)
            -- (v1'.90) arc (90:270:0.2) -- cycle;
        \filldraw [draw=black, fill=yellow, opacity=0.25]
            (v2'.270) -- (v4'.-90) arc (-90:90:0.2cm)
            -- (v2'.90) arc (90:270:0.2) -- cycle;
        \filldraw [draw=black, fill=yellow, opacity=0.25]
            (v7'.270) -- (v9'.-90) arc (-90:90:0.2cm)
            -- (v7'.90) arc (90:270:0.2) -- cycle;
        \filldraw [draw=black, fill=orange, opacity=0.25]
            (v11'.270) -- (v12'.-90) arc (-90:90:0.2cm)
            -- (v11'.90) arc (90:270:0.2) -- cycle;
    
         \filldraw [draw=black, fill=darkscarlet, opacity=0.25]
            (v13'.0) -- (v6'.0) arc (0:180:0.2cm)
            -- (v13'.180) arc (-180:0:0.2) -- cycle;
         \filldraw [draw=black, fill=darkscarlet, opacity=0.25]
            (v6'.0) -- (v14'.0) arc (0:180:0.2cm)
            -- (v6'.180) arc (-180:0:0.2) -- cycle;
        \filldraw [draw=black, fill=yellow, opacity=0.25]
            (v2'.0) -- (v12'.0) arc (0:180:0.2cm)
            -- (v2'.180) arc (-180:0:0.2) -- cycle;
        \filldraw [draw=black, fill=orange, opacity=0.25]
            (v4'.0) -- (v9'.0) arc (0:180:0.2cm)
            -- (v4'.180) arc (-180:0:0.2) -- cycle;
    
        \filldraw [draw=black, fill=red, opacity=0.25]
            (v13'.270) -- (v1'.-90) arc (-90:90:0.2cm)
            -- (v13'.90) arc (90:270:0.2) -- cycle;    
        \filldraw [draw=black, fill=red, opacity=0.25]
            (v14'.270) -- (v11'.-90) arc (-90:90:0.2cm)
            -- (v14'.90) arc (90:270:0.2) -- cycle;

        \foreach \l in {1,...,14}{
          \filldraw [blue] (v\l) node [star, fill, star points=5, star point ratio=2.25, draw,inner sep=1pt] {};  
        }
        
        \draw[->] (4,2) -- (5.5, 2);

        \coordinate (u1) at (0,0) {};
        \coordinate (u2) at (1,0) {};
        \coordinate (u3) at (2,0) {};
        \coordinate (u4) at (3,0) {};
    
        \coordinate (u5) at (1,1) {};
        
        \coordinate (u6) at (-1,2) {};
        \coordinate (u7) at (1,2) {};
        \coordinate (u8) at (2,2) {};
        \coordinate (u9) at (3,2) {};
    
        \coordinate (u10) at (1,3) {};
        
        \coordinate (u11) at (0,4) {};
        \coordinate (u12) at (1,4) {};
        
        \coordinate (u13) at (-1,0) {};
        \coordinate (u14) at (-1,4) {};
    
         \foreach \v in {u1,u2,u3,u4,u5,u6,u7,u8,u9,u10,u11,u12,u13,u14} {
             \node [circle, minimum size=0.4cm, line width=0pt] (\v') at (\v) {};
        }
    
        \filldraw [draw=black, fill=orange, opacity=0.25]
            (u1'.270) -- (u2'.-90) arc (-90:90:0.2cm)
            -- (u1'.90) arc (90:270:0.2) -- cycle;
        \filldraw [draw=black, fill=yellow, opacity=0.1]
            (u2'.270) -- (u4'.-90) arc (-90:90:0.2cm)
            -- (u2'.90) arc (90:270:0.2) -- cycle;
        \filldraw [draw=black, fill=yellow, opacity=0.1]
            (u7'.270) -- (u9'.-90) arc (-90:90:0.2cm)
            -- (u7'.90) arc (90:270:0.2) -- cycle;
        \filldraw [draw=black, fill=orange, opacity=0.25]
            (u11'.270) -- (u12'.-90) arc (-90:90:0.2cm)
            -- (u11'.90) arc (90:270:0.2) -- cycle;
    
         \filldraw [draw=black, fill=darkscarlet, opacity=0.25]
            (u13'.0) -- (u6'.0) arc (0:180:0.2cm)
            -- (u13'.180) arc (-180:0:0.2) -- cycle;
         \filldraw [draw=black, fill=darkscarlet, opacity=0.25]
            (u6'.0) -- (u14'.0) arc (0:180:0.2cm)
            -- (u6'.180) arc (-180:0:0.2) -- cycle;
        \filldraw [draw=black, fill=yellow, opacity=0.1]
            (u2'.0) -- (u12'.0) arc (0:180:0.2cm)
            -- (u2'.180) arc (-180:0:0.2) -- cycle;
        \filldraw [draw=black, fill=orange, opacity=0.25]
            (u4'.0) -- (u9'.0) arc (0:180:0.2cm)
            -- (u4'.180) arc (-180:0:0.2) -- cycle;
    
        \filldraw [draw=black, fill=red, opacity=0.25]
            (u13'.270) -- (u1'.-90) arc (-90:90:0.2cm)
            -- (u13'.90) arc (90:270:0.2) -- cycle;    
        \filldraw [draw=black, fill=red, opacity=0.25]
            (u14'.270) -- (u11'.-90) arc (-90:90:0.2cm)
            -- (u14'.90) arc (90:270:0.2) -- cycle;
        
        \foreach \l in {1,...,14}{
          \filldraw [black] (u\l) circle (1pt) node [inner sep=5pt] {};
        }
    
        \foreach \l in {1,2,4,6,7,9,10,11,12,13,14}{
          \filldraw [blue] (u\l) node [star, fill, star points=5, star point ratio=2.25, draw,inner sep=1pt] {};  
        }

        \draw[->] (-2.5,-4) -- (-1.5, -4);
        
        \coordinate (w1) at (7,0) {};
        \coordinate (w2) at (8,0) {};
        \coordinate (w3) at (9,0) {};
        \coordinate (w4) at (10,0) {};
    
        \coordinate (w5) at (8,1) {};
        
        \coordinate (w6) at (6,2) {};
        \coordinate (w7) at (8,2) {};
        \coordinate (w8) at (9,2) {};
        \coordinate (w9) at (10,2) {};
    
        \coordinate (w10) at (8,3) {};
        
        \coordinate (w11) at (7,4) {};
        \coordinate (w12) at (8,4) {};
    
        \coordinate (w13) at (6,0) {};
        \coordinate (w14) at (6,4) {};
    
         \foreach \w in {w1,w2,w3,w4,w5,w6,w7,w8,w9,w10,w11,w12,w13,w14} {
             \node [circle, minimum size=0.4cm, line width=0pt] (\w') at (\w) {};
        }
    
        \filldraw [draw=black, fill=orange, opacity=0.1]
            (w1'.270) -- (w2'.-90) arc (-90:90:0.2cm)
            -- (w1'.90) arc (90:270:0.2) -- cycle;
        \filldraw [draw=black, fill=yellow, opacity=0.1]
            (w2'.270) -- (w4'.-90) arc (-90:90:0.2cm)
            -- (w2'.90) arc (90:270:0.2) -- cycle;
        \filldraw [draw=black, fill=yellow, opacity=0.1]
            (w7'.270) -- (w9'.-90) arc (-90:90:0.2cm)
            -- (w7'.90) arc (90:270:0.2) -- cycle;
        \filldraw [draw=black, fill=orange, opacity=0.1]
            (w11'.270) -- (w12'.-90) arc (-90:90:0.2cm)
            -- (w11'.90) arc (90:270:0.2) -- cycle;
    
         \filldraw [draw=black, fill=darkscarlet, opacity=0.25]
            (w13'.0) -- (w6'.0) arc (0:180:0.2cm)
            -- (w13'.180) arc (-180:0:0.2) -- cycle;
         \filldraw [draw=black, fill=darkscarlet, opacity=0.25]
            (w6'.0) -- (w14'.0) arc (0:180:0.2cm)
            -- (w6'.180) arc (-180:0:0.2) -- cycle;
        \filldraw [draw=black, fill=yellow, opacity=0.1]
            (w2'.0) -- (w12'.0) arc (0:180:0.2cm)
            -- (w2'.180) arc (-180:0:0.2) -- cycle;
        \filldraw [draw=black, fill=orange, opacity=0.1]
            (w4'.0) -- (w9'.0) arc (0:180:0.2cm)
            -- (w4'.180) arc (-180:0:0.2) -- cycle;
    
        \filldraw [draw=black, fill=red, opacity=0.25]
            (w13'.270) -- (w1'.-90) arc (-90:90:0.2cm)
            -- (w13'.90) arc (90:270:0.2) -- cycle;    
        \filldraw [draw=black, fill=red, opacity=0.25]
            (w14'.270) -- (w11'.-90) arc (-90:90:0.2cm)
            -- (w14'.90) arc (90:270:0.2) -- cycle;
           
        \foreach \l in {1,...,14}{
          \filldraw [black] (w\l) circle (1pt) node [inner sep=5pt] {};
        }
    
        \foreach \l in {1,4,6,7,10,11,13,14}{
          \filldraw [blue] (w\l) node [star, fill, star points=5, star point ratio=2.25, draw,inner sep=1pt] {};  
        }

        \coordinate (m1) at (-7,-6) {};
        \coordinate (m2) at (-6,-6) {};
        \coordinate (m3) at (-5,-6) {};
        \coordinate (m4) at (-4,-6) {};
    
        \coordinate (m5) at (-6,-5) {};
        
        \coordinate (m6) at (-8,-4) {};
        \coordinate (m7) at (-6,-4) {};
        \coordinate (m8) at (-5,-4) {};
        \coordinate (m9) at (-4,-4) {};
    
        \coordinate (m10) at (-6,-3) {};
        
        \coordinate (m11) at (-7,-2) {};
        \coordinate (m12) at (-6,-2) {};
    
        \coordinate (m13) at (-8,-6) {};
        \coordinate (m14) at (-8,-2) {};
    
         \foreach \m in {m1,m2,m3,m4,m5,m6,m7,m8,m9,m10,m11,m12,m13,m14} {
             \node [circle, minimum size=0.4cm, line width=0pt] (\m') at (\m) {};
        }
    
        \filldraw [draw=black, fill=orange, opacity=0.1]
            (m1'.270) -- (m2'.-90) arc (-90:90:0.2cm)
            -- (m1'.90) arc (90:270:0.2) -- cycle;
        \filldraw [draw=black, fill=yellow, opacity=0.1]
            (m2'.270) -- (m4'.-90) arc (-90:90:0.2cm)
            -- (m2'.90) arc (90:270:0.2) -- cycle;
        \filldraw [draw=black, fill=yellow, opacity=0.1]
            (m7'.270) -- (m9'.-90) arc (-90:90:0.2cm)
            -- (m7'.90) arc (90:270:0.2) -- cycle;
        \filldraw [draw=black, fill=orange, opacity=0.1]
            (m11'.270) -- (m12'.-90) arc (-90:90:0.2cm)
            -- (m11'.90) arc (90:270:0.2) -- cycle;
    
         \filldraw [draw=black, fill=darkscarlet, opacity=0.25]
            (m13'.0) -- (m6'.0) arc (0:180:0.2cm)
            -- (m13'.180) arc (-180:0:0.2) -- cycle;
         \filldraw [draw=black, fill=darkscarlet, opacity=0.25]
            (m6'.0) -- (m14'.0) arc (0:180:0.2cm)
            -- (m6'.180) arc (-180:0:0.2) -- cycle;
        \filldraw [draw=black, fill=yellow, opacity=0.1]
            (m2'.0) -- (m12'.0) arc (0:180:0.2cm)
            -- (m2'.180) arc (-180:0:0.2) -- cycle;
        \filldraw [draw=black, fill=orange, opacity=0.1]
            (m4'.0) -- (m9'.0) arc (0:180:0.2cm)
            -- (m4'.180) arc (-180:0:0.2) -- cycle;
    
        \filldraw [draw=black, fill=red, opacity=0.1]
            (m13'.270) -- (m1'.-90) arc (-90:90:0.2cm)
            -- (m13'.90) arc (90:270:0.2) -- cycle;    
        \filldraw [draw=black, fill=red, opacity=0.1]
            (m14'.270) -- (m11'.-90) arc (-90:90:0.2cm)
            -- (m14'.90) arc (90:270:0.2) -- cycle;
           
        \foreach \l in {1,...,14}{
          \filldraw [black] (m\l) circle (1pt) node [inner sep=5pt] {};
        }
    
        \foreach \l in {4,6,7,10,13,14}{
          \filldraw [blue] (m\l) node [star, fill, star points=5, star point ratio=2.25, draw,inner sep=1pt] {};  
        }
    
           \draw[->] (-2.5,2) -- (-1.5, 2);
        
        \coordinate (n1) at (0,-6) {};
        \coordinate (n2) at (1,-6) {};
        \coordinate (n3) at (2,-6) {};
        \coordinate (n4) at (3,-6) {};
    
        \coordinate (n5) at (1,-5) {};
        
        \coordinate (n6) at (-1,-4) {};
        \coordinate (n7) at (1,-4) {};
        \coordinate (n8) at (2,-4) {};
        \coordinate (n9) at (3,-4) {};
        
        \coordinate (n10) at (1,-3) {};
        
        \coordinate (n11) at (0,-2) {};
        \coordinate (n12) at (1,-2) {};
    
        \coordinate (n13) at (-1,-6) {};
        \coordinate (n14) at (-1,-2) {};
    
         \foreach \w in {n1,n2,n3,n4,n5,n6,n7,n8,n9,n10,n11,n12,n13,n14} {
             \node [circle, minimum size=0.4cm, line width=0pt] (\w') at (\w) {};
        }
    
        \filldraw [draw=black, fill=orange, opacity=0.1]
            (n1'.270) -- (n2'.-90) arc (-90:90:0.2cm)
            -- (n1'.90) arc (90:270:0.2) -- cycle;
        \filldraw [draw=black, fill=yellow, opacity=0.1]
            (n2'.270) -- (n4'.-90) arc (-90:90:0.2cm)
            -- (n2'.90) arc (90:270:0.2) -- cycle;
        \filldraw [draw=black, fill=yellow, opacity=0.1]
            (n7'.270) -- (n9'.-90) arc (-90:90:0.2cm)
            -- (n7'.90) arc (90:270:0.2) -- cycle;
        \filldraw [draw=black, fill=orange, opacity=0.1]
            (n11'.270) -- (n12'.-90) arc (-90:90:0.2cm)
            -- (n11'.90) arc (90:270:0.2) -- cycle;
    
         \filldraw [draw=black, fill=darkscarlet, opacity=0.1]
            (n13'.0) -- (n6'.0) arc (0:180:0.2cm)
            -- (n13'.180) arc (-180:0:0.2) -- cycle;
         \filldraw [draw=black, fill=darkscarlet, opacity=0.1]
            (n6'.0) -- (n14'.0) arc (0:180:0.2cm)
            -- (n6'.180) arc (-180:0:0.2) -- cycle;
        \filldraw [draw=black, fill=yellow, opacity=0.1]
            (n2'.0) -- (n12'.0) arc (0:180:0.2cm)
            -- (n2'.180) arc (-180:0:0.2) -- cycle;
        \filldraw [draw=black, fill=orange, opacity=0.1]
            (n4'.0) -- (n9'.0) arc (0:180:0.2cm)
            -- (n4'.180) arc (-180:0:0.2) -- cycle;
    
        \filldraw [draw=black, fill=red, opacity=0.1]
            (n13'.270) -- (n1'.-90) arc (-90:90:0.2cm)
            -- (n13'.90) arc (90:270:0.2) -- cycle;    
        \filldraw [draw=black, fill=red, opacity=0.1]
            (n14'.270) -- (n11'.-90) arc (-90:90:0.2cm)
            -- (n14'.90) arc (90:270:0.2) -- cycle;
           
        \foreach \l in {1,...,14}{
          \filldraw [black] (n\l) circle (1pt) node [inner sep=5pt] {};
        }
    
        \foreach \l in {4,6,7,10}{
          \filldraw [blue] (n\l) node [star, fill, star points=5, star point ratio=2.25, draw,inner sep=1pt] {};  
        }
        
    \end{tikzpicture}
    \caption{Example of Algorithm \ref{alg:core_with_minimal_radius} for finding a core with minimal radius}
    \label{example:radius_edge_order}
    \end{center}
    \end{figure}
 \end{example}

\begin{corollary}
    \label{cor:alg}
    Given a hypergraph $H = (V,E)$, we can find a minimum core with minimum radius in runtime $O(n\cdot m^{a+1})$, where $n-m+a$ is the minimal core size.
\end{corollary}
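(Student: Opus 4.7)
The plan is to extend Algorithm~\ref{alg:core_with_minimal_radius} by branching over which $a$ edges of $H$ play the role of non-extending edges in an optimal core. The key counting observation is that any core of size $n-m+a$ has exactly $a$ non-extending edges: each extending edge assimilates one vertex in propagation, so the $m-a$ extending edges must account for the $n-(n-m+a)=m-a$ vertices outside the core, leaving exactly $a$ non-extending edges.

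The outer algorithm loops over $a' = 0, 1, 2, \ldots$ and, for each value, enumerates every subset $S \subseteq E$ with $|S| = a'$. For each such $S$ it forms $H' := (V, E \setminus S)$ and invokes Algorithm~\ref{alg:core_with_minimal_radius} on $H'$; when the inner call succeeds it returns a core $\tilde C$ of size $|V| - |E(H')| = n - m + a'$ with minimum radius among cores of that size on $H'$. The outer algorithm returns the candidate $\tilde C$ of smallest size, breaking ties by radius, and advances $a'$ only once all subsets of size $a'$ have failed.

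For correctness I would establish two claims. First, every $\tilde C$ produced this way is a valid core of $H$: after the propagation of $\tilde C$ on $H'$ finishes, all of $V$ has been assimilated, so each $e \in S$ is already contained in the accumulated core and can be removed at step~$2(c)$ of Definition~\ref{def:core} once the propagation on $H'$ terminates. Second, the outer algorithm attains the global optimum: let $C^*$ be an optimal core of $H$ with non-extending edges $S^*$; by the counting above $|S^*| = a$, and $C^*$ is then a core of $H \setminus S^*$ of size $n-m+a$ in which every edge is extending (since $|E(H \setminus S^*)| = m - a = n - |C^*|$). On the branch $S = S^*$, the correctness of Algorithm~\ref{alg:core_with_minimal_radius} applied to $H \setminus S^*$ thus guarantees a core of this size with minimum radius on $H \setminus S^*$.

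The runtime is $\sum_{a'=0}^{a} \binom{m}{a'} \cdot O(nm) = O(n \cdot m^{a+1})$ via $\binom{m}{a'} \le m^{a'}$, with the dominant term coming from the final iteration $a' = a$. The main obstacle I foresee is the radius bookkeeping: I must argue that the minimum radius of $\tilde C$ as a core of $H$ coincides with its minimum radius as a core of $H \setminus S$, so that the non-extending edges in $S$ do not inflate the layer count. One direction follows immediately by restricting a layer decomposition of $H$ to $E \setminus S$. The other direction requires placing each $e \in S$ into a layer $L_i$ with $|e \cap V_{i-1}| = |e| - 1$, which one can arrange by assigning $e$ to the layer in which the last vertex of $e$ is assimilated during the propagation on $H \setminus S$, provided the assimilation order can be refined so that exactly one vertex of $e$ enters at that layer; carefully verifying this refinement is the delicate step of the argument.
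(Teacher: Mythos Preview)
Your overall approach matches the paper's exactly: iterate $a'=0,1,\dots$, enumerate subsets $S\subseteq E$ of size $a'$ as candidate non-extending edge sets, run Algorithm~\ref{alg:core_with_minimal_radius} on $H\setminus S$, and keep the best outcome. The counting observation that a core of size $n-m+a$ has exactly $a$ non-extending edges, and the argument that the branch $S=S^*$ succeeds, are correct and in fact more explicit than what the paper writes.

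The gap is precisely where you flagged it, but the obstacle is not merely ``delicate''---the claim you set out to prove is false. Take $V=\{a,b,c\}$, $E=\{\{a,b\},\{a,c\},\{b,c\}\}$, core $C=\{a\}$, and $S=\{\{b,c\}\}$. On $H\setminus S$ the two remaining edges form a single layer, so the radius is $1$. On $H$, however, if $\{a,b\}$ and $\{a,c\}$ both sit in layer $1$ then $V_1=\{a,b,c\}$ and the edge $\{b,c\}$ satisfies $|e\cap V_0|=0$ and $|e\cap V_1|=2$, so it cannot be placed in any layer; one is forced to stagger $\{a,b\}$ and $\{a,c\}$ across two layers to accommodate $\{b,c\}$, giving radius $2$. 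The ``refinement'' you propose is exactly this staggering, and it strictly increases the layer count. So the minimum radius of $\tilde C$ on $H$ need \emph{not} coincide with its minimum radius on $H\setminus S$.

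The paper does not attempt to prove this equality. It explicitly notes that reinserting the deleted edges may raise the radius by one, and therefore, for each $S$, it computes the radius of the returned core on the full hypergraph $H$ (after reinsertion) rather than on $H\setminus S$, and selects the minimum over all $S$. Replacing your radius-equality claim with this direct computation on $H$ brings your argument in line with the paper's.
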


\begin{proof}
    We run Algorithm \ref{alg:core_with_minimal_radius} to check whether a core of size $n-m$ is possible. If yes, then we are finished. If not, then we have to check whether a core of size $n-m+1$ is possible.
    To do this, we repeat the Algorithm \ref{alg:core_with_minimal_radius} $m$ times. With every restart we delete a different edge from the input. If a core is found, the deleted edge is not used to assimilate all vertices. So, when adding the edge back, this is the non-extending edge, and we have a core of size $n-m+1$. If no core of size $n-m+1$ is possible, we have to repeat this procedure $\binom{m}{2}$ times, whereby we delete all combinations of two edges. We repeat this procedure until we reach the point, where we delete $a$ edges and use Algorithm \ref{alg:core_with_minimal_radius} to find a core of size $n-m+a$.
    We repeat Algorithm \ref{alg:core_with_minimal_radius} $\binom{m}{a}$ times. In each run we delete $a$ edges from the input. We need to check all $\binom{m}{a}$ combinations and we take the core with the minimum radius. When reinserting the deleted edges, the radius could increase by one. The deleted edges are exactly the non-extending edges, but it may be the case that the non-extending edges require a new layer. We select the core with the minimum possible radius after reinserting the edges.
    The runtime for this procedure is $O(\sum_{i=0}^a\binom{m}{i}\cdot n\cdot m) = O(n \cdot m^{a+1})$.
\end{proof}

\subsection{Solving QSAT instances}

The algorithm in \cite{Aldi2021} needs a core of size $n-m+1$ with an at most logarithmic size radius to solve a Quantum $k$-SAT instance and the algorithm in \cite{unpublished2024} requires the core size $n-m+k-1$.
Algorithm \ref{alg:core_with_minimal_radius} finds both core sizes in polynomial time if $k$ is a constant.

We will now extend the results from \cite{Aldi2021} and \cite{unpublished2024}. In this subsection, we will drop the $k$-uniformity condition and the core must now only be of size $n-m+a$, where $a$ is an arbitrary constant. The strict core size of size $n-m+k-1$ from \cite{unpublished2024}, where $k$ is defined from the $k$-uniformity, is therefore generalized.
The approach is similar to Theorem 92 in \cite{unpublished2024}, but we can generalize it with Algorithm \ref{alg:core_with_minimal_radius} and graph-theoretic ideas.

\begin{theorem}
\label{th:QSAT}
    Let $H$ be a generic PRODSAT instance with constraints in $\mathbb{Q}[i]$ on qubits with underlying hypergraph $G$ = ($V, E$) with SDR and a core of size $n-m+a$ with radius $r$. We can compute an $\epsilon$-approximate product-state solution in time poly($L$, $a^r$, $m^a$, $|\log \epsilon|$), where $L$ is a bound on the bit size of the instance’s rational coefficients, and $\epsilon$ the Euclidean distance to the closest product-state solution. 
     \end{theorem}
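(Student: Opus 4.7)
The plan is to combine the core-finding routine of Corollary \ref{cor:alg} with a layerwise symbolic propagation that extends the approach of \cite{Aldi2021, unpublished2024} to non-uniform hyperedges and an $a$-dimensional parameter family. As a first step, apply Corollary \ref{cor:alg} to obtain, in time $O(n\cdot m^{a+1})$, a core $C$ of size $n-m+a$ with minimum radius $r$. By Lemma \ref{lemma:core_eq_transfer-filtration}, this is equivalent to a transfer filtration of type $n-m+a$ of the same radius, which partitions the edges into extending layers $L_1,\dots,L_r$ plus a set $N$ of exactly $a$ non-extending edges whose vertices are all contained in $C$.

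Next, parameterize the single-qubit states on the $n-m+a$ foundation qubits by symbolic complex projective coordinates. Walk through the layers $L_1,L_2,\dots,L_r$ in order; for each extending edge $e \in L_i$ there is exactly one qubit $v_e$ whose state is not yet fixed. Since the local term $H_e$ is a rank-1 projector onto some $\ket{\phi_e}$, the condition of lying in the kernel of $H_e$, together with the previously assigned symbolic states on $e\setminus\{v_e\}$, determines $\ket{\psi_{v_e}}$ as the (generically unique) product-state extension orthogonal to $\ket{\phi_e}$. This local step is essentially the same linear-algebra primitive used in the $k$-uniform case, but we drop $k$-uniformity by noting that the primitive only depends on the already-assigned substate of $e\setminus\{v_e\}$ and the single rank-1 projector $\ket{\phi_e}\bra{\phi_e}$. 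Genericity of the instance together with the SDR hypothesis guarantees that no denominator vanishes and that each propagation step is well-defined, so after $r$ rounds every qubit carries a symbolic state that is a rational function of the foundation parameters.

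The $a$ non-extending edges in $N$ then impose $a$ polynomial equations on the $(n-m+a)$-dimensional parameter space; by Laumann et al.\ and the SDR assumption, this system has a solution. The solution is located by enumerating the $\binom{m}{a} = O(m^a)$ choices for the non-extending edges (equivalently, the combinations already considered in Corollary \ref{cor:alg}) and solving the resulting constant-size polynomial system over $\mathbb{Q}[i]$ using exact algebraic arithmetic as in \cite{unpublished2024}. Rounding the algebraic coordinates to precision $\epsilon$ produces an $\epsilon$-approximate product-state solution and contributes the $L$ and $|\log\epsilon|$ factors.

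The main obstacle is controlling the algebraic complexity of the propagated symbolic expressions. Each layer can multiply the degree in the foundation parameters by up to $a$ (since there are $a$ free projective directions), so after $r$ layers the symbolic states have size $a^{O(r)}$, giving the $a^r$ factor in the runtime; the $m^a$ factor comes from the core-search of Corollary \ref{cor:alg} (or equivalently from enumerating the non-extending edges), and the remaining $\mathrm{poly}(L,|\log\epsilon|)$ factor comes from the exact-algebraic arithmetic together with the final rounding. Combining these bounds gives the claimed runtime $\mathrm{poly}(L,a^r,m^a,|\log\epsilon|)$, and the correctness of the recovered product state follows from the kernel condition that was enforced at each propagation step together with the non-extending constraints that were solved at the end.
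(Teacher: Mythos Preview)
Your proposal captures the overall architecture correctly---find the core via Corollary~\ref{cor:alg}, propagate transfer functions layer by layer to express all qubits as rational functions of the foundation parameters, and then impose the $a$ non-extending constraints---but there is a genuine gap at the final solving step.

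You write that the $a$ non-extending edges give ``$a$ polynomial equations on the $(n-m+a)$-dimensional parameter space'' and then speak of ``solving the resulting constant-size polynomial system''. But that system is not constant-size: it has $a$ equations in $n-m+a$ projective variables. With $n-m$ potentially linear in the input, the system is massively underdetermined (so the solution set is positive-dimensional, not zero-dimensional), and standard algebraic/real-algebraic solvers have complexity exponential in the number of variables. Enumerating the $\binom{m}{a}$ possible choices of non-extending edges does not address this at all; that enumeration already happened inside Corollary~\ref{cor:alg} and merely fixes \emph{which} $a$ constraints you impose, not how many free foundation variables remain. (A minor related slip: the non-extending edges are not ``edges whose vertices are all contained in $C$''; their vertices are all \emph{assimilated} when processed, but typically via earlier propagation, not by membership in the original core.)

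The paper closes exactly this gap with an additional combinatorial step you are missing: using Hall's marriage theorem and an alternating-path argument, it shows that the hypergraph admits an SDR in which only $a$ of the foundation vertices are matched. All other foundation qubits can then safely be fixed to $\ket{0}$ while preserving both genericity and the SDR (hence satisfiability). This collapses the system to $a$ equations in $a$ projective unknowns of degree $\mathrm{poly}(n,a^r)$, which is then solved---together with an auxiliary inequality ruling out broken transfer functions---via Renegar's decision procedure for the existential theory of the reals, yielding the $\mathrm{poly}(L,a^r,|\log\epsilon|)$ factor. Without this variable-reduction step your runtime claim does not follow.
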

    
    \begin{proof}
    For detailed explanation and discussion of the transfer functions, the $\epsilon$-error and generic instances see section 7.2 in \cite{unpublished2024}.
    Algorithm \ref{alg:core_with_minimal_radius} with Corollary \ref{cor:alg} gives the core of size $n-m+a$ with minimal radius. It has polynomial runtime if $a$ is a constant.

    Observe that every transfer function depends on at least $a$ foundation variables. Via the transfer functions, we can write all qubits as a polynomial in the foundation qubits of degree at most $a^r$ (see \cite{Aldi2021}). Hence, every non-extending constraint is a polynomial in at least $a$ variables, of degree at most $n\cdot a^r$. The next step is to remove foundation qubits so that there exists a finite number of solutions generically, while maintaining the existence of an SDR. We argue that $G$ has an SDR matching only $a$ of the foundation vertices $V(G_0)$.
    
    First, we assign every extending edge to the vertex, which the edge adds and we mark the remaining edges and vertices. The marked edges are exactly the non-extending edges and the marked vertices are exactly the foundation vertices. We call the graph consisting of all unmarked edges and vertices $G'$. Now we take an arbitrary marked edge $e_1$ and unmark $e_1$. Then we take a path in $G$ consisting of alternating edges and vertices ($e_1$, $v_1$, $e_2$, $v_2$, ..., $e_n$, $v_n$), where $v_i \in e_i \cap e_{i+1}$ for all $1 \leq i < n$ and $v_n \in e_n$ and only $v_n$ is marked. This path exists because of Hall's Marriage Theorem. 
    Let $C$ be the connected component of $e_1$ in $G'$. We unmarked $e_1$, so the number of edges in $C$ is one more than the number of vertices in $C$. Now we look at the induced subgraph of $G$ by the edges of $C$. This subgraph of $G$ must contain an unmarked vertex, because $G$ has an SDR and because of  Hall's marriage theorem there is no complete subgraph with more edges than vertices. Therefore, we can find a path from $e_1$ to a marked vertex, containing only unmarked vertices and edges. Now we reassign all edges on the path by assigning $v_i$ to $e_i$ and we unmark $v_n$. Now we have a SDR in $G'$ containing one non-extending edge and all extending edges. In addition only one foundation vertex of $G$ is in $G'$. We loop this procedure $a$ times, such that every non-extending edge is matched.  In every loop only one foundation vertex gets unmarked. Therefore, we get after $a$ loops a SDR in the graph $G'$. $G'$ has every edge of $G$ and all non-foundation vertices and exactly $a$ foundation vertices. We set all unassigned foundation vertices to $|0\rangle$ and call the resulting System $H'$ on $G'$. $H'$ still has generic constraints, since setting variables to $|0\rangle$ just means we discard coefficients, but not change them.
    
    Let $F$ be the multi-homogeneous system obtained by writing every qubit of $G'$ as polynomials in the entries of the foundation qubits via the transfer functions. The solutions of $F$ also contain the foundation qubits of all solutions of $H'$, which can be extended to the qubits outside the core
    via the transfer functions. However, the solution set of $F$ can also contain assignments to the foundation that break transfer functions. By Lemma 90 in \cite{unpublished2024}, none of the transfer functions are broken if the foundation is set to an actual solution to $H'$. An additional polynomial inequality $g$ of degree
    at most $na^r$ ensures that we only find solutions that break no transfer functions. We can use the existential theory of the reals to find a solution that satisfies both $F$ and $g$. For rational entries, Renegar’s algorithm [\cite{Ren92}, Theorem 1.2] can compute an $\epsilon$-approximate solution in time poly($L$, $a^r$, $|\log \epsilon|$), where $L$ is a bound on the bit size of the constraints. We introduce separate variables for the real and imaginary parts, which allows us to also use complex conjugates in our
    constraints.
    
    \end{proof}

\section{Finding minimum core in general hypergraphs is NP-hard}

In this section, we show that finding a minimum core and computing the smallest possible radius for a minimum core is for general hypergraphs $\NP{}$-hard. This shows that the approach in Theorem \ref{th:QSAT} is not generalizable to arbitrary hypergraphs. Furthermore, we give hardness of approximation results.

We analyze the core with the following problem.

\begin{problem}[MinCore]
\label{problem:peer_pressure}Given as input a hypergraph $H=(V, E)$, where hyperedges $E$ are sets of arbitrary size, find a minimum core for $H$.
\end{problem}
\noindent 

\subsection{Hardness of approximation results for MinCore}

\noindent For the approximation preserving reductions we use the $L$-reductions. This type of reduction was introduced in 1991 by Papadimitriou and Yannakakis \cite{Papadimitriou1991} and is approximation ratio preserving and composing.
\begin{definition}[\cite{Papadimitriou1991}]
    We say that $\Pi$ \emph{$L$-reduces} to $\Pi'$ if there are two polynomial-time algorithms $f$, $g$, and constants $\alpha$, $\beta$ > 0 such that for each instance $I$ of $\Pi$:
    \begin{enumerate}
        \item  Algorithm $f$ produces an instance $I'$ = $f(I)$ of $\Pi'$, such that the optima of $I$ and $I'$, $\textup{OPT}(I)$ and $\textup{OPT}(I')$, respectively, satisfy $\textup{OPT}(I') \leq \alpha \cdot \textup{OPT}(I)$
        \item Given any solution of $I'$ with cost $c'$, algorithm $g$ produces a solution of $\Pi$ with cost $c$ such that $|c - \textup{OPT}(I)| \leq \beta |c' - \textup{OPT}(I')|$. 
        \end{enumerate}
\end{definition}

\subsubsection{Reduction from Set Cover}

Now we give the $L$-reduction Set Cover $\leq_L$ MinCore.

\begin{problem}[Set Cover \cite{cormen2022}]
\label{def:set_cover}
Let $U$ be a universe of $n$ elements and let $S = \lbrace S_1,...,S_m \rbrace$ be a subset of the power set of $U$ satisfying $\bigcup_{i=1}^m S_i = U$. Then, given $U$ and $S$, find a \emph{set cover} $T \subseteq S$ of minimum cardinality. Here, the defining property of a set cover is that $\bigcup_{S_i \in T} S_i = U$.
\end{problem}

\begin{theorem}
\label{Theorem:reduction_set-cover}
    Set Cover $L$-reduces to MinCore.
\end{theorem}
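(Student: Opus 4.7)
The plan is to construct a polynomial-time map $f$ sending a Set Cover instance $(U,S)$ to a hypergraph $H=(V,E)$ together with a recovery map $g$ sending any core of $H$ back to a set cover of $(U,S)$, then verify the two $L$-reduction conditions. The vertex set should be $V = U \cup \{s_1,\dots,s_m\}$, one vertex per element plus one per subset, and the hyperedges should tie each $u\in U$ to the set-vertices $\{s_i : u \in S_i\}$ (the simplest choice being $e_u = \{u\} \cup \{s_i : u \in S_i\}$, possibly augmented with a small number of auxiliary vertices to suppress chain propagation). The intuition is that a minimum core should consist of exactly those $s_i$ whose corresponding sets form a minimum set cover, with each element vertex $u$ being assimilated only once at least one covering set-vertex has been placed into the core.

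For the forward direction ($\mathrm{OPT}(H) \leq \alpha\cdot\mathrm{OPT}(U,S)$), I would take any set cover $T\subseteq S$ and show that $C_T := \{s_i : S_i \in T\}$, augmented with at most a constant-size overhead to seed propagation, is a core of $H$. Running the propagation algorithm of Definition \ref{def:core}, I would argue inductively that for every $u$, the hyperedge $e_u$ eventually has all of its vertices except $u$ already assimilated (since $T$ covers $u$, at least one $s_i$ with $u \in S_i$ lies in $C_T$), so $u$ is subsequently assimilated. A direct vertex count then yields $|C_T| = O(|T|)$, establishing condition (1) of $L$-reduction with some constant $\alpha$.

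For the backward direction and definition of $g$, given any core $C$ of $H$, I would set $T_C = \{S_i : s_i \in C\}$, replacing any element vertex $u \in C$ by an arbitrary $S_i$ containing $u$. This swap cannot increase the core size, because any role $u$ plays in propagation (namely inside $e_u$) can be taken over by an $s_i$ with $u\in S_i$. I would then show that $T_C$ must be a set cover whenever $C$ is a core, since an uncovered $u$ would leave $e_u$ with strictly fewer than $|e_u|-1$ active vertices for every step of the propagation. Combined with the forward bound, this gives condition (2) with constant $\beta$. The main obstacle is exactly in pinning down the propagation dynamics: MinCore allows multi-round activation, and one must ensure no cascading sequence of assimilations yields a core smaller than the minimum set cover. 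This is the crucial structural property, and may require the construction to include per-element auxiliary vertices (or similarly sized gadgets) beyond the naive $e_u$ in order to rule out such "cheating" cores. Once this one-to-one correspondence between set covers and cores (up to an additive/multiplicative constant) is nailed down, both $L$-reduction conditions follow from the same vertex-counting argument.
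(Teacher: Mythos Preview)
Your proposal has a genuine gap in the forward direction that breaks the construction. You write that for a set cover $T$, the set $C_T=\{s_i:S_i\in T\}$ is a core because ``for every $u$, the hyperedge $e_u$ eventually has all of its vertices except $u$ already assimilated (since $T$ covers $u$, at least one $s_i$ with $u\in S_i$ lies in $C_T$)''. This conflates ``at least one $s_i$ is in $C_T$'' with ``all $s_i$ with $u\in S_i$ are assimilated''. The propagation rule requires $|e_u|-1$ vertices of $e_u$ to be active before the last one is assimilated; having a single $s_i$ in the core does nothing for $e_u$ if $u$ lies in several sets. Moreover, your hypergraph has no edge capable of assimilating a set-vertex $s_j\notin C_T$: every edge containing $s_j$ is some $e_u$, and firing $e_u$ requires $s_j$ to already be active. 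Concretely, take $U=\{1,2\}$, $S_1=\{1\}$, $S_2=\{2\}$, $S_3=\{1,2\}$. The minimum set cover has size $1$, but in your hypergraph $e_1=\{1,s_1,s_3\}$, $e_2=\{2,s_2,s_3\}$, and one checks that every core has size at least $3$. So not only does cascading threaten the backward direction (which you flag), the forward direction already fails, and no ``constant-size overhead'' repairs it.

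The paper's construction addresses exactly these two failure modes with two devices you are missing. First, it \emph{doubles} every vertex: each set gets $s_i,s_i'$ with a size-$2$ edge $\{s_i,s_i'\}$, and each element gets $u_j,u_j'$; the edge linking a set to an element is the triple $\{s_i,s_i',u_j\}$. Putting $s_i$ in the core immediately assimilates $s_i'$, and then the triple fires to assimilate $u_j$ --- so one set-vertex in the core suffices to cover $u_j$, fixing your forward direction. The doubling of $u_j$ into $u_j,u_j'$ is what blocks backward cascades: any edge touching an uncovered element has at least two inactive vertices. Second, the paper adds large ``feedback'' edges $V_2\cup\{s_i\}$ (one per set-vertex) so that once every element is assimilated, all remaining set-vertices are swept up; this is the mechanism for assimilating the $s_j$ not chosen by the cover, which your construction lacks entirely. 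Your plan to ``swap $u\in C$ for some $S_i\ni u$'' is the right shape for defining $g$, but it only works once the construction guarantees that such a swap still yields a core --- and that again relies on the doubling.
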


\begin{proof}
We start with defining the function $f$ and $g$ from the Definition of the $L$-reduction. The algorithm $f$ gets an instance $I = (U,S)$ from Set Cover and $f$ computes the MinCore instance $I' = (H = (V,E))$, defined as follows.
\begin{itemize}
    \item $V_1$ represents the elements from $S$. $V_1 := \{s_i, s'_i \ | \ s_i \in S\}$.
    \item $V_2$ represents the elements from $U$. $V_2 := \{u_j, u'_j \ | \ u_j \in U\}$.
    \item $E_1$ connects the $s_i$ among themselves. $E_1 := \{ \{s_i, s'_i\} \ | \ s_i \in S\}$.
    \item $E_2$ connects $s_i$ to the matching $u_j$ and $u_j'$. $E_2 := \{ \{ s_i, s'_i, u_j \}, \{ s_i, s'_i, u'_j \} \ | \ s_i \in S, u_j \in U\}$
    \item $E_3$ connects the elements from $U$ back to $S$. $E_3 := \{ \{ V_2 \cup s_i \}, \{ V_2 \cup s'_i \} \ | \ s_i \in V_1\}$.
\end{itemize}

\noindent A solution $C$ for $f((U,S))$ can be transformed into an optimal core $C'$ in MinCore that only contains vertices from $V_1$. For every vertex $v \in V_2$ there exists an edge $\lbrace v, s_i, s'_i \rbrace$ with $s_i, s'_i \in V_1$ by definition. If a vertex $v$ from $V_2$ is in $C$, we can delete it from $C$ and add $s_i$ to $C$. After the second round $v$ will get assimilated either way because in the first round $s'_i$ gets assimilated by the edge $\lbrace s_i, s'_i \rbrace$ and in the second round $v$ can get assimilated by the edge $\lbrace v, s_i, s'_i \rbrace$.  We do this for every core vertex from $V_2$ and we get a core $C'$ $\subseteq$ $V_1$ of optimal size. 
Then the algorithm $g$ builds a set cover $T$ from $C'$ by adding $s_i$ to $T$ if $s_i$ or $s'_i$ is in $C'$.
\begin{figure}[ht]
\begin{center}
\begin{tikzpicture}[scale = 0.9, transform shape]
    \coordinate (v1) at (0,-0.5);
    \coordinate (v2) at (0,-3.5);
    \coordinate (v3) at (0,-6.5);
    \coordinate (v4) at (3,0.5);
    \coordinate (v5) at (3,-0.5);
    \coordinate (v6) at (3,-2.5);
    \coordinate (v7) at (3,-3.5);
    \coordinate (v8) at (3,-5.5);
    \coordinate (v9) at (3,-6.5);

    \coordinate (v10) at (0,0.5);
    \coordinate (v11) at (0,-2.5);
    \coordinate (v12) at (0,-5.5);

     \foreach \v in {v1,v2,v3,v4,v5,v6,v7,v8,v9,v10,v11,v12}{
      \node [circle, minimum size=0.4cm, line width=0pt] (\v') at (\v) {};
    }
   
     \filldraw [draw=black, fill=green, opacity=0.2]
		(v1'.180) -- (v10'.180) arc (180:90:0.2cm)
		-- (v4'.90) arc (90:-45:0.2) 
            -- (v1'.-45) arc (-45:-180:0.2)-- cycle;
    \filldraw [draw=black, fill=green, opacity=0.2]
		(v1'.180) -- (v10'.180) arc (180:45:0.2cm)
            -- (v5'.45) arc (45:-90:0.2cm)
		-- (v1'.-90) arc (-90:-180:0.2) -- cycle;

     \filldraw [draw=black, fill=blue, opacity=0.2]
		(v2'.180) -- (v11'.180) arc (180:135:0.2cm)
		--  (v4'.135) arc (135:-45:0.2cm)
		-- (v2'.315) arc (315:180:0.2) -- cycle;
    \filldraw [draw=black, fill=blue, opacity=0.2]
		(v2'.180) -- (v11'.180) arc (180:135:0.2cm)
		--  (v5'.135) arc (135:-45:0.2cm)
		-- (v2'.315) arc (315:180:0.2) -- cycle;

     \filldraw [draw=black, fill=blue, opacity=0.2]
        (v2'.180) -- (v11'.180) arc (180:90:0.2cm)
		-- (v6'.90) arc (90:-45:0.2) 
           -- (v2'.-45) arc (-45:-180:0.2)-- cycle;
    \filldraw [draw=black, fill=blue, opacity=0.2]
		(v2'.180) -- (v11'.180) arc (180:45:0.2cm)
            -- (v7'.45) arc (45:-90:0.2cm)
		-- (v2'.-90) arc (-90:-180:0.2) -- cycle;

    \filldraw [draw=black, fill=red, opacity=0.2]
        (v3'.180) -- (v12'.180) arc (180:90:0.2cm)
		-- (v8'.90) arc (90:-45:0.2) 
           -- (v3'.-45) arc (-45:-180:0.2)-- cycle;
    \filldraw [draw=black, fill=red, opacity=0.2]
		(v3'.180) -- (v12'.180) arc (180:45:0.2cm)
            -- (v9'.45) arc (45:-90:0.2cm)
		-- (v3'.-90) arc (-90:-180:0.2) -- cycle;

    \filldraw [draw=black, fill=red, opacity=0.4]
		(v2'.180) -- (v11'.180) arc (180:0:0.2cm)
		-- (v2'.0) arc (0:-180:0.2) -- cycle;

    \filldraw [draw=black, fill=red, opacity=0.4]
		(v1'.180) -- (v10'.180) arc (180:0:0.2cm)
		-- (v1'.0) arc (0:-180:0.2) -- cycle;

    \filldraw [draw=black, fill=blue, opacity=0.5]
		(v3'.180) -- (v12'.180) arc (180:0:0.2cm)
		-- (v3'.0) arc (0:-180:0.2) -- cycle;
  
     \foreach \l in {1,...,3}{
      \filldraw [black] (v\l) circle (2pt) node [inner sep=5pt, label=below:$s'_{\l}$] {};
    }
    \filldraw [black] (v10) circle (2pt) node [inner sep=5pt, label=above:$s_1$] {};
    \filldraw [black] (v11) circle (2pt) node [inner sep=5pt, label=above:$s_2$] {};
    \filldraw [black] (v12) circle (2pt) node [inner sep=5pt, label=above:$s_3$] {};

    \filldraw [black] (v4) circle (2pt) node [inner sep=5pt, label=above:$u_1$] {};
    \filldraw [black] (v6) circle (2pt) node [inner sep=5pt, label=above:$u_2$] {};
    \filldraw [black] (v8) circle (2pt) node [inner sep=5pt, label=above:$u_3$] {};
    \filldraw [black] (v5) circle (2pt) node [inner sep=5pt, label=below:$u'_1$] {};
    \filldraw [black] (v7) circle (2pt) node [inner sep=5pt, label=below:$u'_2$] {};
    \filldraw [black] (v9) circle (2pt) node [inner sep=5pt, label=below:$u'_3$] {};

\end{tikzpicture}
\end{center}
\caption[Exemplary Set Cover reduction]{An example of a Set Cover reduction from the instance $\lbrace$$U$=$\lbrace 1,2,3 \rbrace$, $S$=$\lbrace \lbrace 1 \rbrace , \lbrace 1,2 \rbrace , \lbrace 3 \rbrace \rbrace \rbrace$. The edges in the set $E_3$ are for clarity not included.}
\end{figure}

\noindent\textbf{Correctness. } 
Now we prove the two conditions for the $L$-Reduction. First, we show that for every instance $I$ from Set Cover and $I' = f(I)$ it holds $\textup{OPT}(I) \leq \alpha \textup{OPT}(I')$. We choose $\alpha = 1$ and show that we can transform an optimal solution $\textup{OPT}(I')$ from MinCore to a solution $T$ from Set Cover with $\textup{OPT}(I') = |T|$.
Every element in $V_1$ represents a $s_i \in S$. As seen, we can transform a core $C$ for $I'$ to a core $C' \subseteq V_1$. Then we add $s_i \in S$ to $T$ if the vertex $s_i$ or $s'_i$ is in $C'$. We get a solution set $T$ for the Set Cover instance of size $|C'|$. If there is an element $u_j \in U$ which is not covered by $T$, $C'$ would not be a core because the elements $u_j$ and $u_j'$ would not be assimilated after the propagation by an edge in $E_2$ and this leads to two not assimilated vertices in every edge of $E_3$, so $E_3$ also does not assimilate the elements. This proves the first condition because we found a solution for $I$ of size $\textup{OPT}(I')$.

Now we show the second condition for the $L$-reduction. If we have a solution for $f(I)$ with cost $c'$, we can find a solution for $I$ with cost $c$, which holds the inequality $|c-\textup{OPT}(I)| \leq \beta |c' - \textup{OPT}(I')|$. First, we show that $\textup{OPT}(I) = \textup{OPT}(I')$. We already know $\textup{OPT}(I) \leq \textup{OPT}(I')$. The proof that $\textup{OPT}(I) \geq \textup{OPT}(I')$ is analogous. A solution $T$ of size $\textup{OPT}(I)$ for Set Cover can get transformed into a solution for MinCore with the same size if we add every $s_i$ to $C$, which is in $T$. This is clearly a core for $I'$ if $T$ is a solution of $I$.
We set $\beta = 1$ and with $\textup{OPT}(I) = \textup{OPT}(I')$ the condition $|c-\textup{OPT}(I)| \leq |c' - \textup{OPT}(I')|$ simplifies to $c = g(c') \leq c'$, because $c \geq \textup{OPT}(I)$ and $c' \geq \textup{OPT}(I')$.
\noindent This is fulfilled by the definition of $g$, because $g$ adds for every element in $C'$ only one element to the set cover $T$. 
\end{proof}

\noindent With the $(1-o(1)) \cdot \log n$ hardness of approximation factor for Set Cover from \cite{dinur2013}, we get directly the following corollary.

\begin{corollary}
\label{corollary:hardness-MinCore}
MinCore cannot be approximated within a ratio $(1-o(1)) \cdot \log n$ unless $\NP{}$ = $\Pcomplexity{}$, where $n = |V(H)|$ and $m \in \textup{poly}(n)$.
\end{corollary}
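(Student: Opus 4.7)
The plan is to combine the $L$-reduction of Theorem \ref{Theorem:reduction_set-cover} with the Dinur--Steurer $(1-o(1))\log n$-inapproximability of Set Cover from \cite{dinur2013}. The feature that makes the combination tight is that Theorem \ref{Theorem:reduction_set-cover} produces an $L$-reduction with the optimal constants $\alpha=\beta=1$; such an $L$-reduction preserves approximation ratios verbatim, which is a one-line consequence of chasing the two defining inequalities of an $L$-reduction (if $c'=\mathcal A(f(I))$ is $\rho\textup{OPT}(I')$-bounded, then $c-\textup{OPT}(I)\le\beta(c'-\textup{OPT}(I'))\le\alpha\beta(\rho-1)\textup{OPT}(I)$).

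First, I would state and apply this composition property: any polynomial-time $\rho$-approximation $\mathcal A$ for MinCore yields, via $g\circ \mathcal A\circ f$, a polynomial-time $\rho$-approximation for Set Cover on the image instances $f(I)$. Next, I would record the size bookkeeping for the reduction: it produces a hypergraph with $n=|V(H)|=2(n_{SC}+m_{SC})$ and $|E(H)|=O(n_{SC}\,m_{SC})$, so $m\in\textup{poly}(n)$ as the statement requires, and $\log n=\Theta(\log n_{SC})$ on the Set Cover families targeted by \cite{dinur2013} (where $m_{SC}$ is polynomial in $n_{SC}$).

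Finally, assume for contradiction that MinCore admits a polynomial-time $(1-\varepsilon)\log n$-approximation for some fixed $\varepsilon>0$; the first step transports this to a $(1-\varepsilon)\log n$-approximation for Set Cover on the reduced instances, which by the size relation of the second step rewrites as a $(1-\varepsilon')\log n_{SC}$-approximation for some $\varepsilon'>0$, contradicting \cite{dinur2013}. The only mildly delicate point---and the main technical obstacle I anticipate---is matching the constants under the logarithms given that $n$ and $n_{SC}$ are polynomially, not linearly, related; this I would handle by invoking the Dinur--Steurer bound on the specific hard instance family output by the reduction, so that the constant factor inside the logarithm is absorbed into the $o(1)$ slack.
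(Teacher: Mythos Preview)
Your proposal is correct and follows essentially the same approach as the paper, which simply states that the corollary follows directly from combining Theorem~\ref{Theorem:reduction_set-cover} with the Dinur--Steurer bound. You spell out the ratio-preservation calculation and the size bookkeeping (in particular $n=\Theta(n_{SC}+m_{SC})$ and $m\in\textup{poly}(n)$) that the paper leaves implicit, but the underlying argument is identical.
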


\begin{remark}
The requirement $m \in \textup{poly}(n)$ is for Set Cover a common assumption and is also needed in the original hardness proof by Lund and Yannakakis and in the subsequent hardness proofs \cite{Nelson2007, Lund1994}.
\end{remark}

\begin{remark}
    In Appendix \ref{app:np-hardness}, we add some technical ideas, such that the hardness of approximation factor holds for 3-uniform hypergraphs. Furthermore, we give a $2^{\log^{1-\epsilon}n}$ hardness of approximation factor, for any $\epsilon>0$, unless $NP=QP$, where $n=\abs{V(G)}$ from the problem MINREP (see Definition \ref{def:minrep}).
\end{remark}

\subsection{Computing minimum radius in general hypergraphs is NP-hard}

Finding a minimum core with the best radius is of course $\NP{}$-hard because for that we first need a minimum core and this is not possible in polynomial time as shown in Corollary \ref{corollary:hardness-MinCore}. Nevertheless, it is not clear if we can calculate the existence of a minimum core with a sufficiently small radius. Now we introduce this problem and show $\NP{}$-hardness.

\begin{problem}[MinCore$_{\text{radius}}$]
Given a hypergraph $H$ and a positive integer $k$. Is there a minimum core with radius $\leq$ $k$?
\end{problem}

 \noindent We show $\NP{}$-hardness of this problem by a reduction from 3-SAT.
The idea for the reduction has some relation to Theorem 3.3 in \cite{keiler2020}. Each clause from the 3-SAT instance gets represented by a clause gadget (see Figure \ref{clause_gadget}). The crucial point is how these clause gadgets get connected to each other. For this, we use exactly the complementary edges from the reduction in \cite{keiler2020}. 
\begin{theorem}
\label{radius theorem}
MinCore$_{\text{radius}}$ is $\NP{}$-hard for $k \geq 4$.
\end{theorem}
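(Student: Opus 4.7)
The plan is to reduce from 3-SAT. Given a 3-CNF formula $\varphi$ with variables $x_1,\ldots,x_n$ and clauses $C_1,\ldots,C_m$, I will construct in polynomial time a hypergraph $H_\varphi$ and threshold $k=4$ such that $\varphi$ is satisfiable if and only if $H_\varphi$ admits a minimum core of radius at most $4$.

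The construction has two layers. First, for each clause $C_j=\ell_{j,1}\vee\ell_{j,2}\vee\ell_{j,3}$ install a clause gadget $G_j$ (as in Figure \ref{clause_gadget}) containing three distinguished literal vertices, one per literal, together with auxiliary vertices and hyperedges engineered so that (i) every minimum core of $G_j$ contains exactly one literal vertex, with the same cardinality regardless of which literal is picked, and (ii) once the chosen literal vertex is in the core, the remaining vertices of $G_j$ are assimilated in a small constant number of rounds (at most two by design). Second, for each variable $x_i$, add connector hyperedges across gadgets, following the pattern of Theorem 3.3 in \cite{keiler2020} but using its complementary edges, so that simultaneously keeping both the $x_i$-literal vertex from one gadget and the $\bar{x}_i$-literal vertex from another in the core either breaks minimum-core size or prevents propagation from finishing within $4$ rounds.

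For the forward direction, given a satisfying assignment, I exhibit a minimum core by selecting in each gadget $G_j$ the literal vertex corresponding to some literal that is true under the assignment, plus each gadget's mandatory auxiliary core vertices. Consistency of the assignment ensures that for every variable $x_i$ only one of the two literal sides is "activated" across all gadgets, so the connector edges propagate without conflict; counting rounds within the gadgets and across the connectors yields a radius of at most $4$. Conversely, suppose $H_\varphi$ has a minimum core of radius $\le 4$. By minimality and the gadget design, every gadget contributes exactly one literal vertex to the core; define $v(x_i)=\textbf{true}$ if the $x_i$-literal vertex appears as a core vertex in some gadget. The radius bound together with the calibration of the connector edges forbids the $x_i$ and $\bar{x}_i$ literal vertices from being simultaneously chosen, so $v$ is well-defined; since the core must cover every clause gadget with a literal vertex, every clause of $\varphi$ has a true literal under $v$, hence $\varphi$ is satisfiable.

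The main obstacle is the joint calibration of the gadget and the connector edges to make $k=4$ the exact threshold that separates satisfiable from unsatisfiable instances. On one hand, the gadget's minimum-core size must be truly independent of which of the three literal vertices is chosen, which requires a symmetric auxiliary structure; on the other hand, the connector edges must propagate in two extra rounds when the literal choices across gadgets are consistent but force a longer extending chain otherwise. The delicate bookkeeping lies in variables appearing in many clauses, where one has to verify that no shortcut through the connector structure allows an inconsistent assignment to still finish in $\le 4$ rounds; handling this is the heart of the reduction's correctness argument.
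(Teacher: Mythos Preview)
Your overall strategy coincides with the paper's: reduce from 3-SAT, install the clause gadget of Figure~\ref{clause_gadget} so that any minimum core picks exactly one literal vertex per clause, and use the ``complementary edges'' idea from \cite{keiler2020} so that inconsistent literal choices delay propagation by one round. However, two concrete pieces are missing, and without them the argument does not close.

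First, your connector layer is underspecified. You organise connectors ``per variable $x_i$'', but the mechanism that actually works in the paper is \emph{per pair of clauses}: for every pair $C_i,C_j$ one introduces a fresh vertex $y_{i,j}$ together with the nine edges $\{w'_{i,p},w'_{j,q},y_{i,j}\}$ for $p,q\in[3]$, \emph{deleting} precisely those edges where $l_{i,p}$ and $l_{j,q}$ are complementary. The point is that the chosen literal in clause $i$ makes exactly one $w'_{i,p}$ available in layer~2, so $y_{i,j}$ reaches layer~3 if and only if the edge $\{w'_{i,p},w'_{j,q},y_{i,j}\}$ for the two chosen indices exists, i.e.\ the two chosen literals are not complementary. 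Without this pairwise structure it is unclear how your connectors detect an inconsistency between \emph{any} two clauses; the worry you flag in your last paragraph (variables appearing in many clauses, possible shortcuts) is exactly what the $y_{i,j}$ vertices resolve. You also need to argue that no $y_{i,j}$ can sit in a minimum core, which the paper does by observing that assimilating a $w'_{i,p}$ does not help cover the clause gadget.

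Second, you prove only $k=4$, whereas the statement is for all $k\ge 4$. The paper handles this by appending a chain $v_1,\dots,v_{k-3}$: $v_1$ sits in a single edge with all $y_{i,j}$, and $\{v_t,v_{t+1}\}$ are size-two edges. If every $y_{i,j}$ is in layer~3 the chain finishes in layer $k$; if some $y_{i,j}$ slips to layer~4 the whole chain shifts by one and the radius becomes $k+1$. You should add this extension (and check that no chain vertex can enter a minimum core).
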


\begin{proof}
Let $\varphi$ = $\cal(X, C)$ be a 3-SAT instance, where ${\cal X} = \lbrace x_1,...,x_n \rbrace$ is the set of variables and ${\cal C} = \lbrace C_1,..., C_m \rbrace$ is the set of clauses and we assume that each clause contains exactly three different literals. For $i \in [n]$ we denote the three literals of $C_i$ with $l_{i,1}, l_{i,2}$ and $l_{i,3}$.
For every clause $C_i$ we add the gadget in Figure \ref{clause_gadget} with nine vertices and edges. Between every pair of clauses $C_i$ and $C_j$ ($i > j$), we add one vertex $y_{i,j}$ with six to nine edges depending on how many complementary literals $C_i$ and $C_j$ have. If $C_i$ and $C_j$ have no complementary literal, we add the nine edges $\lbrace w'_{i,p}, w'_{j,q}, y_{i,j} \rbrace$ ($p,q \in [3]$). If the two clauses have one complementary literal we add eight edges, which are the nine edges $\lbrace w'_{i,p}, w'_{j,q}, y_{i,j} \rbrace$ ($p,q \in [3]$), but without the edge where $\lbrace w'_{i,p}, w'_{j,q}, y_{i,j} \rbrace$, where $l_{i,p}$ and $l_{j,q}$ are the complementary literals. If the two clauses have two complementary literals we add seven edges, which are the nine edges, but without the two edges where $l_{i,p}$ and $l_{j,q}$ are the complementary literals and if the clauses have three complementary literals we add six edges, which are the nine edges without the three edges where $l_{i,p}$ and $l_{j,q}$ contain the complementary literals. \\

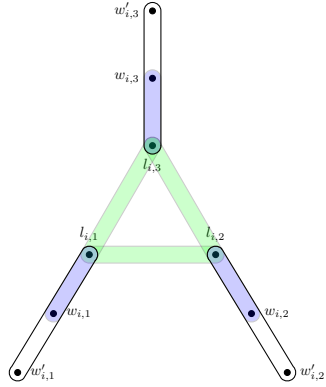
\begin{figure}[h!]
\begin{center}
\begin{tikzpicture}[scale = 0.56, transform shape]
    \coordinate (v0) at (-1.5, 0);
    \coordinate (v1) at (1.5, 0);
    \coordinate (v2) at (0, 2.59);
    \coordinate (v3) at (0, 4.19);
    \coordinate (v4) at (-2.35, -1.4);
    \coordinate (v5) at (2.35, -1.4);

    \coordinate (v6) at (-3.2, -2.8);
    \coordinate (v7) at (3.2, -2.8);
    \coordinate (v8) at (0, 5.79);

    \node [circle, minimum size=0.4cm, line width=0pt] (v0') at (v0) {};
    \filldraw [black] (v0) circle (2pt) node [inner sep=5pt, label=above:$l_{i,1}$] {};
     \node [circle, minimum size=0.4cm, line width=0pt] (v1') at (v1) {};
    \filldraw [black] (v1) circle (2pt) node [inner sep=5pt, label=above:$l_{i,2}$] {};
     \node [circle, minimum size=0.4cm, line width=0pt] (v2') at (v2) {};
    \filldraw [black] (v2) circle (2pt) node [inner sep=5pt, label=below:$l_{i,3}$] {};
     \node [circle, minimum size=0.4cm, line width=0pt] (v3') at (v3) {};
    \filldraw [black] (v3) circle (2pt) node [inner sep=5pt, label=left:$w_{i,3}$] {};
     \node [circle, minimum size=0.4cm, line width=0pt] (v4') at (v4) {};
    \filldraw [black] (v4) circle (2pt) node [inner sep=5pt, label=right:$w_{i,1}$] {};
    \node [circle, minimum size=0.4cm, line width=0pt] (v5') at (v5) {};
    \filldraw [black] (v5) circle (2pt) node [inner sep=5pt, label=right:$w_{i,2}$] {};

    \node [circle, minimum size=0.4cm, line width=0pt] (v6') at (v6) {};
    \filldraw [black] (v6) circle (2pt) node [inner sep=5pt, label=right:$w'_{i,1}$] {};
    \node [circle, minimum size=0.4cm, line width=0pt] (v7') at (v7) {};
    \filldraw [black] (v7) circle (2pt) node [inner sep=5pt, label=right:$w'_{i,2}$] {};
    \node [circle, minimum size=0.4cm, line width=0pt] (v8') at (v8) {};
    \filldraw [black] (v8) circle (2pt) node [inner sep=5pt, label=left:$w'_{i,3}$] {};

        \filldraw [draw=black, fill=green, opacity=0.2]
		(v0'.-90) -- (v1'.-90) arc (-90:90:0.2cm)
		-- (v0'.90) arc (90:270:0.2) -- cycle;
        \filldraw [draw=black, fill=green, opacity=0.2]
		(v0'.-45) -- (v2'.-45) arc (-45:135:0.2cm)
		-- (v0'.135) arc (135:-45:0.2) -- cycle;
        \filldraw [draw=black, fill=green, opacity=0.2]
		(v1'.45) -- (v2'.45) arc (45:215:0.2cm)
		-- (v1'.215) arc (215:45:0.2) -- cycle;
         \filldraw [draw=black, fill=blue, opacity=0.2]
		(v4'.-45) -- (v0'.-45) arc (-45:135:0.2cm)
		-- (v4'.135) arc (135:315:0.2) -- cycle;
        \filldraw [draw=black, fill=blue, opacity=0.2]
		(v5'.45) -- (v1'.45) arc (45:225:0.2cm)
		-- (v5'.-135) arc (-135:45:0.2) -- cycle;
        \filldraw [draw=black, fill=blue, opacity=0.2]
		(v2'.0) -- (v3'.0) arc (0:180:0.2cm)
		-- (v2'.180) arc (-180:0:0.2) -- cycle;

    \draw [draw=black]
            (v2'.0) -- (v8'.0) arc (0:180:0.2cm)
		-- (v2'.-180) arc (-180:0:0.2) -- cycle;
    \draw [draw=black]
		(v0'.-45) -- (v6'.-45) arc (-45:-225:0.2cm)
		-- (v0'.135) arc (135:-45:0.2) -- cycle;
    \draw [draw=black]
		(v1'.45) -- (v7'.45) arc (45:-135:0.2cm)
		-- (v1'.225) arc (225:45:0.2) -- cycle;

\end{tikzpicture}
\end{center}
\caption{Clause gadget for 3-SAT reductions}
\label{clause_gadget}
\end{figure}

\noindent\textbf{Correctness. } The crucial point is to analyze when the $y_{i,j}$ will get assimilated. If $\varphi$ is satisfiable, we can find an interpretation for the Boolean formula that sets at least one literal in each clause to true and no pair of complementary literals both evaluate to true. In the resulting hypergraph we described above, we add in every clause $i$ one of the vertices $l_{i,p}$ to the core, where the corresponding literal $l_{i,p}$ evaluates to true. The same procedure can be done by adding $w_{i,p}$ to the core. This is of course a core, because every clause gadget gets completely covered and with that we can assimilate every $y_{i,j}$. It is also minimal since we need one core element in every clause to cover the clause gadget. If a $y_{i,j}$ is in a core it will not lead to the covering of the corresponding clause gadgets if there is no other vertex in the core, because the assimilation of a $w'_{i,p}$ does not help to cover the whole clause gadget. Therefore, $y_{i,j}$ cannot be in a minimum core. We use the layers from Definition \ref{def:radius} to analyze the radius. In every clause $C_i$ there is exactly one $p \in [3]$, so that $l_{i,p}$ is in the core and the first blue edge $\lbrace w_{i,p}, l_{i,p} \rbrace$ and the green edges $\lbrace l_{i,p}, l_{i,q} \rbrace$ ($q \in [3]$ without $p$) can get added to the first layer. All other edges cannot get added to the first layer, because they have two vertices, which are not in the core or unassimilated. In the second layer, we can add the remaining two blue edges and the first white edge $\lbrace w'_{i,p}, w_{i,p}, l_{i,p} \rbrace$ . After it, we can add the remaining white edges to layer three. The important part is that $C_i$ and $C_j$ have exactly one $w'_{i,p}$ respectively $w'_{j,q}$ in a layer two edge. These vertices correspond to the literals $l_{i,p}$ respectively $l_{j,q}$, which evaluated to true in the Boolean formula. These literals are not complementary because we began with an interpretation of the clause, where no pair of complementary literals both evaluate to true. So, the edge $\lbrace w'_{i,p}, w'_{j,q}, y_{i,j} \rbrace$ exists and this edge can get added to layer three. The other edges which contain $y_{i,j}$ are non-extending and get added to layer four. If $\varphi$ is not satisfiable there must be for every core between one pair of clauses $C_i$ and $C_j$ a contradiction, where a pair of vertices, which corresponds to complementary literals both get added to the core. Then the edge $\lbrace w'_{i,p}, w'_{j,q}, y_{i,j} \rbrace$ does not exist because we removed these edges in the reduction. Then $y_{i,j}$ is not in a layer three edge. $y_{i,j}$ will then get assimilated in an edge from layer four.

\noindent Now we add in the reduction the chain $v_1,...,v_{k-3}$. The vertex $v_1$ is in an edge with every $y_{i,j}$. The vertices $v_i$ $(1 < i < k-4)$ are in one edge with $v_{i-1}$ and in one edge with $v_{i+1}$. If  $\varphi$ is satisfiable all $y_{i,j}$ are added by a layer three edge and we can add the edge that contains $v_1$ and all $y_{i,j}$ in layer four. Then we can cover the whole chain till $\lbrace v_{k-4}, v_{k-3} \rbrace$ gets added to layer $k$. If $\varphi$ is not satisfiable at least one $y_{i,j}$ is in layer four. So, the covering of the chain shifts by one layer. The edge $\lbrace v_{k-4}, v_{k-3} \rbrace$ is then in layer $k+1$. The radius of this chain cannot be shortened. This would require a $y_{i,j}$ or a vertex from the chain in the core, but this would mean that the core would no longer be minimal. 
This proves the correctness of the reduction.

\noindent\textbf{Runtime. }If $n$ is the number of clauses we add $9\cdot n$ vertices for the $n$ clause gadgets and $O(n^2)$ vertices for the connection vertices $y_{i,j}$. Moreover, we add $k-3$ vertices for the chain. The number of edges is asymptotically the same. $9 \cdot n$ edges for the $n$ clause gadgets and $O(n^2)$ edges for the connection from $y_{i,j}$ to the clause gadgets and $k-3$ edges for the chain. Whether two literals are complementary can be checked in time $O(1)$. Thus, it is a polynomial time reduction.
\end{proof}

\bibliographystyle{plain}
\bibliography{references.bib}
\appendix

\section{Further NP-hardness proofs for MinCore}
\label{app:np-hardness}

\noindent Now we slightly change this reduction from the proof of Theorem \ref{Theorem:reduction_set-cover} to get the same result for 3-uniform hypergraphs. The approach will be similar, but we add technical details to get a 3-uniform hypergraph in the reduction.  For this we introduce the triangulation gadget.

\begin{definition}
    We define an \emph{$m$-leaf triangulation gadget} as follows. Let $T$ be a binary tree with $m$ leaves where no vertex has one child. The $m$-leaf triangulation gadget $\Delta(T)$ is a 3-uniform hypergraph $G = (V,E)$ whose vertices are those of $T$, and for every non-leaf vertex $v$ of $T$, we add the edge $\lbrace v, c_1(v),  c_2(v)\rbrace$, where $c_1(v)$ is the left child and $c_2(v)$ the right child from $v$.
\end{definition}

\begin{lemma}\label{lemma:t-core}
Let $\Delta(T)$ be a triangulation gadget with root $r$ and leaves $v_1,\dots v_m$.
Any $C\subseteq\{r,v_1,\dots,v_m\}$ with $\abs C = 1$ is a core of $\Delta(T)$.
\end{lemma}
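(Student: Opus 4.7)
My plan is structural induction on the binary tree $T$. For the base case, when $T$ consists of a single vertex (so $m=1$ and $r=v_1$), the hypergraph $\Delta(T)$ has no hyperedges; hence the propagation algorithm of Definition~\ref{def:core} trivially outputs~$1$ on any singleton $C$, which settles this case.

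For the inductive step with $m\ge 2$, I would split into the two cases $C=\{r\}$ and $C=\{v_i\}$ and track the propagation through the hyperedges $\{v, c_1(v), c_2(v)\}$ associated with each non-leaf vertex $v$ of $T$. The defining feature of the gadget is that each such hyperedge links a parent with its two children, so whenever two of these three vertices have been put in $C$ (initially or by assimilation), the third becomes assimilatable. Writing $T = T_1 \cup T_2 \cup \{r\}$ with subtrees rooted at the two children of $r$, I would try to apply the inductive hypothesis to $\Delta(T_1)$ and $\Delta(T_2)$ and then handle the single root hyperedge $\{r, c_1(r), c_2(r)\}$ separately, identifying an ordering of the $m-1$ hyperedges under which each satisfies the $\abs{e\cap C}=\abs{e}-1$ condition at the appropriate step.

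The main obstacle I foresee is the ``kick-start'' problem: with only one vertex in $C$, no hyperedge initially has two of its three vertices in $C$, so line~$2(a)$ of the propagation algorithm cannot fire at the very first step. A straightforward induction therefore cannot proceed without strengthening either the inductive hypothesis or the statement, for instance by tracking which vertices of $\{r, v_1, \dots, v_m\}$ have been assimilated by the time we cross from one subtree into the root hyperedge. Before committing to a formal argument, I would first verify the claim on the smallest non-trivial instance ($m=2$, so $\Delta(T)$ is a single hyperedge $\{r,v_1,v_2\}$) to rule out a misread of the hypothesis; if that base case genuinely fails, I would revisit whether the intended quantifier is on $\abs{C}$ or on $\abs{\{r,v_1,\dots,v_m\}\setminus C}$ (so that $C$ omits exactly one special vertex), since that alternative reading makes the propagation succeed cleanly via the bottom-up/top-down symmetry of the gadget. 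Either way, once the correct formulation is pinned down, the argument should reduce to a careful induction on the depth of $T$ using the parent-children identity of each hyperedge.
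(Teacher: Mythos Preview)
Your instinct is correct: the lemma as printed is mis-stated. With $\abs C=1$ and $m\ge 2$ no hyperedge of $\Delta(T)$ can ever satisfy the $\abs{e\cap C}=\abs e-1$ condition, so the propagation stalls immediately; your $m=2$ check already witnesses this. The intended hypothesis---and the one the paper's own proof actually establishes---is the complementary one you guessed: $C\subseteq\{r,v_1,\dots,v_m\}$ with $\abs{\{r,v_1,\dots,v_m\}\setminus C}=1$, i.e.\ $C$ consists of all external nodes except one. This is also how the lemma is invoked downstream (``once all but one node of $e$ are assimilated, $T$ can be covered completely'').

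Under that corrected reading, your plan coincides with the paper's argument. The paper inducts on $m$, splits into the case $C=\{v_1,\dots,v_m\}$ (all leaves, root missing), where bottom-up assimilation of sibling pairs reaches $r$, and the case $r\in C$ (one leaf missing), where one subtree $T_1$ has all its leaves in $C$, so $c_1(r)$ is assimilated by induction, the root edge then assimilates $c_2(r)$, and induction on $T_2$ finishes. This is exactly the ``handle the full subtree, cross the root hyperedge, then recurse into the other subtree'' structure you outlined, so once you commit to the corrected quantifier there is nothing further to add.
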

\begin{proof}
    We prove the statement by induction in $m$.
    The Induction basis $m=1$ is trivial.
    Now let $m>1$.
    $C=\{v_1,\dots,v_m\}$ is a core because the assimilation of any pair of siblings $c_1(v),c_2(v)$ lead to the assimilation of $v$. If $r\in C$, $C$ contains all leaves of one of the subtrees $T_1,T_2$ rooted in $c_1(r),c_2(r)$ (w.l.o.g. $T_1$).
    Then $c_1(r)\in A_T(C_1 := C\cap V(T_1))$ and therefore $c_2(r)\in A_T(C_1\cup \{r\})$.
    By the inductive hypothesis, $C_2 := (C\cap V(T_2))\cup \{c_2(r)\}$ is a core of $T_2$.
    Thus, $C$ is a core of $T$.
\end{proof}

\noindent The next lemma shows that any hyperedge can be replaced by a triangulation gadget without changing the minimum core size. We use this to show that MinCore remains $\NP{}$-hard when restricted to 3-uniform hypergraphs.

\begin{lemma}\label{lemma:triangulate-edge}
    Let $G=(V,E)$ a hypergraph with $e\in E$ with $\abs e = m+1$.
    Let $G'=(V'\supseteq V, E')$ be hypergraph in which $e$ is replaced by $\Delta(T)$, where $T$ is a binary tree with external nodes $e$.
    $G$ and $G'$ have the same minimum core size.
\end{lemma}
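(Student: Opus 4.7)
I would prove the two inequalities $\mu(G') \leq \mu(G)$ and $\mu(G) \leq \mu(G')$ separately, where $\mu(\cdot)$ denotes minimum core size, and conclude equality. The underlying intuition is that within $\Delta(T)$, the root $r$ and the $m+1$ leaves of $T$ are interchangeable, by Lemma \ref{lemma:t-core}, so the gadget behaves externally like a single hyperedge on $e$.

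For $\mu(G') \leq \mu(G)$, I take a minimum core $C$ of $G$ and construct a core $C'$ of $G'$ of the same size; concretely, $C'$ is $C$ with at most one element swapped for a well-chosen vertex of $T$ (typically the root $r$). Running the propagation algorithm on $G$ from $C$ activates all of $V$, and I simulate this propagation step by step in $G'$. Every edge other than $e$ behaves identically in both graphs. When the simulation would use edge $e$ in $G$, meaning that $m$ of the $m+1$ vertices of $e$ are already activated, I instead propagate through the triangulation gadget: by Lemma \ref{lemma:t-core}, the set of leaves of $T$ that are activated at that moment is a sufficient seed for the propagation inside $\Delta(T)$, so the remaining leaf together with all internal vertices get activated via the gadget's edges. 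Once this local propagation inside $\Delta(T)$ is finished, the global simulation of $G$'s propagation resumes, and every vertex of $V'$ is eventually activated.

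For $\mu(G) \leq \mu(G')$, I take a minimum core $C'$ of $G'$ and, after a normalization step, set $C := C' \cap V$. The normalization repeatedly replaces any internal vertex of $T$ lying in $C'$ with a leaf of $T$ not currently in $C'$, and similarly swaps $r$ with a leaf if necessary; Lemma \ref{lemma:t-core} guarantees that after each swap the new set is still a core of $\Delta(T)$. Since the edges of $G'$ outside $\Delta(T)$ are exactly the edges of $G$ other than $e$ and involve only vertices in $V$, the swaps do not affect the propagation outside the gadget, so the normalized set is still a core of $G'$ of the same size. Then $|C| \leq |C'|$, and to check that $C$ is a core of $G$ I run the propagation from $C$ in $G$, noting that the only edge requiring a different treatment is $e$, which becomes extending as soon as $m$ of its vertices are activated. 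This threshold is reached in $G$ because the propagation of $C'$ in $G'$ must also activate at least $m$ leaves of $T$ at some step, as otherwise the internal vertices of $T$ could never be activated.

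The main obstacle is Direction 1: the propagation inside $\Delta(T)$ may proceed in a different order than the external propagation in $G$, and I need to verify carefully via Lemma \ref{lemma:t-core} that the specific subset of leaves activated at the moment $e$ becomes extending in $G$ is always a valid seed for the gadget, possibly after including $r$ in $C'$. Direction 2 is more routine once the normalization is justified, but one must also be careful that removing $r$ from $C'$ while adding a leaf does not decrease the size of $C'$ unexpectedly, which is handled by treating the case where all leaves of $T$ already lie in $C'$ separately.
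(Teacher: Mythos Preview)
Your plan follows the paper's two-inequality structure, but you have the difficulty backwards, and your Direction~2 has a real gap.

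\emph{Direction 1.} You call this the ``main obstacle'' and propose possibly swapping an element of $C$ for the root $r$. The paper dispatches this direction in a single sentence: take $C':=C$; once all but one vertex of $e$ is assimilated in $G'$, Lemma~\ref{lemma:t-core} says this set is already a core of $\Delta(T)$, so the gadget covers itself and assimilates the remaining vertex of $e$. No swap is needed, and your worry that ``the specific subset of leaves activated at the moment $e$ becomes extending'' might not be a valid seed is exactly what Lemma~\ref{lemma:t-core} rules out.

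\emph{Direction 2.} This is the substantive direction, and your normalization step is not justified. You claim Lemma~\ref{lemma:t-core} ``guarantees that after each swap the new set is still a core of $\Delta(T)$'', but that lemma only concerns subsets of the root and leaves; it says nothing about sets containing \emph{internal} vertices, so it cannot justify removing an internal $v$. More importantly, you need the swapped set to remain a core of $G'$, not merely of $\Delta(T)$, and replacing $v$ by an \emph{arbitrary} leaf $u\notin C'$ need not preserve this: the propagation inside $\Delta(T)$ from $(C'\setminus\{v\})\cup\{u\}$ may simply stall without ever reaching $v$. The paper supplies the missing idea: set $C_0:=C'\setminus\{v\}$ and $W:=A_{G'}(C_0)$; since adding $v$ to $W$ must eventually assimilate everything, it in particular covers a path $e_1,\dots,e_k$ of gadget edges reaching some external node $u\in e$, and one then observes that adding $u$ to $W$ instead covers this same path in the reverse order $e_k,\dots,e_1$, assimilating $v$. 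Hence $C_0\cup\{u\}$ is a core of $G'$. The specific choice of $u$ via this path-reversal argument is the key step your plan omits.
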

\begin{proof}
    If $C$ is a core of $G$, then $C$ is also a core of $G'$, because once all but one node of $e$ in $G'$ are assimilated, $T$ can be covered completely (Lemma \ref{lemma:t-core}) and thus also the last node of $e$.
    \noindent If $C'$ is a core of $G'$, we construct a core $C\subseteq V$ of $G$ with $\abs C = \abs{C'}$.
    Let $v\in C'\setminus V$ (if $v$ does not exist, we are done).
    Let $C_0 := C'\setminus\{v\}$ and $W := A_{G'}(C_0)$.
    Now consider the situation where $W$ has already been assimilated and then $v$ is added to the assimilated nodes.
    This must cause the assimilation of all remaining nodes.
    If $C_0$ is not a core, then there exists an edge $v\in e_1 \in E(T)$ such that $\abs{e_1 \cap W} = 1$.
    Adding $v$ allows covering one or two disjoint paths of hyperedges inside $E(T)$, one of which assimilates some node $u\in e$.
    Let $v \in e_1, \dots, e_k \ni u$ be this path of edges.
    Observe that adding $u$ to $W$ instead of $v$ allows covering this path in the opposite direction (i.e., $e_k,\dots,e_1$), leading to the assimilation of $v$.
    Hence, $C'' := C_0 \cup \{u\}$ is a core of $G'$.
    We can repeat replacing internal nodes with external nodes until we obtain a core $C \subseteq V$ of $G'$, which must also be a core of $G$ since the assimilation of less than $m$ external nodes cannot lead to the assimilation of further external nodes.
\end{proof}

\begin{theorem}
\label{theorem:3-uniform_hardness}
   Set Cover $L$-reduces to 3-uniform MinCore.
\end{theorem}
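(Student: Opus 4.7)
The plan is to adapt the reduction from Theorem~\ref{Theorem:reduction_set-cover}, whose output hypergraph contains edges of three sizes: $2$ (in $E_1$), $3$ (in $E_2$, already 3-uniform), and $|U|+1$ (in $E_3$). To produce a 3-uniform instance, I would first apply Lemma~\ref{lemma:triangulate-edge} to each edge in $E_3$, replacing it by a 3-uniform triangulation gadget; by the lemma, the minimum core size is preserved. This is a polynomial-time modification.

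Since Lemma~\ref{lemma:triangulate-edge} requires $|e|\ge 3$, the size-2 edges in $E_1$ need a separate gadget. I would introduce a single shared auxiliary vertex $x$ and replace each $\{s_i,s'_i\}$ by the 3-uniform edge $\{s_i,s'_i,x\}$. I would then show that the resulting hypergraph has minimum core size exactly $|T^*|+1$, where $|T^*|$ is a minimum set cover of the input instance. The upper bound is witnessed by the core $\{x\}\cup\{s_i:S_i\in T^*\}$: the edges $\{s_i,s'_i,x\}$ propagate $s'_i$, the $E_2$ edges then assimilate all of $V_2$ (since $T^*$ covers $U$), and the triangulated $E_3$ gadgets assimilate the remaining $s_j,s'_j$ with $S_j\notin T^*$. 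For the lower bound, any core $C$ either contains $x$ or propagates $x$ through some $\{s_{i_0},s'_{i_0},x\}$, which forces both $s_{i_0},s'_{i_0}$ to be assimilated; in the latter case the swap $(C\setminus\{s'_{i_0}\})\cup\{x\}$ yields an equal-size core containing $x$. Projecting the remaining $|C|-1$ vertices to $V_1$ as in Theorem~\ref{Theorem:reduction_set-cover} then produces a set cover of size at most $|C|-1$, so $|C|\ge|T^*|+1$.

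The $L$-reduction constants are $\alpha=2$, since $\mathrm{OPT}(I')=\mathrm{OPT}(I)+1\le 2\cdot\mathrm{OPT}(I)$ (as $\mathrm{OPT}(I)\ge 1$), and $\beta=1$: the recovery algorithm $g$ first ensures $x\in C'$ via the swap above (without increasing cost), then projects $C'\setminus\{x\}$ onto $V_1$ as in Theorem~\ref{Theorem:reduction_set-cover}, producing a set cover of cost $c\le c'-1$; thus $|c-\mathrm{OPT}(I)|\le c'-(|T^*|+1)=|c'-\mathrm{OPT}(I')|$.

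The main obstacle I expect is the lower bound on the minimum core size. The triangulated $E_3$ gadgets introduce many internal vertices, and one must carefully rule out ``cheap'' cores that exploit these internals, or partial subsets of $V_2$, to propagate $x$ without paying for a full set cover's worth of $s_i$-type seeds. The argument should leverage the essentially one-way nature of the triangulation gadgets together with the circular dependency between $V_2$ being assimilated (through $E_2$) and the $s_i,s'_i$ being assimilated (through $E_3$), showing that neither side can bootstrap without at least $|T^*|$ seeds plus the extra vertex for $x$.
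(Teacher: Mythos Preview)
Your approach is essentially the paper's: add a single shared auxiliary vertex to make each $E_1$ edge 3-uniform, replace each $E_3$ edge by a triangulation gadget via Lemma~\ref{lemma:triangulate-edge}, and run the $L$-reduction with $\alpha=2$, $\beta=1$ and $\mathrm{OPT}(I')=\mathrm{OPT}(I)+1$.

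The one step that does not go through as written is your swap $(C\setminus\{s'_{i_0}\})\cup\{x\}$: for this to preserve the size you need $s'_{i_0}\in C$, not merely assimilated, and nothing forces that. For instance, both $s_{i_0}$ and $s'_{i_0}$ could be assimilated via the triangulated $E_3$ gadgets once all of $V_2$ has been assimilated from other seeds, with neither in $C$. The paper closes this by first invoking Lemma~\ref{lemma:triangulate-edge} to push $C$ into $V_1\cup V_2$, and then doing a short case analysis on which \emph{pairs} of core vertices share an $E_1$- or $E_2$-edge (e.g.\ $\{u_j,u'_j\}\subseteq C$, or $\{s_i,s'_i\}\subseteq C$, or $\{s_i,u_j\}$ with $u_j\in S_i$); in each case an explicit swap puts $x$ into $C$ without increasing the size, and if no such pair exists then every size-3 edge meets $C$ in at most one vertex, so propagation cannot start. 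Your final paragraph correctly identifies this as the crux, but the fix is this pairwise case split rather than the single swap you propose.
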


\begin{proof}
We give a $L$-reduction. We get an instance $I = (U,S)$ from Set Cover and $f$ will construct an instance from MinCore $I' = (H = (V, E))$, defined as follows.
\begin{itemize}
    \item $V_1$ represents the elements from $S$ and we add an extra vertex. $V_1 := \{s_i, s'_i \ | \ s_i \in S\} \cup \{v\}$
    \item $V_2$ represents the elements from $U$. $V_2 := \{u_j, u'_j \ | \ u_j \in U\}$.
    \item $E_1$ connects the $s_i$ among themselves and with $v$. $E_1 := \{ \{s_i, s'_i, v\} \ | \ s_i \in S\}$.
    \item $E_2$ connects $s_i$ to the matching $u_j$ and $u_j'$. $E_2 := \{ \{ s_i, s'_i, u_j \}, \{ s_i, s'_i, u'_j \} \ | \ s_i \in S, u_j \in U\}$.
    \item $E_3$ connects the elements from $U$ back to $S$. For every $s_i \in V_1$ we add two triangulation gadgets. In the first triangulation gadget, the elements from $V_2$ are the leaves and $s_i$ is the root. In the second triangulation gadget, the elements from $V_2$ are the leaves and $s'_i$ is the root.
\end{itemize}

\noindent First, we want to show that we can assume for a core $C$ for $f((U,S))$ that $C \subseteq V_1$. If $C$ contains an element from the extra vertices in the triangulation gadget we know from Lemma \ref{lemma:triangulate-edge} that we can transform it to an optimal solution that covers edges only in $V_1$ and $V_2$. If $v \in C$ we are finished because $C$ contains vertices only from $V_1 \cup V_2$, which can be transformed with the same arguments as in \ref{Theorem:reduction_set-cover} to an optimal solution in $V_1$. We can also assume that the extra vertex $v$ is in $C$. If $v$ is not $C$ and a $u_j$ and $u'_j$ is in $C$, we can delete $u_j$ and $u'_j$ from $C$ and add instead $v$ and an adjacent $s_i$ of $u_j$ to $C$. This is clearly still a core of the same size. If $s_i$ and $s'_i$ is in $C$ we can delete $s'_i$ from $C$ and add $v$ to $C$.
If there is an edge $\lbrace u_j, s_i, s'_i \rbrace$, where $s_i$ or $s'_i$ and $u_j$ are in $C$, we can delete $u_j$ from $C$ and add $v$ to $C$. Otherwise, $C$ cannot be a core, because every edge has maximally one vertex in the core and every edge has size three. Therefore, we can assume $v \in C$ and $C \subseteq V_1$ for every core.
Now we have a core $C$ for $f((U,S))$. The algorithm $g$ builds a set cover from this core $C$ by adding $s_i$ to $T$ if $s_i$ or $s'_i$ is in $C$.

\noindent\textbf{Correctness. }The proof is very similar to the Set Cover reduction in \ref{Theorem:reduction_set-cover}, so we only focus on the extra steps.
We start with the proof of the first condition from the $L$-Reduction. For every Set Cover instance $I$ and $I' = f(I)$ it holds $\textup{OPT}(I) \leq \alpha \textup{OPT}(I')$. Because of the explanation above ($C \in V_1$), a core from MinCore directly corresponds to a solution for the Set Cover instance. The only problem is that we have now one extra vertex $v$ in $V_1$. Because of this we get $\textup{OPT}(I) + 1 = \textup{OPT}(I')$ instead of $\textup{OPT}(I) = \textup{OPT}(I')$. We know $\textup{OPT}(I') \geq 1$ because the core cannot be empty, so we get $\textup{OPT}(I) + 1 \leq 2 \cdot \textup{OPT}(I') \Rightarrow \textup{OPT}(I) \leq 2 \cdot \textup{OPT}(I')$. Thus, with $\alpha$ = 2, the condition is proved.

Now we prove the second condition, if we have a solution for $I'$ with cost $c'$, we can find a solution for $I$ with cost $g(c') = c$ satisfying the inequality $|c-\textup{OPT}(I)| \leq \beta |c' - \textup{OPT}(I')|$. The cost is the cardinality of the solution set. We set $\beta = 1$ and we know $\textup{OPT}(I) = \textup{OPT}(I') - 1$, so the condition $|c-\textup{OPT}(I)| \leq \beta |c' - \textup{OPT}(I')|$ can get simplified to $c+1 = g(c') + 1 \leq c'$, because $c \geq \textup{OPT}(I)$ and $c' \geq \textup{OPT}(I')$. $g$ adds at most one element to the solution for $I$ for every element in the solution for $I'$. Therefore, we know $c \leq c'$. Furthermore, we showed that we can assume that the element $v$ is in the solution for $I'$, but $g$ does not add $v$ to the solution for $I$. This proves the inequality $c+1 \leq c'$.
\end{proof}

\begin{corollary}
 3-uniform MinCore cannot be approximated within a ratio $(1-o(1)) \cdot \log n$, unless $\NP{}$ = $\Pcomplexity{}$, where $n = |V(H)|$ and $m \in \textup{poly}(n)$.
\end{corollary}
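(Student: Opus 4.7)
The plan is to combine the $L$-reduction of Theorem \ref{theorem:3-uniform_hardness} with the $(1-o(1))\ln N$ inapproximability of Set Cover of Dinur and Steurer \cite{dinur2013}. First I would check that the reduction $f$ is size-preserving in the relevant sense: the produced hypergraph $H$ has $2|S|+2|U|+1$ ``main'' vertices together with the internal nodes of the $2|S|$ triangulation gadgets over $V_2$, giving $n = |V(H)| \in \mathrm{poly}(|U|,|S|)$ and $m = |E(H)|\in \mathrm{poly}(n)$. Since \cite{dinur2013} already requires $|S|\in \mathrm{poly}(|U|)$, we get $\log n = \Theta(\log |U|)$, so a ratio expressed in terms of $\log n$ is equivalent up to constants to one in terms of the Set Cover universe size.

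Second, I would push the Set Cover lower bound through the reduction. For any supposed $\rho$-approximation algorithm $A'$ for 3-uniform MinCore, the composition $g\circ A'\circ f$ produces a set cover of cost $c$ satisfying $|c - \mathrm{OPT}(I)|\le \beta|c' - \mathrm{OPT}(I')|$ with $\alpha=2$, $\beta=1$. The cleaner way to retain a sharp $(1-o(1))$ constant is to exploit the structural fact established inside Theorem \ref{theorem:3-uniform_hardness}: any minimum core of $H$ can be assumed to contain $v$ and to lie otherwise in $V_1\setminus\{v\}$, hence $\mathrm{OPT}(I') = \mathrm{OPT}(I)+1$ \emph{exactly}, and $g$ maps any core $C'$ with $v\in C'\subseteq V_1$ to a set cover of size $|C'|-1$. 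Consequently, a $(1-o(1))\log n$-approximation for 3-uniform MinCore yields a set cover of size at most $(1-o(1))\log n\cdot(\mathrm{OPT}(I)+1) - 1$, which on Dinur–Steurer instances (where $\mathrm{OPT}(I)\to\infty$) is a $(1-o(1))\log|U|$-approximation for Set Cover, contradicting \cite{dinur2013} unless $\NP{} = \Pcomplexity{}$.

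The main obstacle I expect is precisely this last bookkeeping: naively applying the $L$-reduction with $\alpha\beta = 2$ only gives $\Omega(\log n)$ hardness and loses the sharp $(1-o(1))$ constant. The argument above handles this by observing that the gap between $\mathrm{OPT}(I)$ and $\mathrm{OPT}(I')$ is additive ($+1$), not multiplicative, on the hard instances where the Set Cover optimum diverges. A secondary technical point is that the gadget-level cleanup used in Theorem \ref{theorem:3-uniform_hardness}, which reroutes any near-minimum core that touches $V_2$ or the triangulation interiors into $V_1\cup\{v\}$, must be verified to be cost-non-increasing (not merely optimum-preserving); inspecting each rewriting step in the proof of Lemma \ref{lemma:triangulate-edge} and in the $v$-insertion swap shows this is indeed the case, so approximation ratios carry through unchanged.
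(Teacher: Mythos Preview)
Your proposal is correct and follows the same route as the paper, which treats this as an immediate corollary of Theorem~\ref{theorem:3-uniform_hardness} together with the Dinur--Steurer bound \cite{dinur2013} and gives no further argument. Your treatment is in fact more careful than the paper's: you correctly observe that the naive $L$-reduction inequality with $\alpha\beta=2$ would lose the sharp leading constant and repair this via the exact additive shift $\mathrm{OPT}(I')=\mathrm{OPT}(I)+1$, and you verify that the core-normalization swaps in Theorem~\ref{theorem:3-uniform_hardness} and Lemma~\ref{lemma:triangulate-edge} are size-non-increasing on arbitrary (not just optimal) cores, which is exactly what is needed for approximation ratios to transfer.
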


\noindent Now, we give a second hardness of approximation reduction from the Problem MINREP. This gives a slightly different hardness of approximation factor for MinCore.

\begin{definition}[MINREP \cite{Kortsarz2002}]
    \label{def:minrep}
    Let $G=(A,B,E)$ be a bipartite graph with a partitioning of $A$ and $B$ into subsets $A=\bigcup_{i=1}^{q_A}$, $B=\bigcup_{j=1}^{q_A}$, where all $A_i$ have size $m_A$ and all $B_i$ have size $m_B$.
    $G$ induces a \emph{super-graph} $\cH=(\cA,\cB,\cE)$ with \emph{super-nodes} $\cA=\{A_1,\dots,A_{q_A}\}$, $\cB=\{B_1,\dots,B_{q_B}\}$, and \emph{super-edges} $\E=\{A_iB_j \mid \exists a\in A_i, b\in B_j: ab\in E\}$.
    We say a super-edge $A_iB_j\in \cE$ is covered by $ab\in E$ if $a\in A_i$ and $b\in B_j$.
    The MINREP problem is to pick a minimum number of nodes $A'\subseteq A,B'\subseteq B$ such that all super-edges are covered.
\end{definition}

\begin{theorem}[\cite{Kortsarz2002}]
    MINREP cannot be approximated within ratio $2^{\log^{1-\epsilon}n}$, for any $\epsilon>0$, unless $NP=QP$, where $n=\abs{V(G)}$.
\end{theorem}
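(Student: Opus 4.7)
The statement is attributed to Kortsarz 2002, so the natural route is to import his proof; my plan is to reconstruct its high-level structure, which is the standard PCP-plus-parallel-repetition recipe for covering/labeling problems. The end target is Label Cover, a two-prover constraint satisfaction problem that is essentially isomorphic to MINREP in the following dictionary: the super-nodes of MINREP correspond to the two provers' question sets, each individual node inside a super-node corresponds to a possible answer (label), super-edges correspond to the question pairs the verifier sends, and the edges of $G$ between $A_i$ and $B_j$ encode the answer pairs the verifier accepts on questions $(i,j)$. A minimum set of nodes covering every super-edge of MINREP then corresponds exactly to a (multi-)labeling of minimum total size that satisfies every constraint of the Label Cover instance, and this lets the hardness of Label Cover transfer directly to MINREP with only mild bookkeeping.

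First, I would start from a gap version of 3-SAT, where distinguishing satisfiable instances from those in which at most a $1-\delta$ fraction of clauses is satisfiable is $\NP{}$-hard by the PCP theorem. This yields a two-prover one-round interactive proof system for $\NP{}$ with perfect completeness and constant soundness $s<1$. Next, I would iterate Raz's parallel repetition theorem $k$ times, driving the soundness from $s$ down to $s^{\Omega(k)}$ while blowing the instance size up to $n^{O(k)}$. Finally, the $k$-fold repeated game is converted into a MINREP instance via the dictionary above, with the question sets of each prover partitioned into super-nodes.

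The bound promised in the statement is obtained by tuning $k$. Choosing $k=\Theta(\log^{1-\epsilon/2} n)$ makes the soundness drop to $2^{-\Theta(\log^{1-\epsilon/2} n)}$, so the multiplicative gap between the completeness value (one covering node per super-node) and the soundness value (many nodes needed per super-node to cover the same fraction of super-edges) becomes $2^{\Theta(\log^{1-\epsilon/2} n)}$. Re-expressing this in terms of the new instance size $N=n^{O(k)}=2^{O(\log^{2-\epsilon/2} n)}$ yields a gap of $2^{\log^{1-\epsilon} N}$ for every fixed $\epsilon>0$. Because $k$ is polylogarithmic, the reduction is quasi-polynomial rather than polynomial, which is precisely why the conclusion is $\NP{}=\mathsf{QP}$ rather than $\NP{}=\Pcomplexity{}$.

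The main obstacle is the joint accounting of three interacting parameters: the completeness/soundness gap of the base PCP, the exponent $k$ of parallel repetition, and the instance size blowup. Each pushes in a different direction, and one must pick $k$ polylogarithmic (large enough for the soundness drop to matter, small enough to keep the blowup quasi-polynomial) to land exactly at $2^{\log^{1-\epsilon} n}$; a constant $k$ would only give a constant inapproximability factor, whereas $k$ polynomial would push the reduction out of $\mathsf{QP}$ altogether. A second delicate ingredient is the sharp exponent $\Omega(k)$ in Raz's parallel repetition theorem, since any weakening of this exponent would immediately degrade the final gap; I would rely on the strengthened forms (e.g.\ Holenstein's improvement) to make the calculation above go through cleanly.
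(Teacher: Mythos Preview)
The paper does not prove this theorem at all; it is stated with a citation to Kortsarz and used as a black box for the subsequent reduction to MinCore. So there is no ``paper's own proof'' to compare against---your sketch is a reconstruction of the original argument, and the overall architecture (PCP theorem $\to$ two-prover game $\to$ Raz parallel repetition $\to$ Label Cover/MINREP) is indeed the standard and correct route.

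That said, your parameter choice is off. With $k=\Theta(\log^{1-\epsilon/2} n)$ you get $\log N = \Theta(k\log n)=\Theta(\log^{2-\epsilon/2} n)$ and a gap of $2^{\Theta(k)}=2^{\Theta(\log^{1-\epsilon/2} n)}$; but $(\log N)^{1-\epsilon}=\Theta\bigl(\log^{(2-\epsilon/2)(1-\epsilon)} n\bigr)$, and for small $\epsilon$ the exponent $(2-\epsilon/2)(1-\epsilon)\approx 2-\tfrac{5}{2}\epsilon$ exceeds $1-\epsilon/2$, so your gap is \emph{smaller} than $2^{\log^{1-\epsilon} N}$, not larger. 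The correct calibration is $k=\Theta\bigl((\log n)^{c}\bigr)$ with $c\ge 1/\epsilon-1$: then $\log N=\Theta(\log^{c+1} n)$ and the gap $2^{\Theta(\log^{c} n)}$ matches $2^{\log^{c/(c+1)} N}=2^{\log^{1-1/(c+1)} N}\ge 2^{\log^{1-\epsilon} N}$. This keeps the reduction quasi-polynomial (since $k$ is still polylogarithmic), so the $\NP{}=\mathsf{QP}$ conclusion survives. You should also make explicit the step converting soundness $\delta$ of the repeated game into a MINREP gap: a multi-labeling of total size $t\cdot(q_A+q_B)$ yields, by picking one label uniformly per super-node, a single labeling of value at least $1/t^2$, hence $t\ge 1/\sqrt{\delta}$; this square root is absorbed into the $\Theta(k)$ exponent but deserves a sentence.
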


\begin{definition} 
We define the \textit{AND-gadget} $\xAND(U,v) = \{V, E\}$ as a hypergraph with $V= U\cup\{x_1,x_2,v\}$ and $E=\{U + x_1, U+x_2, \{x_1,x_2,v\}\}$.
We denote $U$ as the set of input nodes, $v$ as output node, and $x_1,x_2$ as inner nodes.
\end{definition}

\begin{lemma}\label{l:and}
    Let $G=(V,E)$ be a hypergraph with subgraph $\xAND(U,v)$, such that $x_1,x_2$ are only in edges of $\xAND(U,v)$.
    There exists a minimum core $C\subseteq V\setminus \{x_1,x_2\}$.
\end{lemma}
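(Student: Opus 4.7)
The plan is to prove the statement by induction on $|C\cap\{x_1,x_2\}|$ over minimum cores $C$: starting from an arbitrary minimum core, I repeatedly exchange an inner vertex for an outer vertex while preserving both minimality and the status of the other inner vertex. Concretely, I will show that if $C$ is a minimum core with $x_1\in C$ (the case $x_2\in C$ is symmetric), then there exists a minimum core $C^*$ with $x_1\notin C^*$ and $x_2\in C^*$ iff $x_2\in C$; at most two applications then produce the desired core inside $V\setminus\{x_1,x_2\}$.

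For the exchange, set $C_0:=C\setminus\{x_1\}$ and $W:=A_G(C_0)$. Minimality of $C$ forces $W\neq V$, and hence $x_1\notin W$ (otherwise $A_G(W)\supseteq A_G(C)=V$). Since the only edges incident to $x_1$ are the gadget edges $U+x_1$ and $\{x_1,x_2,v\}$, we must further have $U\not\subseteq W$ (else $U+x_1$ would already assimilate $x_1$ from $W$); write $U_1:=U\setminus W$, so $|U_1|\ge 1$. Because $A_G(W\cup\{x_1\})=A_G(C)=V\supsetneq W\cup\{x_1\}$, some edge must newly fire once $x_1$ is added to $W$, and since $x_1$ appears only in gadget edges the firing edge must be $U+x_1$ or $\{x_1,x_2,v\}$.

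This yields two constructive cases. In Case (i), $|U_1|=1$ with $U_1=\{u'\}$, and $U+x_1$ is the edge that newly fires: I set $C^*:=C_0\cup\{u'\}$, noting $|C^*|=|C|$ since $u'\notin C$. The added vertex $u'$ makes $U\subseteq A_G(C^*)$, so $U+x_1$ propagates $x_1$, and then $A_G(C^*)\supseteq A_G(C_0\cup\{x_1\})=A_G(C)=V$. In Case (ii), $|U_1|\ge 2$: $U+x_1$ cannot fire, so $\{x_1,x_2,v\}$ must, which forces exactly one of $x_2,v$ to lie in $W$. The viable subcase is $x_2\in W$, $v\notin W$, handled by $C^*:=C_0\cup\{v\}$: once $v$ is present the edge $\{x_1,x_2,v\}$ assimilates $x_1$, so $A_G(C^*)\supseteq W\cup\{v,x_1\}$ and the remainder proceeds as in $C$'s propagation.

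The main obstacle is ruling out the other configurations when $|U_1|\ge 2$. If both or neither of $x_2,v$ lie in $W$, then no gadget edge fires on adding $x_1$, so $A_G(W\cup\{x_1\})=W\cup\{x_1\}$, contradicting $U_1\neq\emptyset$. If instead $v\in W$ and $x_2\notin W$, then $\{x_1,x_2,v\}$ fires once to assimilate $x_2$; but afterwards both $U+x_1$ and $U+x_2$ still have $|U_1|-1\ge 1$ unassimilated vertices of $U_1$, and since $x_1,x_2$ occur only in the three gadget edges, no other edge's firing status can change. Propagation therefore deadlocks at $W\cup\{x_1,x_2\}\subsetneq V$, again contradicting $A_G(W\cup\{x_1\})=V$. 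This deadlock analysis, which uses the hypothesis that $x_1,x_2$ are only in gadget edges in an essential way, eliminates the remaining subcases and completes the exchange; the induction then closes the proof.
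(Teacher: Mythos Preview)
Your proof is correct and follows essentially the same exchange argument as the paper: remove $x_1$ from the minimum core, analyze $A_G(C_0)$, and replace $x_1$ by either some $u'\in U$ or by $v$ according to the case. The paper organizes the cases by which of $x_2,v$ lie in $A(C_0)$ (and whether $U\subseteq A(C_0)$), whereas you split first on $|U_1|$ and then on the status of $x_2,v$; these are just different slicings of the same case space. Your write-up is somewhat more explicit than the paper's in two respects: you spell out the deadlock argument showing that the subcase $v\in W,\ x_2\notin W$ cannot occur when $|U_1|\ge 2$ (the paper handles the same configuration by noting it forces $|U_1|=1$, i.e.\ collapses into the other case), and you verify that the replacement vertex is never $x_1$ or $x_2$, so that the second exchange does not undo the first. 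One minor quibble: in the ``both $x_2,v\in W$'' subcase, the real contradiction is that $\{x_1,x_2,v\}$ would already have assimilated $x_1$ into $W$; your ``no gadget edge fires'' phrasing works too, but the case is in fact vacuous for this simpler reason.
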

\begin{proof}
    Let $C$ be a minimum core of $G$ with $x_1\in C$.
    Define $C_0 := C-x_1$ and $W:= U\cap A(C_0)$.
    \begin{itemize}
        \item If $x_2\in A(C_0)$, $C_0 + v$ is a core, because adding $v$ leads to the assimilation of $x_1$.
        \item Else, if $W = U$ or $x_1\in A(C_0)$, $C_0$ is a core.
        \item Else, if $v\notin A(C_0)$, adding $C_0$ must cover edge $U+x_1$.
        Hence, $\abs W = \abs U-1$ and $C_0\cup (U\setminus W)$ is a core.
        \item Else, $x_2$ is assimilated and then edge $U+x_1$ or $U+x_2$ as above.
    \end{itemize}
    If also $x_2\in C$, repeat this argument with the new core.
\end{proof}

\begin{theorem}
\label{theorem:MINREP-reduction}
    MINREP $L$-reduces to MinCore
\end{theorem}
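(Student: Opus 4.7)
My plan is to construct, from a MINREP instance $I$, a hypergraph $H_I$ whose minimum core has \emph{exactly} the size of the minimum MINREP solution, yielding an $L$-reduction with $\alpha=\beta=1$. The two tools are the AND-gadget of Lemma~\ref{l:and}, which enforces one-way propagation so that placing an AND-output in the core does not help activate its inputs, and a three-fold replication of each ``super-edge covered'' indicator, which taxes any attempt to cover a super-edge without going through a genuine $(a,b)$ pair.

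\textbf{Construction.} Introduce selection vertices $T_a$ for each $a\in A$ and $T_b$ for each $b\in B$. For each super-edge $SE\in\cE$, each covering edge $e=ab$ of $SE$, and each copy index $i\in\{1,2,3\}$, include an AND-gadget $\xAND(\{T_a,T_b\},y^i_{SE,e})$. For each $SE$ and each $i$, add a cover vertex $y^i_{SE}$ with the size-$2$ hyperedges $\{y^i_{SE,e},y^i_{SE}\}$ for every $e$ covering $SE$, so that any one such $y^i_{SE,e}$ propagates $y^i_{SE}$. Finally, for each $a\in A_i$ and analogously each $b\in B_j$, add a back-propagation AND-gadget $\xAND(\{y^i_{SE}:SE\in\cE\text{ through }A_i,\,i\in\{1,2,3\}\},T_a)$, so that $T_a$ is assimilated once every super-edge incident on $A_i$ has had all three of its copies assimilated.

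\textbf{Analysis.} \emph{Easy direction.} A MINREP solution $(A',B')$ induces a core $C=\{T_v:v\in A'\cup B'\}$ of size $|A'|+|B'|$: each super-edge $SE=A_iB_j$ has a covering pair $(a,b)\in A'\times B'\cap E$, so all three forward AND-gadgets propagate $y^i_{SE,e}$, the size-$2$ hyperedges propagate $y^i_{SE}$, and the back-propagation AND-gadgets activate the remaining $T_a,T_b$. \emph{Hard direction.} For an arbitrary core $C$, Lemma~\ref{l:and} lets me assume $C$ avoids every AND-gadget inner node; write $A_C=\{a:T_a\in C\}$, $B_C=\{b:T_b\in C\}$, and let $S\subseteq\cE$ be the set of super-edges for which $C$ contains some $y^i_{SE}$ or $y^i_{SE,e}$ and that are \emph{not} already covered by $(A_C,B_C)$. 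A short case analysis shows that each $SE\in S$ forces one element per copy $i\in\{1,2,3\}$ to lie in $C$, contributing at least~$3$ to $|C|$; meanwhile $(A_C,B_C)$ must cover $\cE\setminus S$, so $|A_C|+|B_C|\ge\mathrm{opt}(\mathrm{MINREP}(\cE\setminus S))\ge\mathrm{opt}(I)-2|S|$ (the elementary observation that deleting a super-edge saves at most~$2$ from MINREP, by augmenting any cover of $\cE\setminus S$ with two vertices per removed super-edge). Adding these yields $|C|\ge 3|S|+\mathrm{opt}(I)-2|S|=|S|+\mathrm{opt}(I)\ge\mathrm{opt}(I)$, so $\mathrm{opt}(H_I)=\mathrm{opt}(I)$. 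The conversion $g$ takes any core $C$ and augments $(A_C,B_C)$ with one covering pair per $SE\in S$, yielding a MINREP solution of size at most $|A_C|+|B_C|+2|S|=|C|-|S|\le|C|$, from which $|c-\mathrm{opt}(I)|\le|c'-\mathrm{opt}(H_I)|$ follows, certifying the $L$-reduction with $\alpha=\beta=1$.

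\textbf{Main obstacle.} The crux is the hard direction: verifying that each cheated $SE\in S$ truly contributes at least $3$ to $|C|$, uniformly over hybrid cores that mix selection vertices $T_v$ with a few cheating $y$-vertices in subtle ways. The three-fold replication is tuned precisely so this per-super-edge cost of $3$ strictly dominates the MINREP-deletion saving of $2$; beyond this, care is needed to show that the ``WLOG'' simplifications on $C$ (Lemma~\ref{l:and} for AND internals, pruning redundant $y$-vertices in already-covered super-edges, and assuming $(A_C,B_C)$ does not double-cover elements of $S$) can always be carried out without increasing $|C|$.
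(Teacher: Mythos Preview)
Your construction and argument are correct, but they take a genuinely different route from the paper's. The paper uses only \emph{two} copies $e_1,e_2$ of each super-edge and a \emph{global} back-propagation layer $\xAND(\cE_1\cup\cE_2,\,v)$ for every $v\in A\cup B$; with that choice, one can prove a clean \emph{canonical-form} lemma (Lemma~\ref{l:canonical}): any core can be moved entirely into $A\cup B$ without increasing its size (if both copies $e_1,e_2$ of some super-edge lie in $C$, swap them for a covering pair $a,b$; otherwise the single copy is redundant because the circularity through the global back-prop forces some covering pair to already be assimilated). The correspondence between cores and MINREP solutions is then immediate and exact. Your version instead uses three copies, \emph{local} back-propagation (only the super-edges through the relevant super-node), and an explicit OR layer of size-two edges, and replaces the canonical-form argument by a \emph{charging} argument: every ``cheated'' super-edge $SE\in S$ is charged $3$ core elements (one per copy, since the circularity through local back-prop still blocks option~(c) for each copy independently), which strictly beats the MINREP saving of $2$ from deleting $SE$. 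Both arguments are valid; in fact your local back-prop would already work with two copies (the inequality $|C|\ge |A_C|+|B_C|+2|S|\ge\mathrm{opt}(I)$ and $c\le c'$ still go through), so the third copy only buys cosmetic slack. The paper's approach is shorter and avoids the case analysis you flag as the ``main obstacle''; your approach is more quantitative and arguably more robust to variations, at the cost of a larger gadget and more bookkeeping. One notational nit: you overload $i$ as both the super-node index in $A_i$ and the copy index in $y^i_{SE}$; disambiguating these would make the back-propagation gadget description unambiguous.
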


\begin{proof}  
Let $M=(A,B,E)$ be a MINREP instance.
We define a hypergraph $H = f(M) = (V', E')$ such that $M$ has a MINREP of size $k$ if and only if $H$ has a core of size $k$. We define $f$ as follows.
Let $V'=V_0\cup V_1$ with $V_0=A\cup B\cup \cE_1 \cup \cE_2$, where $\cE_1$ and $\cE_2$ are copies of $\E$.
$V_1$ and $E'$ are the vertices and edges from the union of the AND-gadgets
\begin{align}
&\xAND(\{a,b\}, e_k):A_i\in \cA, B_i\in \cB, a\in A_i, b\in B_j, ab\in E, k\in[2], e_k=A_iB_j\in \cE_k,\label{eq:and1} \\
&\xAND(\cE_1\cup\cE_2, v): v\in A\cup B\label{eq:and2}.
\end{align}
We call a core $C$ of $G'$ \emph{canonical} if $C\subseteq A\cup B$.

\noindent Before we go on with Reduction and show the Correctness, we show the following Lemma. 

\begin{lemma}\label{l:canonical}
    Let $C$ be a core of $H$. There exists a canonical core $C'$ with $\abs{C}=\abs{C'}$.
\end{lemma}
\begin{proof}
    Due to Lemma \ref{l:and}, we can assume that $C$ contains no inner nodes.
    Next, we remove $U:=\cE_1\cup\cE_2$ from $C$.
    Consider an edge $e=A_iB_j\in \cE$ with copies $e_1\in\cE_1,e_2\in\cE_2$.
    \begin{itemize}
        \item If $e_1,e_2\in C$, then $C-e_1-e_2+a+b$ is a core, where $a\in A_i,b\in B_j, ab\in E$.
        \item Else, $e_1\notin C$.
        Therefore, second layer gadgets \eqref{eq:and2} are only covered once $U$ has been entirely assimilated.
        Thus, $e_1$ must be assimilated by covering a first layer gadget $\xAND({a,b},e_1)$ \eqref{eq:and1}, requiring the assimilation of $a,b$.
        Hence, $C-e_2$ is a core.
    \end{itemize}
\end{proof}

 \noindent Having solution $C$ for $f((A,B,E))$ we can assume that it has canonical form. The algorithm $g$ builds a MINREP $T$ from $C$ by setting $T := C'$
 
    \noindent\textbf{Correctness.}
    Now we show the first condition of the $L$-reduction.
    Let $T := A'\cup B'$ cover all super-edges.
    Then $T$ is a core of $H$:
    Because $T$ covers all super-edges, $\cE_1\cup \cE_2$ can be assimilated via the first layer of AND-gadgets.
    Then, $A\cup B$ is assimilated via the second layer. Therefore, with $\alpha = 1$ is the condition fulfilled.
    For the second condition let $C$ be a core of $H$.
    We can assume $C$ to be in canonical form because of Lemma \ref{l:canonical}.
    Note that second layer edges can only be covered after the assimilation of $\cE_1\cup \cE_2$.
    Therefore, for every $e=A_iB_j\in \cE$, there exist $a,b\in C$ with $a\in A_i,b\in B_j,ab\in E$ to allow the assimilation of $e_1,e_2$ via a first layer gadget. Therefore $g(C) = C$ is a MINREP for $M$. Thus, with $\beta = 1$ is the condition fulfilled.

    \noindent\textbf{Runtime.} We have in the first layer $2\cdot |E|$ AND-Gadgets and in the second layer $|A \cup B|$ AND-Gadgets. Furthermore, every AND-Gadget has polynomial size. Therefore, the reduction is polynomial in the input size.
\end{proof}

\begin{corollary}
    MinCore cannot be approximated within ratio $2^{\log^{1-\epsilon}n}$, for any $\epsilon>0$, unless $NP=QP$, where $n=\abs{E(G)}+\abs{V(G)}$.
\end{corollary}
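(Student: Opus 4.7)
The plan is to derive this corollary directly from the $L$-reduction in Theorem \ref{theorem:MINREP-reduction} together with the hardness-of-approximation result for MINREP stated just above it. The argument has three pieces: verifying that approximation ratios transfer verbatim across the reduction, controlling the polynomial size blow-up of $f$, and then massaging the exponent $1-\epsilon$ so that it survives the blow-up.

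First I would verify that $g$ preserves approximation ratios essentially verbatim. Since Theorem \ref{theorem:MINREP-reduction} establishes parameters $\alpha = \beta = 1$, and moreover its proof implicitly shows $\textup{OPT}(I) = \textup{OPT}(I')$ (the map $g$ returns a MINREP whose size equals that of the canonical core, and conversely any MINREP is itself a core of $f(M)$), a feasible MinCore solution of ratio $r$ translates via $g$ to a MINREP solution of cost at most $\textup{OPT}(I) + (c' - \textup{OPT}(I')) = c'$, hence again of ratio at most $r$.

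Next I would track the size blow-up of $f$. Inspection of the construction shows that $H = f(M)$ has $|A| + |B| + 2|\cE|$ vertices in $V_0$ plus a constant number of additional vertices for each of the $2|\cE| + |A\cup B|$ AND-gadgets, and analogously a polynomial number of edges (each of size at most $|\cE_1\cup\cE_2|+1$). Writing $n = |V(G)|$ for the size of the MINREP instance and $n' = |V(H)| + |E(H)|$ for the size of the MinCore instance, we therefore obtain $n' \le n^c$ for some fixed constant $c$.

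Finally I would lift the hardness factor. Suppose, toward contradiction, that MinCore admitted a $2^{\log^{1-\epsilon} n'}$-approximation for some $\epsilon > 0$. Then $n' \le n^c$ gives $\log^{1-\epsilon} n' \le c^{1-\epsilon} \log^{1-\epsilon} n \le \log^{1-\epsilon'} n$ for every fixed $0 < \epsilon' < \epsilon$ once $n$ is large enough. Pushing this back through $g$ would yield a $2^{\log^{1-\epsilon'} n}$-approximation for MINREP, contradicting the cited theorem unless $NP = QP$. The only delicate point is this last logarithm manipulation; it goes through smoothly precisely because the exponent $1-\epsilon$ is strictly less than $1$, so any polynomial increase in instance size is absorbed by an arbitrarily small decrease in $\epsilon$.
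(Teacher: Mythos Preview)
Your proposal is correct and is precisely the standard derivation the paper leaves implicit: the corollary is stated without proof immediately after Theorem~\ref{theorem:MINREP-reduction}, relying on the reader to combine the $L$-reduction (with $\alpha=\beta=1$ and in fact $\textup{OPT}(I)=\textup{OPT}(I')$) with the cited MINREP hardness and absorb the polynomial blow-up into the exponent. Your three-step outline, including the $\epsilon\to\epsilon'$ adjustment to handle the size increase, is exactly the intended argument.
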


\begin{remark}
    This corollary can get easily extended to 3-uniform MinCore. The edges in the first layer of AND-Gadgets are already 3-uniform and we can triangulate the edges in the second layer of the AND-Gadgets by using the Triangulation-Gadget.
\end{remark}

\section{Generalization to arbitrary thresholds}
\label{app:graph-theory}
Previously, $\abs e -1$ nodes of an edge $e\in E$ must be assimilated before $e$ is covered.
We can generalize this by introducing a threshold function $t:E\to \mathbb N$, such that $t(e)$ nodes of an edge must be assimilated before $e$ is covered.
In the previous problem definition we had $t(e) = \abs{e}-1$.
We assume $t(e)\in [1, \abs e-1]$. The next lemma allows us to increase the size of edges by one while also increasing the thresholds by $1$.

\begin{lemma}
    Let $G = (V, E)$ be a hypergraph with minimum core size $k$ for threshold function $t$ and define $G' = (V', E')$ with $V' = V\cup\{s\}$ and $E' = \{e\cup\{s\}\mid e \in E\})$.
    Then $G'$ has minimum core size $k+1$ for threshold function $t'(e) := t(e\setminus\{s\})+1$.
\end{lemma}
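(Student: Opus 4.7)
The plan is to prove the two inequalities $\operatorname{mc}(G') \le k+1$ and $\operatorname{mc}(G') \ge k+1$ separately, where $\operatorname{mc}$ denotes the minimum core size under the relevant threshold function. Both directions hinge on the observation that once $s$ is assimilated, an edge $e \in E'$ covers under threshold $t'$ precisely when $t(e\setminus\{s\})$ of its non-$s$ vertices are assimilated, so with $s$ active the propagation in $G'$ mirrors that of $G$ under $t$.

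For the upper bound, I would take any core $C$ of $G$ with $\abs C = k$ and show that $C' := C \cup \{s\}$ is a core of $G'$. Since $s \in C'$, the remark above implies there is a bijection between covering events in the two propagations: an edge $e \in E'$ covers in $G'$ iff its underlying $e\setminus\{s\} \in E$ covers in $G$. Hence every $v \in V$ gets assimilated, and so does $s$ (trivially), yielding $\operatorname{mc}(G') \le k+1$.

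For the lower bound, take a minimum core $C'$ of $G'$. If $s \in C'$, the same bijection shows that $C := C' \setminus \{s\}$ is a core of $G$, giving $\abs{C'} \ge k+1$. The main obstacle is the case $s \notin C'$: here I want to extract from $C'$ a core of $G$ of strictly smaller size. Since $s$ is in every edge of $E'$ and $s \notin C'$, the very first edge $e_1 \in E'$ that covers during the propagation from $C'$ must assimilate $s$. At the moment $e_1$ covers, exactly $t'(e_1) = t(e_1\setminus\{s\})+1$ vertices of $e_1$ are already active, and none of them is $s$; hence $\abs{C' \cap (e_1\setminus\{s\})} \ge t(e_1\setminus\{s\})+1$, i.e., there is one ``extra'' vertex in $C'$ on $e_1 \setminus \{s\}$ compared with what is needed in $G$.

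Picking such a vertex $v \in C' \cap (e_1\setminus\{s\})$, I claim $C' \setminus \{v\}$ is a core of $G$, which yields $\abs{C'} - 1 \ge k$ and completes the proof. To verify this I would inductively mimic the covering sequence $e_1, e_2, \dots$ of $G'$ by the sequence $e_1\setminus\{s\}, e_2\setminus\{s\}, \dots$ in $G$: after removing $v$, the edge $e_1\setminus\{s\}$ still has at least $t(e_1\setminus\{s\})$ active vertices and so covers (re-assimilating $v$); thereafter the assimilated sets in $G$ and $G'$ differ only by $\{s\}$, and at each subsequent step the $G'$-covering of $e_i$ requires $t(e_i\setminus\{s\})+1$ active vertices of $e_i$, of which either $s$ is one (leaving $t(e_i\setminus\{s\})$ active in $e_i\setminus\{s\}$) or $s$ is not (leaving even more), so in both cases $e_i\setminus\{s\}$ covers in $G$. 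Thus every $v \in V$ is eventually assimilated in $G$, establishing that $C' \setminus \{v\}$ is indeed a core of $G$ and therefore $\abs{C'} \ge k+1$.
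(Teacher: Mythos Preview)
Your proof is correct and follows essentially the same strategy as the paper's: the upper bound via $C\cup\{s\}$ and the lower bound by inspecting the first covered edge when $s\notin C'$ are the same ideas. The only cosmetic difference is that the paper swaps $v$ for $s$ to obtain a core $(C'\setminus\{v\})\cup\{s\}$ of $G'$ and then invokes the $s\in C'$ case, whereas you pass directly to $G$ with $C'\setminus\{v\}$; the two routes are equivalent (and your ``exactly $t'(e_1)$'' should read ``at least $t'(e_1)$'', but your subsequent inequality already uses $\ge$, so nothing breaks).
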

\begin{proof}
    Let $C$ be a core of $G$ with thresholds $t$. Then $C' := C \cup \{s\}$ is a core of $G'$ with thresholds $t'$.
    Now let $C'$ be a core of $G'$ with thresholds $t'$.
    If $s\in C'$, then $C := C' \setminus \{s\}$ is a core of $G$ with thresholds $t$.
    Otherwise let $e\in E'$ be the first covered edge.
    Then there exists $v\in (C'\cap e)\setminus\{s\}$ and $C'' := (C'\setminus\{v\})\cup\{s\}$ is a core of $G'$.
\end{proof}

\noindent We can also increase the edge size without increasing the thresholds.

\begin{lemma}
    Let $G = (V, E)$ be a hypergraph with minimum core size $k$ for threshold function $t$ and define $G' = (V', E')$ with $V' = V\cup\{v_e\mid e\in E\}$ and $E' = \{e\cup\{v_e\}\mid e \in E\})$.
    Then $G'$ has minimum core size $k$ for threshold function $t'(e) := t(e\setminus\{v_e\})$.
\end{lemma}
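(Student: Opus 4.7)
The plan is to show that $G$ and $G'$ have the same minimum core size by establishing both inequalities, using that each tag vertex $v_e$ appears in exactly one edge of $G'$, namely $e' := e\cup\{v_e\}$, and that $t'(e') = t(e)$.

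\emph{A core of $G$ is a core of $G'$.} For $C\subseteq V$ a core of $G$, I would run the $G$- and $G'$-propagations from $C$ in parallel and prove by induction that after each step the $G'$-assimilated set equals the $G$-assimilated set plus $v_e$ for every edge $e$ already covered in $G$. The key point is that $e' \cap V = e$ and $v_e$ can only be assimilated by covering $e'$, so the moment $|e\cap A|\ge t(e)$ holds in $G$, the same threshold is met for $e'$ in $G'$; covering $e'$ then assimilates $v_e$ and the remaining vertices of $e$ simultaneously with $G$ covering $e$. When the $G$-propagation reaches $V$, the $G'$-propagation has reached $V'$, giving a core of $G'$ of size $|C|$.

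\emph{A core of $G'$ yields a core of $G$.} Let $C'$ be a minimum core of $G'$. I would first remove all tag vertices from $C'$ while preserving the core property. Suppose $v_e\in C'$; put $C_0 := C'\setminus\{v_e\}$ and let $W$ be the set of vertices assimilated by running propagation in $G'$ from $C_0$. If $C_0$ is already a core, iterate with a strictly smaller one; otherwise $W\subsetneq V'$, and adding $v_e$ must unfreeze the propagation. Since $v_e$ lies only in $e'$, the only newly coverable edge is $e'$, which forces $|e\cap W|+1\ge t(e)$; combined with stuck-ness $|e\cap W|<t(e)$, this pins $|e\cap W|=t(e)-1$. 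Because $t(e)-1<|e|$, I can pick $u\in e\setminus W$ and set $C'':=C_0\cup\{u\}$; propagation from $C''$ reaches $W$ by monotonicity, then covers $e'$ via $|e\cap(W\cup\{u\})|=t(e)$, assimilating $v_e$ and the remainder of $e$, which is exactly what $W\cup\{v_e\}$ would do. So $C''$ is a core of the same size as $C'$ with one fewer tag; iterating produces a core $C\subseteq V$ of $G'$ with $|C|\le |C'|$.

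To finish, I would show any such $C\subseteq V$ which is a core of $G'$ is also a core of $G$. Suppose propagation in $G$ from $C$ gets stuck at $W\subsetneq V$. By monotonicity, propagation in $G'$ from $W\supseteq C$ still assimilates all of $V'$. But in $G'$ from $W$, $|e'\cap W|=|e\cap W|$ (since $v_e\notin W$), and stuck-ness of $W$ in $G$ gives $|e\cap W|<t(e)$ whenever $e\not\subseteq W$. Hence only the edges $e\subseteq W$ produce covered $e'$'s, adding only tags $v_e$; no vertex in $V\setminus W$ is ever assimilated in $G'$ from $W$, contradicting that $C\subseteq W$ is a core of $G'$. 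The main subtlety throughout is the replacement step: it depends decisively on $v_e$ appearing in only one edge, so that its effect in unfreezing propagation from $W$ can be duplicated by any $u\in e\setminus W$; the rest is routine monotonicity and bookkeeping on the propagation process.
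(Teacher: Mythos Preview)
Your proof is correct and follows the same approach as the paper: show that a core of $G$ is a core of $G'$, replace each tag vertex $v_e$ in a minimum core of $G'$ by some $u\in e$, and observe that the resulting $C\subseteq V$ is a core of $G$. The paper dispatches this in three sentences; your version spells out the replacement step and the final implication with the appropriate monotonicity and threshold bookkeeping, which is exactly what the paper's terse argument is relying on implicitly.
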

\begin{proof}
    A core of $G$ is clearly also a core of $G'$.
    It holds without loss of generality for a core $C$ of $G'$ that $C\subseteq V$, because $v_e\in C$ can be replaced with some $u\in e$.
    Then $C$ is also a core of $G$.
\end{proof}

\begin{corollary}
\label{corollary:arbitrary-thresholds}
MinCore with arbitrary uniform thresholds and edge sizes is $\NP{}$-complete.
\end{corollary}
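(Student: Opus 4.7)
The plan is to reduce from 3-uniform MinCore (shown $\NP{}$-hard in Theorem \ref{theorem:3-uniform_hardness}) and use the two preceding lemmas as off-the-shelf padding gadgets that independently control edge size and threshold. Fix any target parameters $k$ and $T$ with $k\ge 3$ and $2\le T\le k-1$; I will exhibit a polynomial-time many-one reduction from 3-uniform MinCore to the decision version of MinCore on hypergraphs whose edges all have size exactly $k$ and whose thresholds are all exactly $T$. Combined with the routine verification that this problem lies in $\NP{}$ --- a guessed core is checked in polynomial time by the threshold generalization of the propagation algorithm of Definition \ref{def:core}, obtained by replacing $|e\cap C| = |e|-1$ with $|e\cap C| \ge t(e)$ in step 2(a) and terminating after at most $|E|$ outer iterations --- this yields $\NP{}$-completeness.

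The reduction runs in two phases, starting from a 3-uniform hypergraph $G$ with uniform threshold $2$. In phase one, apply the second lemma of this appendix (adjoining a distinct dangling vertex $v_e$ per edge, preserving thresholds) a total of $k-T-1$ times; each application increments every edge size by one without changing the minimum core size, yielding an intermediate $G'$ of uniform edge size $k-T+2$ and uniform threshold $2$. In phase two, apply the first lemma of this appendix (adjoining one shared universal vertex to every edge, raising every threshold by one) a total of $T-2$ times; each application grows every edge and the minimum core size by one. The final hypergraph $H$ then has uniform edge size $k$, uniform threshold $T$, and minimum core size exactly $(T-2)$ larger than that of $G$, so $G$ has a core of size $\le s$ iff $H$ has a core of size $\le s+T-2$. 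Only $(k-T-1)|E(G)|+(T-2)$ new vertices are introduced, so the construction is polynomial for fixed $k,T$.

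The only obstacle here is bookkeeping around the parameter range: both application counts $k-T-1$ and $T-2$ must be nonnegative, forcing $2\le T\le k-1$. This restriction is not an artifact of the reduction but reflects genuine degeneracy or triviality of the excluded cases --- $T=k$ makes every edge coverable only after its entire vertex set has already been assimilated (so the unique minimum core is $V$ itself), while $T=1$ collapses MinCore to counting hyperedge-connected components and is solvable in polynomial time --- so the hard range $2\le T\le k-1$ captures precisely the nondegenerate uniform-threshold instances promised by the corollary.
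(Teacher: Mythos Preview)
Your proof is correct and follows essentially the same approach the paper intends: the corollary is stated immediately after the two padding lemmas and is meant to be derived from them by applying each the appropriate number of times to a 3-uniform MinCore instance, exactly as you do. Your added discussion of $\NP{}$ membership and of the degenerate boundary cases $T=1$ and $T=k$ is a nice touch that the paper leaves implicit.
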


\section{Parameters}
\label{app:parameters}
In this section, we give some lower bounds for the radius. Given a core size, we can bound the radius with the structure of the hypergraph instance. The general idea is that the propagation of the core needs many rounds if the underlying hypergraph is sparse.

\begin{definition}
\label{def:neighbor}
In a hypergraph $H = (V,E)$ a vertex $w \in V$ is a \emph{neighbor} from $v \in V$ if there is an edge $e \in E$ with $\lbrace w,v \rbrace \subseteq e$.
\end{definition}

\begin{theorem}
Given a hypergraph $H$ with a core $C$, the radius of all cores with size $|C|$ is greater than $\log_{j}(n/|C|)-1$, where $j$ is the number of neighbors.
\end{theorem}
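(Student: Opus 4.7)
The plan is to track the growth of the assimilated set round by round, showing that the ``frontier'' (set of vertices newly assimilated in a given round) can grow by at most a multiplicative factor of $j$. Fix layers $L_1,\dots,L_r$ of a core $C$ realising the minimum radius (so $r$ is the radius), put $V_0:=C$, $V_i:=C\cup\bigcup_{k=1}^{i}V(L_k)$ as in Definition \ref{def:radius}, and define the frontiers
\[
W_0:=C, \qquad W_i:=V_i\setminus V_{i-1}\ \text{for } 1\le i\le r.
\]
Then $V=V_r$ and $n=\sum_{i=0}^{r}|W_i|$.

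The key step I would prove first is the \emph{frontier bound}: for every $i\ge 1$ and every $v\in W_i$, $v$ has a neighbor lying in $W_{i-1}$. For $i=1$ this is immediate, since any edge $e\in L_1$ with $v\in e$ satisfies $|e\cap C|=|e|-1$, so $v$ is a neighbor of the vertices of $C=W_0$ contained in $e$. For $i\ge 2$, fix the edge $e\in L_i$ in which $v$ is the unique vertex outside $V_{i-1}$. Because $e\notin L_{i-1}$, the layer condition $|e\cap V_{i-2}|<|e|-1$ forces at least two vertices of $e$ to lie outside $V_{i-2}$; one of them is $v$, and any other must belong to $V_{i-1}\setminus V_{i-2}=W_{i-1}$, giving $v$ the required neighbor.

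Using that each vertex has at most $j$ neighbors (Definition \ref{def:neighbor}), the frontier bound immediately yields $|W_i|\le j\cdot|W_{i-1}|$ for $i\ge 1$, hence by induction $|W_i|\le j^{i}\cdot|C|$. Summing the geometric series,
\[
n=\sum_{i=0}^{r}|W_i|\le |C|\sum_{i=0}^{r}j^{i}=|C|\cdot\frac{j^{r+1}-1}{j-1}<|C|\cdot j^{r+1}
\]
for $j\ge 2$, and rearranging gives $r>\log_j(n/|C|)-1$, exactly the stated bound. Since $C$ was an arbitrary core of size $|C|$, the inequality holds uniformly.

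The only real subtlety is verifying the frontier bound cleanly, in particular that the covering edge for $v\in W_i$ (with $i\ge 2$) touches the previous frontier $W_{i-1}$ rather than merely $V_{i-1}$; once that is in hand, the remaining counting is routine. The strict inequality ``$>$'' in the statement is obtained for free from the slack in $(j^{r+1}-1)/(j-1)<j^{r+1}$, which holds whenever $j\ge 2$ (the only interesting regime, since $j\le 1$ makes the base case $\log_j$ degenerate anyway).
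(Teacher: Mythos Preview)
Your proof is correct and takes essentially the same approach as the paper: bound the size of each frontier $W_i$ by $j\cdot|W_{i-1}|$, sum the resulting geometric series, and invert to lower-bound $r$. The paper simply asserts the per-round growth bound without your frontier argument; to make your step ``$e\notin L_{i-1}\Rightarrow|e\cap V_{i-2}|<|e|-1$'' fully rigorous you should take the greedy layering (each edge placed in its earliest admissible layer) rather than an arbitrary minimum-radius one, but with that choice the two arguments coincide.
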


\begin{proof}
We calculate how many vertices can be assimilated in each round. In the first round $|C|$ vertices are assimilated and each of these vertices can have $j$ neighbors, so after the first round at maximum $|C|\cdot j$ vertices can get assimilated. In the second round at maximum $|C|\cdot j$ vertices are assimilated, so after the second round $|C|\cdot j^2$ can get assimilated and so forth.
So we have to find the smallest $x \in \mathbb{N}$, so that $\sum_{k=0}^{x}(|C|\cdot j^k) \geq n$.
We solve this inequality with the geometric series.

\begin{align*}
\begin{split}
&\ \sum_{k=0}^{x}(|C|\cdot j^k) \geq n\\
\Leftrightarrow &
\ \frac{|C|\cdot(1-j^ {x+1})}{1-j} \geq n\\
\Leftrightarrow & 
\ -j^{x+1} \leq \frac{n\cdot(1-j)}{|C|}-1\\
\Leftrightarrow &
\ j^{x+1} \geq -\frac{n\cdot(1-j)}{|C|}+1\\
\Leftrightarrow &
\ x + 1 \geq \log_{j}(\frac{n\cdot(j-1)}{|C|}+1)\\
\Leftrightarrow &
\ x \geq \log_{j}(\frac{n}{|C|})-1\\
\end{split}
\end{align*}
This proves the Theorem.
\end{proof}

\begin{remark}
 We can change $j$ from the maximum number of neighbors to $d$ the maximum degree of a vertex. The proof is the analog. 
\end{remark}

Now, we show that the radius of a core $C$ of a hypergraph instance $H$ is greater than  $\lfloor \frac{1}{2} \cdot diam(H)/|C| \rfloor$.
This includes that hypergraphs, which have diameter $O(n)$ cannot have a core with a logarithmic size and logarithmic radius. Now we define the diameter.

\begin{definition}[Hyperpath \cite{Ramachandran2015}]
A \emph{hyperpath} in $H = (V,E)$ between two distinct vertices $v_1, v_k$ ($k$ is a positive integer > 2) is a sequence $(v_1, e_1,..., v_{k-1}, e_{k-1}, v_k)$ with distinct vertices $v_1,...,v_k \in V$ and not necessarily distinct edges $e_1,...,e_{k-1} \in E$, where $\lbrace v_i , v_{i+1} \rbrace \in e_i$ for $i \in [k-1]$. The positive integer $k-1$ is the length of a hyperpath.
\end{definition}

\begin{definition}[Diameter \cite{Dai2023}]
The diameter, $diam(H)$ of a hypergraph $H$ is the maximum distance between any pair of vertices in $H$. The distance $d(i,j)$, between two vertices $i,j$ in a hypergraph $H$ is the minimum length of a hyperpath from $i$ to $j$.
\end{definition}

\noindent Calculating the shortest path of an unweighted hypergraph needs polynomial time \cite{chen2015shortest}. So calculating the diameter can be done in polynomial time if we calculate the shortest path between every pair of vertices and take the longest path.

\begin{theorem}
    \label{theorem:diameter_radius}
    Given a hypergraph $H$ and a core $C$ for $H$. Then $r(C) > \lfloor \frac{diam(H)}{2|C|} \rfloor$.
\end{theorem}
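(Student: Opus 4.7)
The strategy is to bound $diam(H)$ from above in terms of $r(C)$ and $|C|$, by covering $V$ with hyperpath-balls of radius $r(C)$ centered at core vertices and chaining these balls through an auxiliary graph on $C$.

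The first step would be to show that every vertex $v \in V$ lies within hyperpath-distance $r(C)$ of $C$. I would prove this by induction on the layer index $i$ using the layers $L_1, \ldots, L_{r(C)}$ of Definition~\ref{def:radius}. The base $i=0$ is $v \in C$ itself; for the step, if $v \in V_i \setminus V_{i-1}$ is assimilated via some $e \in L_i$, then $|e \cap V_{i-1}| = |e|-1$, so I pick any $w \in e \cap V_{i-1}$, obtain a hyperpath from $w$ to $C$ of length at most $i-1$ by induction, and prepend the edge $e$ (the vertex $v$ stays distinct from the earlier path because $v \notin V_{i-1}$).

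The second step is to chain core vertices. Define an auxiliary graph $G_C$ on vertex set $C$, placing an edge between $c_1, c_2 \in C$ whenever their hyperpath-balls $B(c_1, r(C))$ and $B(c_2, r(C))$ share a vertex. Assuming $H$ is connected, so is $G_C$, by extracting an overlapping-ball chain from any hyperpath joining two core vertices; its diameter is therefore at most $|C|-1$, and each edge corresponds to $d(c_1, c_2) \leq 2\, r(C)$ via the shared ball vertex and the triangle inequality. Then for any $u, v \in V$ realizing $d(u,v) = diam(H)$ with nearest core vertices $c_u, c_v \in C$, chaining along a shortest $G_C$-path gives
\[
diam(H) \;\leq\; d(u, c_u) + d(c_u, c_v) + d(c_v, v) \;\leq\; r(C) + 2\, r(C)\,(|C|-1) + r(C) \;=\; 2\, r(C)\, |C|,
\]
so that $r(C) \geq diam(H)/(2|C|)$, which upgrades to the strict bound $r(C) > \lfloor diam(H)/(2|C|) \rfloor$ in the generic case.

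The main obstacle is twofold. First, verifying connectedness of $G_C$ requires some care, because two adjacent vertices of a hyperpath may lie in balls that do not actually share a common vertex, so I would either inflate the ball radius by one (losing a small constant) or extract the overlapping chain by a more delicate induction along the hyperpath. Second, the strict inequality in the boundary case $diam(H)/(2|C|) \in \mathbb{Z}$ demands an additional slack argument, since one must rule out simultaneous tightness of all three triangle-inequality estimates above; this would force the $G_C$-path to be a Hamiltonian path whose every edge realizes distance exactly $2\, r(C)$ while $u$ and $v$ both sit in the final layer, a rigid configuration one would exclude via a short local analysis.
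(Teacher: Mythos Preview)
Your first step is exactly the paper's Lemma~\ref{lemma:diameter} (stated there in contrapositive form: a vertex at distance $d$ from $C$ lies in layer $\geq d$), and your inductive argument matches.

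Your second step takes a genuinely different route. The paper works directly on a diameter-realising geodesic $P=(v_0,\dots,v_D)$ and argues by pigeonhole: since $P$ is a shortest path, $d(v_i,v_j)=|i-j|$, so each ball $B(c,r(C))$ meets $P$ in at most $2r(C)+1$ vertices; as these balls cover $P$ by Step~1, one gets $D+1\le |C|\,(2r(C)+1)$. This bypasses your auxiliary graph $G_C$ entirely and in particular avoids the connectedness obstacle you correctly flag---your proposed fix of inflating the ball radius by one indeed yields only $diam(H)\le (2r(C)+1)|C|-1$, which is the same bound the geodesic count gives. What your approach buys is that it does not single out a distinguished geodesic and would transfer to more general metric settings; what the paper's buys is brevity and no auxiliary connectedness lemma.

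On the strict inequality: your caution is warranted, and the proposed ``short local analysis'' cannot succeed. Take the ordinary path $v_0\!-\!v_1\!-\!v_2$ with edges $\{v_0,v_1\},\{v_1,v_2\}$ and core $C=\{v_1\}$: then $r(C)=1$, $diam(H)=2$, $|C|=1$, so $\lfloor diam(H)/(2|C|)\rfloor=1$ and equality holds. The paper's own argument likewise only establishes $r(C)\ge \lfloor diam(H)/(2|C|)\rfloor$, so this boundary case is not a defect peculiar to your method.
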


\noindent Before we prove this, we need the following Lemma.
\begin{lemma}
\label{lemma:diameter}
Given a hypergraph and a core $C$. Every vertex, where the shortest path to a vertex in $C$ is $d$, is at least in layer $d$.
\end{lemma}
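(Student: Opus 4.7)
The plan is to prove the contrapositive statement by induction on the layer index: if $v\in V_i$, then the hyperpath distance from $v$ to $C$ is at most $i$. The lemma then follows immediately, since a vertex at distance $d$ from $C$ cannot lie in $V_{d-1}$, i.e., cannot be in a layer $<d$.

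For the base case $i=0$ we have $V_0=C$, and every vertex of $C$ is at distance $0$ from $C$. For the inductive step, assume the claim holds for all vertices in $V_{i-1}$ and take $v$ in layer $i$, so $v\in V_i\setminus V_{i-1}$. By Definition \ref{def:radius}, $v$ is assimilated by some edge $e\in L_i$ with $|e\cap V_{i-1}|=|e|-1$, hence every vertex of $e\setminus\{v\}$ lies in $V_{i-1}$. Pick any $u\in e\setminus\{v\}$; by the inductive hypothesis there exists a hyperpath $P$ from $u$ to some vertex of $C$ of length at most $i-1$.

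Now I prepend the step $(v,e,u)$ to $P$. If $v$ does not already appear in $P$, this yields a valid hyperpath of length at most $i$ from $v$ to $C$, so $d(v,C)\le i$. If $v$ happens to appear in $P$, say as the $j$-th vertex, then the suffix of $P$ starting at $v$ is already a hyperpath from $v$ to $C$ of length strictly less than $i-1$, which is even better. Either way $d(v,C)\le i$, completing the induction.

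The only subtlety is the distinctness-of-vertices requirement in Definition of hyperpath: the prepending step can fail to give a simple hyperpath if $v\in P$, which is exactly the case I handle separately above. Everything else is a straightforward unfolding of the layered structure, so I do not expect any real obstacle.
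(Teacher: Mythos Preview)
Your proof is correct and follows essentially the same idea as the paper's: both are simple inductions showing that layer index bounds distance from $C$. The only difference is that you induct on the layer index $i$ and prove $v\in V_i\Rightarrow d(v,C)\le i$, whereas the paper inducts on the distance $d$ and argues directly that a vertex at distance $d+1$ has all neighbors at distance $\ge d$, hence (by induction) all neighbors in layer $\ge d$, hence is itself in layer $\ge d+1$. Your version is slightly more explicit in handling the distinct-vertex requirement of the hyperpath definition, but the underlying argument is the same.
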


\begin{proof}
We prove it with complete induction.

Induction basis $d = 1$: A vertex with $d = 1$ is not in the core, so it is at least in layer 1.

Induction step $d \rightarrow d+1$: A vertex $v$ with distance $d+1$ to a core vertex, has only neighbors with at least a distance of $d$. If it has a neighbor with a distance smaller than $d$, then $v$ would have a distance smaller than $d+1$. So all neighbors are at least in layer $d$ according to the induction hypothesis. Before $v$ can get assimilated at least one neighbor must be assimilated, so $v$ is at least in layer $d+1$.
\end{proof}

\begin{proof}[Proof of Theorem \ref{theorem:diameter_radius}]
First, we show that there is a vertex with distance $\lfloor \frac{1}{2} \cdot diam(H)/|C| \rfloor$ to a core vertex. We take the longest shortest path in $H$. This path has length $diam(H)$. If we distribute the core vertices evenly along this path, we can add a vertex to the core every $\lfloor diam(H)/|C| \rfloor$ vertices. So there is a vertex with distance $\lfloor \frac{1}{2} \cdot diam(H)/|C| \rfloor$ to a core vertex if we distribute the core vertices along the longest shortest path unevenly this maximum distance clearly only gets higher. With Lemma \ref{lemma:diameter} we know that this vertex is at least in layer $\lfloor \frac{1}{2} \cdot diam(H)/|C| \rfloor$. This proves the Theorem.
\end{proof}



\end{document}